\documentclass{article}
\usepackage[hang,flushmargin]{footmisc}
\usepackage[margin=1in]{geometry}
\usepackage{titling}
\settowidth{\thanksmarkwidth}{*}
\setlength{\thanksmargin}{-\thanksmarkwidth}

\usepackage{float}
\usepackage{array}
\usepackage{microtype}
\usepackage{amsthm,amsmath}
\usepackage{mathtools}
\usepackage[hypcap=false]{caption}
  \captionsetup{width=0.9\textwidth}
\usepackage{subcaption}
\usepackage{amsfonts}
\usepackage{setspace}
\usepackage{xcolor}
\usepackage{nicefrac}
\usepackage[boxruled,
  linesnumbered,
  commentsnumbered,
  procnumbered,
  ]{algorithm2e}
\usepackage{hyperref}
\usepackage{cleveref}

\makeatletter\let\expandableinput\@@input\makeatother

\RequirePackage[authoryear]{natbib}
\usepackage[utf8]{inputenc}
\usepackage{graphics}

\theoremstyle{plain}

\newtheorem{proposition}{Proposition}[section]

\theoremstyle{definition}
\newtheorem{definition}{Definition}[section]

\newtheorem{prf}{Proof}[section]

\theoremstyle{remark}

\usepackage{etoolbox}

\newtoggle{journalorarxiv}
\newcommand{\ja}[2]{\iftoggle{journalorarxiv}{#1}{#2}}

\newtoggle{withfigs}
\newcommand{\wf}[2]{\iftoggle{withfigs}{#1}{#2}}

\DontPrintSemicolon

\makeatletter
\renewcommand{\SetKwInOut}[2]{\sbox\algocf@inoutbox{\KwSty{#2}\algocf@typo:}\expandafter\ifx\csname InOutSizeDefined\endcsname\relax \newcommand\InOutSizeDefined{}\setlength{\inoutsize}{\wd\algocf@inoutbox}\sbox\algocf@inoutbox{\parbox[t]{\inoutsize}{\KwSty{#2}\algocf@typo:\hfill}~}\setlength{\inoutindent}{\wd\algocf@inoutbox}\else \ifdim\wd\algocf@inoutbox>\inoutsize \setlength{\inoutsize}{\wd\algocf@inoutbox}\sbox\algocf@inoutbox{\parbox[t]{\inoutsize}{\KwSty{#2}\algocf@typo:\hfill}~}\setlength{\inoutindent}{\wd\algocf@inoutbox}\fi \fi \algocf@newcommand{#1}[1]{\ifthenelse{\boolean{algocf@inoutnumbered}}{\relax}{\everypar={\relax}}{\let\\\algocf@newinout\hangindent=\inoutindent\hangafter=1\parbox[t]{\inoutsize}{\KwSty{#2}\algocf@typo:\hfill}~##1\par}\algocf@linesnumbered }}\makeatother

\SetKwInOut{Input}{Input}
\SetKwInOut{Output}{Output}
\SetKw{In}{in}
\SetKwProg{Function}{function}{:}{}

\SetCommentSty{commfont}
\SetKwComment{Comm}{$\rhd\ $}{}

\setlength{\algomargin}{2em}

\newcommand\bb[1]{\mathbb{#1}}
\newcommand\ca[1]{\mathcal{#1}}

\newcommand\green[1]{#1}

\newcounter{noteWUctr} \setcounter{noteWUctr}{1}

\newcounter{noteAEctr} \setcounter{noteAEctr}{1}

\newcounter{noteMCctr} \setcounter{noteMCctr}{1}
 
\togglefalse{journalorarxiv}

\toggletrue{withfigs}

\title{Motif-Based Spectral Clustering of
  Weighted Directed Networks}

\makeatletter
\def\nmfootnote{\gdef\@thefnmark{}\@footnotetext}
\makeatother

\author{
William G.~Underwood\textsuperscript{1,2*}
\and
Andrew Elliott\textsuperscript{3,4}
\and
Mihai Cucuringu\textsuperscript{1,3}
}

\makeatletter\begin{document}

\setlength{\abovedisplayskip}{-3pt}
\setlength{\belowdisplayskip}{10pt}
\setlength{\abovedisplayshortskip}{-6pt}

\maketitle

\nmfootnote{\textsuperscript{*}Correspondence:
  \href{mailto:wgu2@princeton.edu}{\texttt{wgu2@princeton.edu}}}
\footnotetext[1]{Department of Statistics,
  University of Oxford,
  24--29 St. Giles,
  Oxford,
  OX1 3LB,
  UK}
\footnotetext[2]{Department of Operations Research
  and Financial Engineering,
  Princeton University,
  Sherrerd Hall,
  Charlton Street,
  Princeton,
  NJ 08544,
  USA
}
\footnotetext[3]{The Alan Turing Institute,
  British Library,
  96 Euston Road,
  London,
  NW1 2DB,
  UK
}
\footnotetext[4]{School of Mathematics and Statistics,
  University of Glasgow,
  University Place,
  Glasgow,
  GL12 8QQ,
  UK
}
\setcounter{footnote}{4}

\begin{abstract}
  \noindent Clustering is an essential technique for network analysis, with applications in
a diverse range of fields. Although spectral clustering is a popular and
effective method, it fails to consider higher-order structure and can perform
poorly on directed networks. One approach is to capture and cluster higher-order structures using motif adjacency matrices. However, current formulations fail to take edge weights into account,  and thus are somewhat  limited when weight is a key component of the  network under study.

We address these shortcomings by exploring motif-based weighted spectral clustering methods. We present new and computationally useful matrix formulae for motif adjacency matrices on weighted networks, which can be used to construct efficient algorithms for any anchored or non-anchored motif on three nodes. In a very sparse regime, our proposed method can handle graphs with \green{a million nodes} and tens of millions of edges. We further use our framework to construct a motif-based approach for clustering bipartite networks.

We provide comprehensive experimental results, demonstrating (i) the scalability
of our approach, (ii) advantages of higher-order clustering on synthetic
examples, and (iii) the effectiveness of our techniques on a variety of real
world data sets;
\green{and compare against several techniques from the literature}.
We conclude that motif-based spectral clustering is a valuable
tool for analysis of directed and bipartite weighted networks, which is also scalable and easy to implement. \end{abstract}

\section{Introduction}

Networks are ubiquitous in modern society;
from the internet and online blogs
to protein interactions and human migration,
we are surrounded by inherently
connected structures~\citep{kolaczyk2014statistical}.
The mathematical and
statistical analysis of networks is therefore an important area of modern research,
with applications in a diverse range of fields, including biology~\citep{albert2005scale}, chemistry~\citep{jacob2018statistics}, physics~\citep{newman2008physics} and sociology~\citep{adamic2005political}.

A common task in network analysis is that of
\emph{clustering}~\citep{schaeffer2007graph}.
Network clustering refers to the
partitioning
of a network into ``clusters'',
so that nodes in each cluster are similar (in some sense),
while nodes in different clusters are dissimilar.
For a review of approaches, see for example
\citep{fortunato2010community}
or
\citep{fortunato2016community}.
Spectral methods for network clustering have a long and successful
history~\citep{cheeger1969lower,donath1972algorithms,guattery1995performance},
and have become increasingly popular in recent years.  These techniques exhibit
many attractive properties including generality, ease of implementation and scalability~\citep{von2007tutorial};
in addition to often being amenable to theoretical analysis by using tools from matrix perturbation theory~\citep{stewart1990matrix}.
However, traditional spectral methods have shortcomings, particularly involving
their inability to consider higher-order network structures
\green{(organizations above the level of individual nodes and edges),
which have become of increasing interest in recent years \citep{rosvall2014memory,benson2016higher,benson2018simplicial}};
and their insensitivity to edge
directions\footnote{Although there are some spectral approaches
which do consider edge direction e.g. \cite{rohe2016co,satuluri2011sym}.}
\citep{DirectedClustImbCuts}.
Such weaknesses can lead to unsatisfactory results, especially when considering directed networks.
Motif-based spectral methods have proven more effective for clustering directed
networks on the basis of higher-order
structures~\citep{tsourakakis2017scalable},
with the introduction of the
\emph{motif adjacency matrix} (MAM) \citep{benson2016higher}.
\green{While weights can be important in network clustering~\citep{newman2004analysis},
to the best of our knowledge,
these motif-based methods have not been
comprehensively investigated
on weighted networks.
Thus, in this paper, we focus on extending these methods to
the family of
weighted directed networks.}

\paragraph{Contribution}
In this paper, we explore motif-based spectral clustering methods with a
focus on
addressing these shortcomings by
generalizing motif-based spectral methods to
\emph{weighted} directed networks.
Our main contributions include a collection of new matrix-based formulae for MAMs
on weighted directed networks,
and a motif-based approach for clustering
bipartite networks.
We also provide
computational analysis of our approaches,
demonstrations of scalability and
comprehensive experimental results, both
from synthetic data (variants of stochastic block models)
and from real world network data.
\green{Finally, we provide a thoroughly tested, scalable implementation of our proposed matrix-based MAM formulae in both Python and R, which can be found at \url{https://github.com/wgunderwood/motifcluster}.}

\paragraph{Paper layout}
Our paper is organized as follows.
In \Cref{chap:graphs}, we describe our graph-theoretic framework which provides a natural model for real world weighted directed networks and weighted bipartite networks.
In  \Cref{sec:methodology}
we develop our methodology,
and state and prove new matrix-based formulae for MAMs. We explore the computational complexity of our approaches
and demonstrate their scalability on sparse graphs.
In \Cref{sec:motif_dsbms}
we explore the performance of our approaches on
several synthetic examples.
We demonstrate
the utility of considering weight and higher-order structure,
and compare against non-weighted and non-higher-order methods.
In  \Cref{sec:realWorld}, we
apply our methods to
real world data sets,
demonstrating that they can uncover
interesting divisions in
\green{weighted}
directed
\green{networks}
and
\green{in weighted}
bipartite
networks.
\green{We compare our performance to standard methods and highlight
our ability}
to avoid misclassification.
Finally, in \Cref{chap:conclusions}
we present our conclusions
and discuss future work. \section{Framework}
\label{chap:graphs}

In this section, we give notation and definitions for
our graph-theoretic framework,
with the aim of being able to define our weighted
generalizations of motif adjacency matrices in
\Cref{sec:methodology}.
The motif adjacency matrix
\citep{benson2016higher}
is the central object in
motif-based spectral clustering, and
serves as a similarity matrix for spectral clustering.
In an unweighted MAM $M$, the entry $M_{ij}$ is proportional to
the number of motifs of a given type that include both of the vertices $i$ and $j$.

\paragraph{Notation}
Graph notation is notoriously inconsistent in the literature;
in this work,
a \emph{graph} is a triple $\ca{G} = (\ca{V,E},W)$ where $\ca{V}$ is the \emph{vertex set}, $\ca{E} \subseteq \left\{ (i,j) : i,j \in \ca{V}, i \neq j \right\}$ is the \emph{edge set} and $W\colon \ca{E} \to (0,\infty)$ is the \emph{weight map}.
A graph $\ca{G'} = (\ca{V',E'})$ is a \emph{subgraph} of a graph $\ca{G} = (\ca{V,E})$ (write $\ca{G'} \leq \ca{G}$) if $\ca{V'} \subseteq \ca{V}$ and $\ca{E'} \subseteq \ca{E}$.
It is an \emph{induced subgraph} (write $\ca{G'} < \ca{G}$) if further $\ca{E'} = \ca{E} \cap ( \ca{V'} \times \ca{V'} )$.
A graph $\ca{G'} = (\ca{V',E'})$ is \emph{isomorphic} to a graph $\ca{G} = (\ca{V,E})$ (write $\ca{G'} \cong \ca{G}$) if there exists a bijection $\phi\colon \ca{V'} \rightarrow \ca{V}$ with $(u,v) \in \ca{E'} \iff \big(\phi(u), \phi(v) \big) \in \ca{E}$.
An isomorphism from a graph to itself is called an \emph{automorphism}.
Where it is not relevant or understood from the context, we may sometimes omit the weight map $W$.

\ja{}{\pagebreak}
As we are considering directed weighted graphs,
it is convenient to consider five indicator matrices that capture the different possible relationships between pairs of nodes, namely
the directed
indicator matrix $J$,
the single-edge indicator matrix $J_\mathrm{s}$,
the double-edge indicator matrix $J_\mathrm{d}$,
the missing-edge indicator matrix $J_0$,
and the vertex-distinct indicator matrix
$J_\mathrm{n}$:

\begin{align*}
	J_{ij} &\vcentcolon= \bb{I} \{ (i,j) \in \ca{E} \}\,, \\
	(J_\mathrm{s})_{ij} &\vcentcolon= \bb{I} \{ (i,j) \in \ca{E} \textrm{ and } (j,i) \notin \ca{E} \}\,, \\
	(J_\mathrm{d})_{ij} &\vcentcolon= \bb{I} \{ (i,j) \in \ca{E} \textrm{ and } (j,i) \in \ca{E} \}\,, \\
	(J_0)_{ij} &\vcentcolon= \bb{I} \{ (i,j) \notin \ca{E} \textrm{ and } (j,i) \notin \ca{E} \textrm{ and } i \neq j \}\,, \\
	(J_\mathrm{n})_{ij} &\vcentcolon= \bb{I} \{ i \neq j \}\,,
\end{align*}

\noindent where $\bb{I}$ is an indicator function.
Furthermore, we consider the following three weighted adjacency matrices,
corresponding to directed edges, single edges, and double edges, respectively:

\begin{align*}
	G_{ij} &\vcentcolon= W((i,j)) \ \bb{I} \{ (i,j) \in \ca{E} \}\,, \\
	(G_\mathrm{s})_{ij} &\vcentcolon= W((i,j)) \ \bb{I} \{ (i,j) \in \ca{E} \textrm{ and } (j,i) \notin \ca{E} \}\,, \\
	(G_\mathrm{d})_{ij} &\vcentcolon= \big( W((i,j)) + W((j,i)) \big) \ \bb{I} \{ (i,j) \in \ca{E} \textrm{ and } (j,i) \in \ca{E} \}\,.
\end{align*}

\noindent We can extend to the setting of undirected graphs by constructing a new graph, where each undirected edge is replaced by a bi-directional edge.

\begin{definition}[Motifs and anchor sets]
A \emph{motif} is a pair $(\ca{M,A})$ where $\ca{M} = (\ca{V_M,E_M})$ is a
(weakly) connected graph with $\ca{V_M} = \{ 1, \ldots, m \}$ for some small $m \geq 2$, and
an \emph{anchor set} is
$\ca{A} \subseteq \ca{V_M}$ with $|\ca{A}| \geq 2$.
If $\ca{A} \neq \ca{V_M}$, we say the motif is \emph{anchored}, and if $\ca{A=V_M}$ we say it is \emph{simple}.
We say that $\ca{H}$ is a \emph{functional instance} of $\ca{M}$ in $\ca{G}$ if $\ca{M} \cong \ca{H} \leq \ca{G}$,
and we say that $\ca{H}$ is a \emph{structural instance} of $\ca{M}$ in $\ca{G}$ if $\ca{M} \cong \ca{H} < \ca{G}$.
When an anchor set is not given, it is assumed that the motif is simple.
\Cref{fig:motif_definitions_directed} shows all the simple motifs (up to isomorphism) on at most three vertices.
\end{definition}

\wf{\begin{figure}[t]
	\centering
	\includegraphics[width=0.8\textwidth]{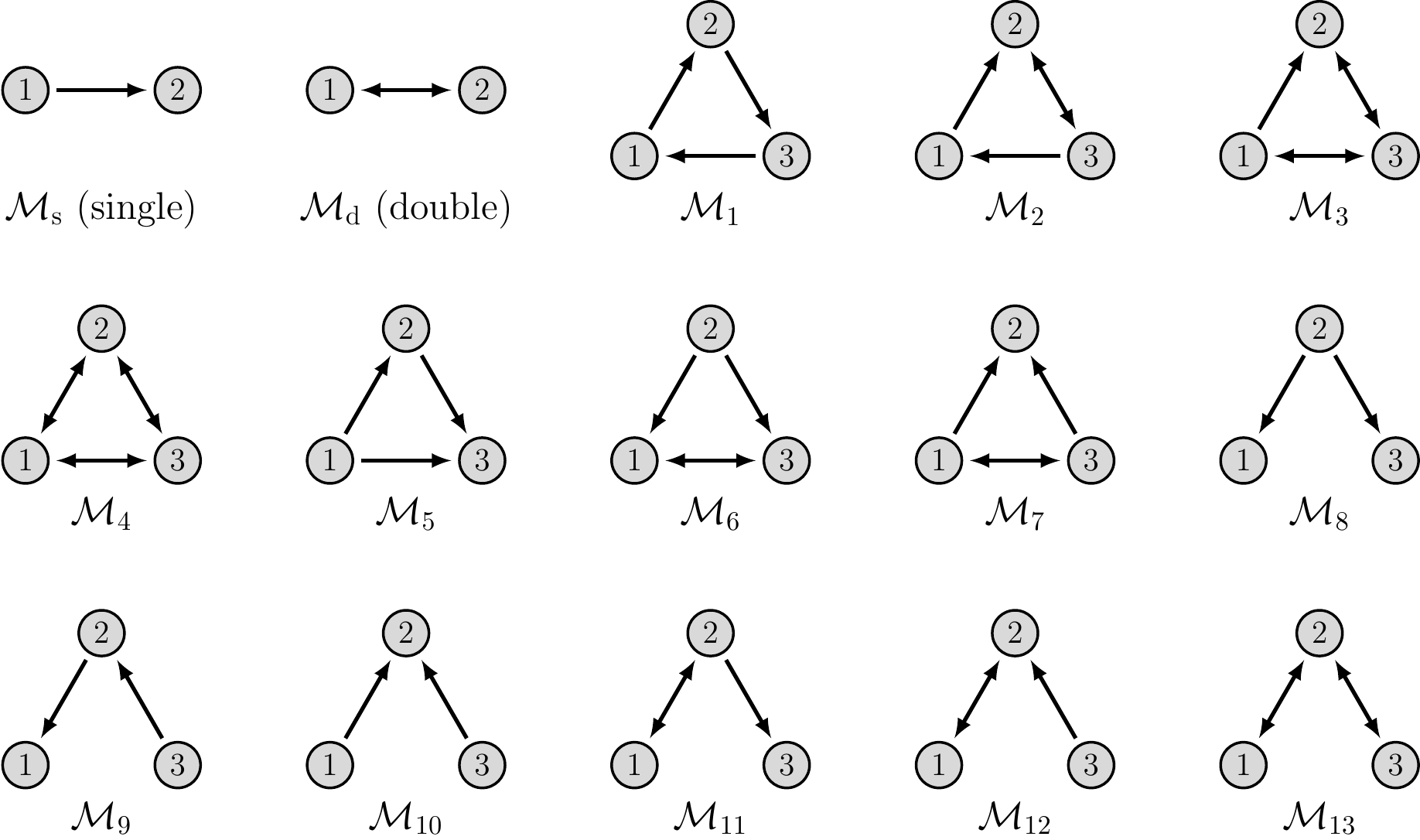}
	\caption{All simple directed motifs on at most three vertices.}
	\label{fig:motif_definitions_directed}
\end{figure}
}{}

\noindent Structural instances occur when an exact copy of the motif is
present in the graph:
i.e. edges
present (resp. not present) in the motif are
present
(resp. not present) in the graph.
Functional instances are less restrictive,
and occur when the motif is present in a graph,
but potentially with extra edges.

Anchor sets~\citep{benson2016higher}
can be thought of as the set of locations in a motif that we consider important
for the motif structure.
For example we could consider the 2-path motif
$\ca{M}_9$,
and try to cluster together nodes which appear
at the start or end of a 2-path, but maybe not in the middle.
Then we can define the anchor set as the subset of
motif vertices which should not be separated.
In $\ca{M}_9$,
the start and end vertices would correspond to
$\ca{A} = \{1,3\}$
(\Cref{fig:motif_definitions_directed}).
Anchor sets are
crucial for defining the
collider and
expander motifs given in \Cref{sec:bipartite}.

Finally, we require one additional definition before we can state our generalization of MAMs to weighted networks.

\begin{definition}[Anchored pairs]
Let $\ca{G}$ be a graph and $(\ca{M,A})$ a motif. Suppose $\ca{H}$ is an instance of $\ca{M}$ in $\ca{G}$. Define the \emph{anchored pairs of the instance} $\ca{H}$ as

\[ \ca{A(H)} \vcentcolon = \big\{ \{\phi(i),\phi(j)\} : i,j \in \ca{A}, \ i \neq j, \ \phi \textrm{ is an isomorphism from } \ca{M} \textrm{ to } \ca{H} \big\}\,.\]

\noindent This is the set of pairs of vertices
for which both vertices
lie in the image of the motif's anchor set,
under some isomorphism from the motif
to the instance.
\end{definition}

\section{Methodology} \label{sec:methodology}

In this section
we detail our methods for motif-based spectral
clustering of weighted directed networks.
Firstly we define our weighted generalizations of motif adjacency matrices (MAMs)
(\Cref{def:motif_adj_matrices}).
We further provide computationally useful formulae for weighted MAMs
(\Cref{prop:motif_adj_matrix_formula}),
and discuss their applications to clustering
(\Cref{sec:clustering_mams}).
Finally we present
a complexity analysis of our method
for computing weighted MAMs
(\Cref{prop:motif_adj_matrix_computation}),
and empirically demonstrate the scalability of our approach
(\Cref{sec:computational_analysis}).

\wf{\begin{figure}[t]
  \centering
  \includegraphics[scale=0.8]{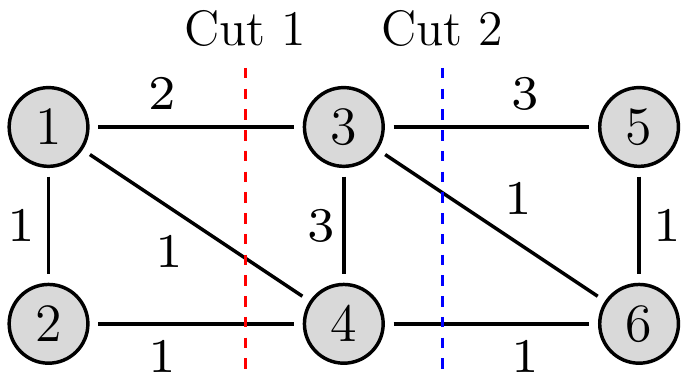}
  \caption{\green{An example illustrating that different weighting schemes
      prefer different motif cuts: the four triangles have mean-weights of 2, 1,
      $\frac{5}{3}$ and $\frac{5}{3}$, and product-weights of 6, 1, 3 and 3
      respectively whereas the minimum edge weight in each triangle is $1$. Thus, the minimum formulation cannot distinguish between the cuts,
      whereas, under the mean formulation Cut 1 gives a cut of size $9$, and Cut 2 gives a cut of size $10$. On the other hand, under the product formulation, Cut 1 gives a cut of size $7$, and Cut 2 gives a cut of size $6$. Hence the mean formulation prefers Cut 1 and product formulation prefers Cut 2.}}
\label{fig:weighted_cuts}
\end{figure}
}{}

\subsection{Weighted motif adjacency matrices}\label{sec:weighted_mams}

\green{Generalizing unweighted measures to weighted networks is non-trivial
and application-dependent,
and there are typically many possible valid choices.
For example, see the approaches for motif weighting proposed by
\cite{onnela2005intensity}
and
\cite{benson2018simplicial}.}
\green{It is often helpful to first consider
the generalization of the measure
to multi-edges \citep{newman2004analysis}.
We can then view
positive integer-weighted edges as multi-edges,
and extend to positive real-weighted edges in a natural way.}

\green{Thus, we begin by considering the
weight to place
on a motif which contains multi-edges.}
\green{A first option might be to consider the minimum number
of multi-edges lying on any edge of the motif.
This would capture the number of fully
edge-disjoint instances of the motif.}
\green{Another option would be to count the number of
unique instances of the motif
(possibly counting individual edges more than once),
giving the total weight as the product of
the number of multi-edges between each pair of nodes
in the instance.}
\green{Alternatively,
as a compromise between these two schemes,
we might consider the number of distinct motifs present,
were the multi-edges distributed evenly
among all the edges in the motif
(allowing fractional edges for simplicity).
This gives the arithmetic mean weighting scheme which is used in
\cite{onnela2005intensity},
\cite{benson2018simplicial},
\cite{mora2018pymfinder}
and
\cite{bmotif_r}.}

\green{As an illustrative example
of how this choice can affect clustering,
\Cref{fig:weighted_cuts} shows how
different motif weighting schemes prefer to divide a network in different ways.
Here, we treat the undirected edges as
bi-directional edges and consider the
fully-connected triangle motif $\ca{M}_4$.
The ``minimum'' approach has no preference between Cut 1 and Cut 2,
with all motifs having a weight of 1, and in this example is equivalent to the simple unweighted case.
However the ``mean'' weighting approach
(which, up to a scaling factor,
is equivalent to summing the edge weights)
prefers Cut 1
(cutting motifs of mean-weight $2$ and $1$ rather
than $\frac{5}{3}$ and $\frac{5}{3}$),
while the ``product'' approach prefers Cut 2
(cutting motifs of product-weight 3 and 3 rather than 6 and 1).
Further,
although they ostensibly concern bipartite networks,
\Cref{sec:example3} exhibits some of the
effects of different weighting schemes
on the performance of motif-based clustering,
and \Cref{sec:languages} provides real world motivation
for using mean-weighted motifs.}

\green{Each of these approaches has merits.
The ``minimum'' approach might be natural when
dealing with flow networks,
where low-count multi-edges could indicate bottlenecks
or points of unreliability.
The ``product'' approach is useful if we want to consider each
possible set of edges to be a separate entity,
while the ``mean'' approach is appropriate if we consider
the motif to be a single object
and wish to count the total number of edges it contains,
perhaps as some notion of capacity.}

\green{When considering weighted edges however,
these approaches have some mathematical differences,
particularly in how they handle the presence of large weights.
For example, suppose that a graph contains
just a few heavy edges.
Were two of these heavy edges to appear together in a motif instance,
the product formulation would assign a very large total weight
to that instance,
possibly completely dominating any other instances.
On the other hand,
using the mean formulation ensures
that the total weight of any instance remains on the same scale
as the individual edge weights.
By taking the minimum edge weight,
it may be that the heavy edges do not contribute to
any motif weights at all.}

\green{With these points in mind,
and following
\cite{mora2018pymfinder}
and
\cite{bmotif_r},
we use the ``mean-weighted'' approach in this paper,
defining
the weight of a
motif instance by its average (mean) edge weight.
We acknowledge that other weighting schemes may be more appropriate in some circumstances}\footnote{For convenience our software package implements three weighting schemes: product, mean and a scheme which ignores the weights.}.
\green{For example,
\cite{onnela2005intensity} introduces several weighting schemes related to  the geometric and arithmetic means of the edge weights.
Further,
these
are special cases of
the generalized $p$-means weighting scheme detailed by
\cite{benson2018simplicial}.}

Now that we have chosen a motif weighting scheme,
we can define our weighted generalizations
of motif adjacency matrices.
We note that while a weighted extension was considered by \cite{wang2018weighted}, in this paper we provide the first thorough exploration, including computational analysis, fast software implementations and comprehensive real world and synthetic experiments.

\begin{definition}[Weighted motif adjacency matrices]
\label{def:motif_adj_matrices}
Let $\ca{G} = (\ca{V,E},W)$ be a weighted graph on $n$ vertices and let $\ca{(M,A)}$ be a motif.
The \emph{functional} and \emph{structural}
\green{(mean-weighted)}
\emph{motif adjacency matrices} (MAMs) of size $n \times n$ of $\ca{(M,A)}$ in $\ca{G}$ are  given by

\begin{align*}
M^\mathrm{func}_{ij} &\vcentcolon= \frac{1}{|\ca{E_M}|} \sum_{\ca{M} \cong \ca{H} \leq \ca{G}} \bb{I} \big\{ \{i,j\} \in \ca{A}(\ca{H}) \big\} \sum_{e \in \ca{E_H}} W(e)\,, \\
M^\mathrm{struc}_{ij} &\vcentcolon= \frac{1}{|\ca{E_M}|} \sum_{\ca{M} \cong \ca{H} < \ca{G}} \bb{I} \big\{ \{i,j\} \in \ca{A}(\ca{H}) \big\} \sum_{e \in \ca{E_H}} W(e)\,.
\end{align*}
\end{definition}

\noindent When $W \equiv 1$ and $\ca{M}$ is simple,
the (functional or structural) MAM entry
$M_{ij}$ $(i \neq j)$ simply counts the (functional or structural) instances of $\ca{M}$ in $\ca{G}$ containing $i$ and $j$.
When $\ca{M}$ is not simple, $M_{ij}$ counts only those instances with anchor sets containing both $i$ and $j$.
MAMs are always symmetric, since the only dependency on $(i,j)$ is via the unordered set $\{i,j\}$.
In order to state \Cref{prop:motif_adj_matrix_formula},
we need one more definition.

\begin{definition}[Anchored automorphism classes]
Let $(\ca{M,A})$ be a motif.
Let $S_\ca{M}$ be the set of permutations on $ \ca{V_M} = \{ 1, \ldots, m \}$ and define the \emph{anchor-preserving permutations} $S_\ca{M,A} = \{ \sigma \in S_\ca{M} : \{1,m\} \subseteq \sigma(\ca{A}) \}$.
Let $\sim$ be the equivalence relation defined on $S_\ca{M,A}$ by: $\sigma \sim \tau \iff \tau^{-1} \sigma$ is an automorphism of $\ca{M}$.
Finally the \emph{anchored automorphism classes} are the quotient set
$S_\ca{M,A}^\sim \vcentcolon= S_\ca{M,A} \ \big/ \sim$\,.
\end{definition}

\noindent The motivation for this definition of
anchored automorphism classes is as follows:
suppose we are looking for instances of
$(\ca{M},\ca{A})$
in $\ca{G}$
which contain nodes $i$ and $j$.
We set $k_1=i$ and $k_m=j$
(where $m$ is the number of nodes in the motif --
note we could use any two fixed indices
instead of $1$ and $m$ here),
and choose some other nodes
$\{k_2, \ldots, k_{m-1}\}$.
We want to find all mappings of vertices in
$\ca{M}$
to our chosen vertices in
$\ca{G}$
by
$u \mapsto k_{\sigma u}$
such that
(i) vertices in $\ca{A}$ are mapped to
$i$ and $j$,
and
(ii) mappings which correspond to the same instance are not counted more than once.
These conditions are precisely the same as requiring
$\sigma \in S_\ca{M,A}^\sim$\,.

\begin{proposition}[MAM formula] \label{prop:motif_adj_matrix_formula}

Let $\ca{G} = (\ca{V,E},W)$ be a graph with vertex set ${\ca{V}=\{1,\ldots,n\}}$ and let $(\ca{M,A})$ be a motif on $m$ vertices. For any $i,j \in \ca{V}$ and with $k_1 = i$, $k_m = j$, the functional and structural MAMs of $\ca{(M,A)}$ in $\ca{G}$ are given by

\vspace{2mm}
\begin{align}
M^\mathrm{func}_{ij} &= \frac{1}{|\ca{E_M}|} \sum_{\sigma \in S_\ca{M,A}^\sim} \ \sum_{\{k_2, \ldots, k_{m-1}\} \subseteq \ca{V}} \ J^\mathrm{func}_{\mathbf{k},\sigma} \ G^\mathrm{func}_{\mathbf{k},\sigma}\,, \label{eq:mamformulafunc} \\
M^\mathrm{struc}_{ij} &= \frac{1}{|\ca{E_M}|} \sum_{\sigma \in S_\ca{M,A}^\sim} \ \sum_{\{k_2, \ldots, k_{m-1}\} \subseteq \ca{V}} \ J^\mathrm{struc}_{\mathbf{k},\sigma} \ G^\mathrm{struc}_{\mathbf{k},\sigma}\,, \label{eq:mamformulastruc}
\end{align}
\ja{\vspace{1mm}}{}

\noindent  where
$J^\mathrm{func}_{\mathbf{k},\sigma}$
is equal to one
(and zero otherwise)
if
$\ca{M}$
appears as a functional
(likewise for structural)
instance on the $m$-tuple of
distinct vertices
$\mathbf{k} = (k_1, \ldots, k_m)$
in
$\ca{G}$,
under the mapping
$u \mapsto k_{\sigma u}$;
and in that case,
$G^\mathrm{func}_{\mathbf{k},\sigma}$
is the average edge weight of that instance:

\vspace{2mm}
\begin{alignat*}{3}
J^\mathrm{func}_{\mathbf{k},\sigma}
	& \vcentcolon= \prod_{\ca{E}_\ca{M}^0} (J_\mathrm{n})_{k_{\sigma u},k_{\sigma v}}
	&& && \prod_{\ca{E}_\ca{M}^\mathrm{s}} J_{k_{\sigma u},k_{\sigma v}}
	\prod_{\ca{E}_\ca{M}^\mathrm{d}} (J_\mathrm{d})_{k_{\sigma u},k_{\sigma v}}\,, \\
G^\mathrm{func}_{\mathbf{k},\sigma}
	& \vcentcolon= \sum_{\ca{E}_\ca{M}^\mathrm{s}} G_{k_{\sigma u},k_{\sigma v}}
	&& + && \sum_{\ca{E}_\ca{M}^\mathrm{d}} (G_\mathrm{d})_{k_{\sigma u},k_{\sigma v}}\,, \\
J^\mathrm{struc}_{\mathbf{k},\sigma}
	& \vcentcolon= \prod_{\ca{E}_\ca{M}^0} (J_0)_{k_{\sigma u},k_{\sigma v}}
	&& && \prod_{\ca{E}_\ca{M}^\mathrm{s}} (J_\mathrm{s})_{k_{\sigma u},k_{\sigma v}}
	\prod_{\ca{E}_\ca{M}^\mathrm{d}} (J_\mathrm{d})_{k_{\sigma u},k_{\sigma v}}\,, \\
G^\mathrm{struc}_{\mathbf{k},\sigma}
	&\vcentcolon= \sum_{\ca{E}_\ca{M}^\mathrm{s}} (G_\mathrm{s})_{k_{\sigma u},k_{\sigma v}}
	&& + && \sum_{\ca{E}_\ca{M}^\mathrm{d}} (G_\mathrm{d})_{k_{\sigma u},k_{\sigma v}}\,,
\end{alignat*}
\ja{\vspace{1mm}}{}

\noindent  and the summations and products are over the missing edges,
single edges and double edges of $\ca{M}$ as follows:

\begin{align*}
	\ca{E}_\ca{M}^0 &\vcentcolon= \{ (u,v) : 1 \leq u < v \leq m : (u,v) \notin \ca{E_M}, (v,u) \notin \ca{E_M} \}\,, \\
	\ca{E}_\ca{M}^\mathrm{s} &\vcentcolon= \{ (u,v) : 1 \leq u < v \leq m : (u,v) \in \ca{E_M}, (v,u) \notin \ca{E_M} \}\,, \\
	\ca{E}_\ca{M}^\mathrm{d} &\vcentcolon= \{ (u,v) : 1 \leq u < v \leq m : (u,v) \in \ca{E_M}, (v,u) \in \ca{E_M} \}\,.
\end{align*}

\end{proposition}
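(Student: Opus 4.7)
The plan is to show that both formulae agree with \Cref{def:motif_adj_matrices} by establishing that each functional (resp.\ structural) instance $\ca{H}$ of $\ca{M}$ in $\ca{G}$ with $\{i,j\} \in \ca{A(H)}$ is enumerated exactly once in the outer double sum, contributing its total edge weight $\sum_{e \in \ca{E_H}} W(e)$; dividing by $|\ca{E_M}|$ then recovers the mean-weighted definition. I would handle the functional and structural cases in parallel, since they differ only in which indicator and weight matrices act on each edge type.

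The first part is a routine verification. Partitioning the unordered vertex pairs of $\ca{V_M}$ into $\ca{E}_\ca{M}^0$, $\ca{E}_\ca{M}^\mathrm{s}$ and $\ca{E}_\ca{M}^\mathrm{d}$, I would confirm directly from the definitions of the indicator matrices that $J^{\mathrm{func}}_{\mathbf{k},\sigma} = 1$ exactly when $\phi_{\sigma,\mathbf{k}} \colon u \mapsto k_{\sigma u}$ embeds $\ca{M}$ as a functional instance in $\ca{G}$ on distinct vertices: $J_\mathrm{n}$ enforces vertex distinctness on missing-edge pairs, while $J$ and $J_\mathrm{d}$ demand that the required directed edges be present; for $J^{\mathrm{struc}}_{\mathbf{k},\sigma}$ the factors $J_0$ and $J_\mathrm{s}$ additionally forbid spurious edges. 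An analogous decomposition shows that when the corresponding $J$ equals one, both $G^{\mathrm{func}}_{\mathbf{k},\sigma}$ and $G^{\mathrm{struc}}_{\mathbf{k},\sigma}$ evaluate to $\sum_{e \in \ca{E_H}} W(e)$.

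The combinatorial heart of the proof, which I expect to be the main obstacle, is showing that each qualifying instance is counted exactly once after the quotient by $\sim$. Fixing an arbitrary ordering of the intermediate vertex set $\{k_2,\ldots,k_{m-1}\}$ so that $\mathbf{k}$ is a concrete tuple, the map $\sigma \mapsto \phi_{\sigma,\mathbf{k}}$ is a bijection between $S_\ca{M,A}$ and the isomorphisms $\phi \colon \ca{M} \to \ca{H}'$ (onto some instance $\ca{H}'$ on the vertex set $\{i,j,k_2,\ldots,k_{m-1}\}$) satisfying the anchor condition $\phi^{-1}(i), \phi^{-1}(j) \in \ca{A}$: the inverse sends $\phi$ to the unique $\sigma$ with $k_{\sigma(u)} = \phi(u)$, and the constraint $\{1,m\} \subseteq \sigma(\ca{A})$ is precisely that anchor condition. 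For any fixed $\ca{H}$, two such permutations $\sigma_1, \sigma_2$ yield isomorphisms of $\ca{H}$ iff the corresponding isomorphisms differ by an automorphism $\alpha$ of $\ca{M}$; a short calculation using $\phi_{\sigma_\ell,\mathbf{k}}(u) = k_{\sigma_\ell u}$ shows this is equivalent to $\sigma_2^{-1}\sigma_1 = \alpha^{-1} \in \Aut(\ca{M})$, i.e.\ $\sigma_1 \sim \sigma_2$, and conversely any $\sim$-equivalent pair in $S_\ca{M,A}$ produces isomorphisms of a common instance.

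Hence all permutations corresponding to qualifying isomorphisms of a fixed $\ca{H}$ lie in a single $\sim$-equivalence class, which the quotient identifies into one term contributing $\sum_{e \in \ca{E_H}} W(e)$. To conclude, I would verify that the inner sum is independent of the arbitrary ordering chosen for the intermediate set: a reordering by a permutation $\pi$ of $\{2,\ldots,m-1\}$ induces the well-defined bijection $[\sigma] \mapsto [\pi\sigma]$ on $S_\ca{M,A}^\sim$, which merely rearranges the summands. Combining these pieces proves both \eqref{eq:mamformulafunc} and \eqref{eq:mamformulastruc}.
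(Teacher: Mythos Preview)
Your proposal is correct and follows the same approach as the paper's proof: verify that $J_{\mathbf{k},\sigma}$ is the indicator of a (functional/structural) instance on distinct vertices, that $G_{\mathbf{k},\sigma}$ is its total edge weight, and that the quotient by $\sim$ ensures each qualifying instance is counted once. Your treatment is in fact more careful than the paper's on two points the paper glosses over: the explicit bijection showing that $\sim$-classes in $S_{\ca{M,A}}$ correspond exactly to instances with the anchor condition, and the well-definedness of the inner sum under reordering of the intermediate vertex set.
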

\begin{proof}
See
\Cref{proof:motif_adj_matrix_formula}.
\end{proof}

\subsection{Motif adjacency matrices for bipartite graphs} \label{sec:bipartite}
We also extend our formulation to weighted bipartite networks,
by considering certain 3-node anchored motifs.
These are used to create separate similarity matrices
for each part of a bipartite graph.

\begin{definition}
A \emph{bipartite graph} is a directed graph
where the vertices can be partitioned as $\ca{V} = \ca{S} \sqcup \ca{D}$,
such that
every edge starts in $\ca{S}$ and ends in $\ca{D}$.
We refer to $\ca{S}$ as the \emph{source vertices} and to $\ca{D}$ as the \emph{destination vertices}.
\end{definition}

\noindent Our method for clustering bipartite graphs uses two \emph{anchored} motifs;
the \emph{collider} and the \emph{expander} (\Cref{fig:expa_coll}).
For both motifs the anchor set is $\ca{A}=\{ 1,3 \}$.
These motifs are useful for bipartite clustering because
when restricted to the source or destination vertices,
their
MAMs are the adjacency matrices of the
weighted projections
\citep{chessa2014cluster}
of the graph $\ca{G}$
(\Cref{prop:coll_expa_formulae}).
In particular they can be used as similarity matrices for the source and destination vertices respectively.
Note that if a bipartite graph is connected,
then so are the projections onto its source or destination vertices.

\wf{\begin{figure}[t]
	\centering
	\includegraphics[width=0.3\textwidth]{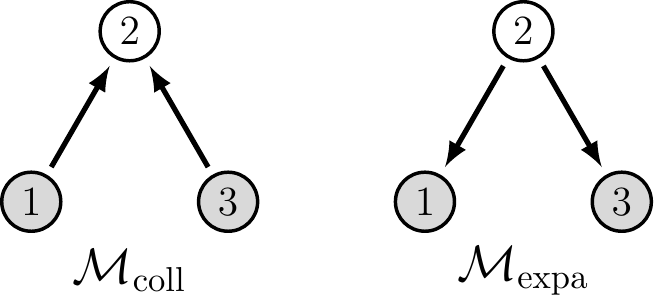}
	\caption{The collider and expander motifs.}
	\label{fig:expa_coll}
\end{figure}
}{}

\begin{proposition}[Colliders and expanders in bipartite graphs] \label{prop:coll_expa_formulae}
Let $\ca{G} = (\ca{V,E},W)$ be a directed bipartite graph. Let $M_\mathrm{coll}$ and $M_\mathrm{expa}$ be the structural or functional MAMs of $\ca{M}_\mathrm{coll}$ and $\ca{M}_\mathrm{expa}$ respectively in $\ca{G}$. Then

\vspace{2mm}
	\begin{align}
		(M_\mathrm{coll})_{ij} &= \bb{I} \{i \neq j\} \hspace*{-0.4cm} \sum_{\substack{k \in \ca{D} \\ (i,k), (j,k) \in \ca{E}}} \hspace*{-0.2cm} \frac{1}{2} \Big[ W((i,k)) + W((j,k)) \Big]\,, \label{eq:mamcoll}\\
		(M_\mathrm{expa})_{ij} &= \bb{I} \{i \neq j\} \hspace*{-0.4cm} \sum_{\substack{k \in \ca{S} \\ (k,i), (k,j) \in \ca{E}}} \hspace*{-0.2cm}\frac{1}{2} \Big[ W((k,i)) + W((k,j)) \Big]\,. \label{eq:mamexpa}
	\end{align}
\end{proposition}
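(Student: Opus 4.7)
The plan is to apply \Cref{def:motif_adj_matrices} directly to the collider and expander motifs, using the bipartite structure to simplify drastically. Observe first that $|\ca{E}_{\ca{M}_\mathrm{coll}}| = |\ca{E}_{\ca{M}_\mathrm{expa}}| = 2$, so the normalising factor $1/|\ca{E_M}|$ equals $\frac{1}{2}$, matching the prefactor in \eqref{eq:mamcoll} and \eqref{eq:mamexpa}. A subgraph $\ca{H} \cong \ca{M}_\mathrm{coll}$ consists of three distinct vertices $\{a,b,c\}$ carrying exactly the edges $(a,c)$ and $(b,c)$; because $\ca{G}$ is bipartite with edges only from $\ca{S}$ to $\ca{D}$, this forces $a, b \in \ca{S}$ and $c \in \ca{D}$.

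Next I would argue that the functional and structural MAMs coincide, which is why \eqref{eq:mamcoll} holds simultaneously for both. With $a, b \in \ca{S}$ and $c \in \ca{D}$, the only edges in $\ca{G}$ that can appear among $\{a, b, c\}$ are $(a, c)$ and $(b, c)$: bipartiteness rules out edges inside $\ca{S}$, inside $\ca{D}$, and from $\ca{D}$ to $\ca{S}$. Hence every functional instance is automatically a structural one. I would then compute $\ca{A}(\ca{H})$: any isomorphism $\phi\colon \ca{M}_\mathrm{coll} \to \ca{H}$ must send node $2$ (the unique in-degree-$2$ vertex of the motif) to $c$, and restricts to a bijection between the anchor set $\{1, 3\}$ and $\{a, b\}$. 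Therefore $\ca{A}(\ca{H}) = \{\{a, b\}\}$, so the indicator $\bb{I}\{\{i, j\} \in \ca{A}(\ca{H})\}$ equals one exactly when $\{i, j\}$ is the unordered pair of source vertices of $\ca{H}$.

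Substituting these observations into \Cref{def:motif_adj_matrices} collapses the sum over subgraphs to a sum over $k \in \ca{D}$ with $(i,k), (j,k) \in \ca{E}$ (and $i \neq j$), each contributing $W((i,k)) + W((j,k))$ to the unnormalised MAM; multiplying by $\frac{1}{2}$ yields \eqref{eq:mamcoll}. The expander formula \eqref{eq:mamexpa} then follows by an identical argument with the roles of $\ca{S}$ and $\ca{D}$ reversed: the anchors $\{1, 3\}$ now map into $\ca{D}$ and the central node into $\ca{S}$. The main point to be careful about is counting: the sum in \Cref{def:motif_adj_matrices} is indexed by subgraphs $\ca{H}$ rather than isomorphisms, so the two isomorphisms swapping nodes $1$ and $3$ collapse to the same $\ca{H}$ and do not produce overcounting, which is also why no combinatorial factor beyond $\frac{1}{2}$ appears in the final formulae.
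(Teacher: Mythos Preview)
Your argument is correct, but it proceeds along a different route from the paper's proof. The paper does not return to \Cref{def:motif_adj_matrices}; instead it invokes the pre-computed matrix formula for $\ca{M}_\mathrm{coll}$ from \Cref{tab:motif_adj_mat_table}, namely $M_\mathrm{coll} = \tfrac{1}{2}\, J_\mathrm{n} \circ (J G^\mathsf{T} + G J^\mathsf{T})$, and then expands the $(i,j)$ entry of this matrix product term by term, finally restricting the inner sum to $k \in \ca{D}$ using bipartiteness. The expander case is handled analogously from $M_\mathrm{expa} = \tfrac{1}{2}\, J_\mathrm{n} \circ (J^\mathsf{T} G + G^\mathsf{T} J)$. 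Your approach is more self-contained, since it does not rely on \Cref{prop:motif_adj_matrix_formula} or the table of formulae that flow from it, and it makes the combinatorial structure (why functional and structural coincide, why $\ca{A}(\ca{H})$ is a singleton, why there is no overcounting from the automorphism swapping $1$ and $3$) fully explicit. The paper's approach is shorter and, by design, showcases the matrix-based machinery developed earlier; yours would stand even if that machinery were removed.
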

\begin{proof}
See \Cref{proof:coll_expa_formulae}.
\end{proof}\noindent We note that there are other options available for constructing projections of weighted bipartite graphs,
such as the approach in \cite{stram2017weighted},
which, similarly to our framework,
uses sums of edge weights over
shared neighbors.
Another relevant line of work is that of~\citep{zha2001bipartite}, who proposed a certain minimization problem on the bipartite graph, showing that an approximation solution could be obtained via a partial singular value decomposition of a suitably scaled edge weight matrix.

\subsection{Clustering the motif adjacency matrix} \label{sec:clustering_mams}
A motif adjacency matrix can be construed as a general pairwise similarity measure, and as such, can be analyzed directly as a new weighted \emph{undirected} graph. Following~\citep{benson2016higher},
one of the most interesting applications  is identifying higher-order clusters with spectral methods,
leveraging the fact that the similarity is based on motifs.

To extract clusters, we use a standard approach from the literature on spectral clustering,
based on the spectrum of the random-walk Laplacian, followed by $k$-means++  (see \Cref{chap:spectral}).
To this end, we need to define two parameters: $l$ is the number of random-walk Laplacian
eigenvectors to use, and
$k$ is the number of clusters for
$k$-means++
(see \Cref{alg:rwspectclust}).
We detail our approach for general directed weighted networks in \Cref{alg:motifrwspectclust},
and for weighted bipartite networks in \Cref{alg:bipartite_clustering}.

\subsubsection{Cluster evaluation}\label{sec:clustereval}
When ground-truth clustering is available, we compare it to our recovered clustering using the  \emph{Adjusted Rand Index} (ARI) \citep{hubert1985comparing}.
The ARI between two clusterings has expected value $0$ under random cluster assignment,
and maximum value $1$ denoting perfect agreement between the clusterings.
A larger ARI indicates a more similar clustering, and hence closer to the ground truth.

\subsubsection{Connected components}\label{sec:connected_components}
In order for spectral clustering to produce nontrivial clusters from a graph,
it is necessary to restrict the graph to its largest connected component.
When forming MAMs,
even if the original graph is
\green{(weakly or strongly)}
connected,
there is no guarantee that the
\green{(symmetric)}
MAM is connected too.
Hence we restrict the MAM to its largest connected
component before spectral clustering is applied.
While this may initially seem to be a flaw with motif-based spectral clustering (since some vertices may not be assigned to any cluster),
in fact it can be useful:
we only attempt to cluster vertices which are in some sense ``well connected'' to the rest of the graph.
This can result in fewer misclassifications
than with traditional spectral clustering,
as seen in Section~\ref{sec:motif_polblogs},
\green{where we also investigate
MAM regularization as an alternative strategy for dealing with disconnected MAMs.}

\subsubsection{Motif choice}
\label{sec:motif_choice}

\green{
The selection of a motif is essentially equivalent to selecting a similarity measure between
nodes in the network,
as the $(i,j)$th entry in the MAM corresponds to the similarity between nodes $i$ and $j$ used for our clustering procedure.
This selection can be important, as we will see in
\Cref{sec:syntheticexample2},
where the choice of motif can have a significant impact on the clusters obtained.
For motif selection, considerations for weighted networks are largely
similar to those for unweighted networks.
Thus much of this discussion will mirror that of
\cite{benson2016higher},
although we will highlight the areas where weight may cause deviations.
}

\green{
The first criterion for motif selection is the application domain.
This may involve choosing motifs which are related to specific features of the application domain,
essentially specializing the similarity measure to the task at hand.
For example,
in several fields, some motifs may be more relevant or have very different properties;
e.g. the feed forward structure in biological networks
\citep{Mangan11980},
various triadic structures in sociology
\citep{wasserman1994social},
or the motif $\ca{M}_5$ in food web networks
\citep{benson2016higher}.
For weighted networks
the procedure is very similar,
but care must be taken to simultaneously select a motif
and a weighting scheme
(\Cref{sec:weighted_mams}),
in order to capture
the similarity measure of interest.
}

\green{
Again following \cite{benson2016higher},
in the case where there is no application domain guidance,
more principled methods for motif choice are also available.
For example,
sweep profiles \citep{shi2000normalized}
or their motif-coherence counterparts \citep{benson2016higher}
can be used to
identify motifs which
give more clearly distinguished clusters.
In the weighted case,
these methods require
weighted generalizations of the appropriate quantities.
We give an example of the
application of
weighted generalizations of
these sweep profiles
to real world data in
\Cref{sec:motif_polblogs}.
Finally,
one could consider every motif (or a subset of motifs) and then
explore each in turn.
We demonstrate this approach in our migration
experiment in \Cref{sec:motif_migration},
where we plot the geographic spread of our clusters and then validate on a subset of motifs by considering the cut imbalance ratio scores.}

\subsubsection{Functional vs. structural MAMs}
There is also a choice of whether to use functional or structural MAMs
for motif-based clustering,
and their different properties make them suitable for different circumstances.
Firstly, note that $ 0 \leq M^\mathrm{struc}_{ij} \leq M^\mathrm{func}_{ij}$
for all $i,j \in \ca{V}$.
This implies that the largest connected component of $M^\mathrm{func}$
is always at least as large as that of $M^\mathrm{struc}$, meaning that
sometimes
more vertices can be assigned to a cluster by using functional MAMs. However, structural MAMs are more discerning about finding motifs, since they require both ``existence'' and ``non-existence'' of edges.

\subsection{Computational analysis}\label{sec:computational_analysis}

For motifs on at most three vertices,
we provide two fast and
potentially parallelizable
matrix-based
procedures for computing weighted MAMs:
one for dense regimes
and one for sparse regimes.
In this section, we explore and demonstrate the scalability of these two approaches.

\subsubsection{Dense approach}
First, \Cref{prop:motif_adj_matrix_computation}
bounds the number of matrix operations required
to compute a weighted MAM,
for a motif on at most three vertices,
using our dense formulation.
In practice, additional symmetries of the motif often allow
computation with even fewer matrix operations
(\Cref{chap:appendix_matrices}).

\begin{proposition}[Complexity of MAM formula] \label{prop:motif_adj_matrix_computation}Let $\ca{G}$ be a (weighted, directed) graph
on $n$ vertices,
and suppose that
the $n \times n$ directed adjacency matrix
$G$ of $\ca{G}$ is known.
Then, computing the adjacency and indicator matrices and calculating a MAM,
for a motif on at most three vertices, using Equations
(\ref{eq:mamformulafunc})
and
(\ref{eq:mamformulastruc})
in \Cref{prop:motif_adj_matrix_formula},
involves at most 18 matrix multiplications,
22 entry-wise multiplications and 21 additions of
$n \times n$ matrices.
\end{proposition}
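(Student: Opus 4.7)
The plan is to prove the bound by explicit construction, considering each motif on at most three vertices separately and counting operations. The argument breaks naturally into three stages: preliminary matrix computation, reduction of the formula in \Cref{prop:motif_adj_matrix_formula} to matrix-level operations, and worst-case bookkeeping over all motifs.

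First I would handle the preliminary stage. Given the directed weighted adjacency matrix $G$, the auxiliary matrices $J$, $J_\mathrm{s}$, $J_\mathrm{d}$, $J_0$, $J_\mathrm{n}$, $G_\mathrm{s}$ and $G_\mathrm{d}$ can all be obtained through entry-wise operations (indicator application, entry-wise products with $G$ and $G^\top$, additions, and subtractions from $J_\mathrm{n}$). No matrix multiplications are needed, and a small fixed number of entry-wise operations suffices. Since these base matrices are built once and reused across every motif, their cost can be absorbed into the overall budget without inflating per-motif counts.

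Next, I would reduce the MAM formula to a sequence of matrix operations. For a motif on $m \leq 3$ vertices, the inner summation in Equations (\ref{eq:mamformulafunc}) and (\ref{eq:mamformulastruc}) ranges over a single intermediate vertex $k_2$ (or is empty when $m=2$). The key observation is that any sum of the form $\sum_{k_2} A_{i,k_2} B_{k_2,j}$ is exactly the $(i,j)$-entry of the matrix product $AB$. Consequently, for each fixed $\sigma \in S_\ca{M,A}^\sim$, the expression $\sum_{k_2} J_{\mathbf{k},\sigma}\, G_{\mathbf{k},\sigma}$ decomposes into a sum of matrix products of base matrices, followed by entry-wise multiplication with the base matrix corresponding to the relationship between $i$ and $j$ prescribed by the motif. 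This turns the formula into a concrete recipe whose cost can be tallied.

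Finally, I would carry out the bookkeeping across the finitely many motifs on at most three vertices (\Cref{fig:motif_definitions_directed}). For each motif, I would enumerate the anchored automorphism classes, write down the matrix expressions produced by the reduction above, and count matrix multiplications, entry-wise multiplications and additions, for both the functional and the structural case. Two observations keep the count manageable: the motif's automorphism group reduces the number of distinct $\sigma \in S_\ca{M,A}^\sim$, and common subexpressions can be reused across $\sigma$ (for example, the transpose of a previously computed product comes for free, and products like $J \cdot G$ appear in several motifs). The main obstacle is not correctness (which follows directly from \Cref{prop:motif_adj_matrix_formula}) but the careful accounting: one must show that for every motif the counts individually fall within $18$ matrix multiplications, $22$ entry-wise multiplications and $21$ additions, taking the worst case across all motifs. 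I would defer the detailed per-motif tables to \Cref{chap:appendix_matrices} and conclude by reporting these worst-case totals.
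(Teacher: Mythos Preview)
Your proposal is correct in principle, but it takes a more laborious route than the paper. You plan to enumerate every motif on at most three vertices, write down the matrix expression for each, count operations, and report the worst case. The paper instead gives a uniform counting argument that avoids any per-motif case analysis.

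The paper's argument runs as follows. The preliminary matrices $J, J_\mathrm{n}, J_\mathrm{s}, J_\mathrm{d}, J_0, G_\mathrm{s}, G_\mathrm{d}$ are built from $G$ using exactly four additions and four entry-wise multiplications. For the MAM itself, observe that a motif on $m \leq 3$ vertices has at most $\binom{3}{2}=3$ unordered vertex pairs, so $J_{\mathbf{k},\sigma}$ is a product of at most three factors and $G_{\mathbf{k},\sigma}$ a sum of at most three summands. After distributing, $\sum_{k_2} J_{\mathbf{k},\sigma}\, G_{\mathbf{k},\sigma}$ is therefore a sum of at most three terms, each requiring one matrix multiplication (for the two factors indexed by pairs involving $k_2$) and one entry-wise multiplication (for the factor indexed by the pair $\{i,j\}$). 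Since $|S_\ca{M,A}^\sim| \leq |S_\ca{M}| = m! \leq 6$, this block is repeated at most six times and the results summed. That yields $6\times 3=18$ matrix multiplications, $6\times 3 + 4 = 22$ entry-wise multiplications, and $6\times 2 + 5 + 4 = 21$ additions, with no enumeration of individual motifs and no reuse of subexpressions needed.

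Your approach would also establish the bound (and could in principle give tighter per-motif constants, which is what \Cref{tab:motif_adj_mat_table} effectively does), but the optimizations you mention---reusing transposes, sharing products across $\sigma$---are unnecessary for the stated claim. The uniform bound drops out directly from the sizes $|S_\ca{M,A}^\sim| \leq 6$ and ``at most three vertex pairs.''
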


\begin{proof}
See \Cref{proof:motif_adj_matrix_computation}.
\end{proof}

\noindent Thus as we have a fixed bound on the number of each operation for any motif,
and as each operation is performed on a matrix of the same size,
our approach scales with the largest complexity of these operations.
In dense matrices, element-wise products and additions are $\ca{O}(n^2)$
and naive matrix multiplication is $\ca{O}(n^{3})$. Therefore, in a naive implementation
for a graph on $n$ vertices, the overall complexity of our dense approach is $\ca{O}(n^{3})$,
and the memory requirement is $\ca{O}(n^2)$.
We note that somewhat faster algorithms are available for multiplication of large dense matrices,
such as the $\ca{O}(n^{2.81})$ algorithm given by
\cite{strassen1969gaussian}.

A list of functional MAM formulae for our dense approach, for all simple motifs on at most three vertices,
as well as for the collider and expander motifs (used in \Cref{sec:bipartite}), is given in \Cref{tab:motif_adj_mat_table}
in \Cref{chap:appendix_matrices}.
These formulae are generalizations of those stated in Table S6 in the supplementary materials of \cite{benson2016higher} (a list of structural MAMs for triangular motifs in unweighted graphs). Note that the functional MAM formula for the two-vertex motif $\ca{M}_\mathrm{s}$ yields the symmetrized adjacency matrix $M = G + G^\mathsf{T}$, which can be used for traditional spectral clustering
(\Cref{sec:spectral_overview}), as in
\cite{Meila2007ClusteringBW}.

\subsubsection{Sparse approach}
Real world networks are often sparse
\citep{snapnets},
and operations with sparse matrices are
significantly faster:
in sparse matrices with $b$
non-zero entries, element-wise products and additions are $\ca{O}(b)$,
and matrix multiplications are $\ca{O}(bn)$.
Therefore, we give a slightly different approach with a better computational running time
(\Cref{sec:empirical_computation})
on sparse graphs,
which can extend to graph sizes that are infeasible for our dense approach.

For motifs on at most
three vertices,
the formulae in \Cref{prop:motif_adj_matrix_formula}
are simply sums of terms of the form
$A \circ (BC)$,
where $A,B$ and $C$ are adjacency or indicator matrices, and $\circ$ represents the entry-wise (Hadamard)
product.
If the directed graph has $n$ vertices
and $b$ edges,
then most of the adjacency and indicator matrices
have at most $b$ non-zero entries.
For these matrices, we can compute the
(possibly dense) matrix $BC$
in $\ca{O}(bn)$ time,
and then the matrix
$A \circ (BC)$
in $\ca{O}(n^2)$ time.
Summing them together is also done
in $\ca{O}(n^2)$ time.\footnote{By exploiting sparsity
the sums and element-wise products
could be faster than $\ca{O}(n^2)$,
but as the complexity is dominated by the matrix multiplication,
we do not explore this further.}

However, when considering motifs with missing edges,
we use the \emph{dense} indicator matrices
$J_0$ and $J_\mathrm{n}$. This is problematic as now the term
$BC$ can contain a dense matrix, which can be an issue both for computational reasons\footnote{Although there are sparse-dense matrix multiplication algorithms that address this.} and more importantly, for memory requirements.
To address this, we rewrite these two matrices as

\begin{align*}
J_\mathrm{n} & = {\bf 1} - I\,,
\\
J_0 & = {\bf 1} - (I + J_\mathrm{s} + J_\mathrm{s}^\mathsf{T} + J_\mathrm{d})
\\
& = {\bf 1} - (I + \tilde{J})\,,
\end{align*}

\noindent  where ${\bf 1}$ is a matrix of 1s,
and expand out the resulting formulae.
Element-wise and matrix products with ${\bf 1}$ and $I$
are at most $\ca{O}(n^2)$ so
are computationally simple,
and do not require generating dense matrices.
This allows scaling simply using standard linear algebra libraries.

For sparse graphs the key limitation of this approach
is in fact not CPU, but memory (RAM) as
while $B$ and $C$ might be sparse,
$BC$ may not be.
The worst case density of $BC$ for $B$ and $C$ indicator matrices of a connected graph
is $n^2$ (e.g. a star graph).\footnote{\scriptsize See \url{https://math.stackexchange.com/questions/1042096/bounds-of-sparse-matrix-multiplication}.}
However, in practice the density is often substantially lower than this,
allowing our approaches to scale to very large graphs
(\Cref{sec:empirical_computation}).

\subsubsection{Empirical computational speed}
\label{sec:empirical_computation}

\wf{\begin{figure}[t]
  \centering
  \includegraphics[width=0.99\textwidth]{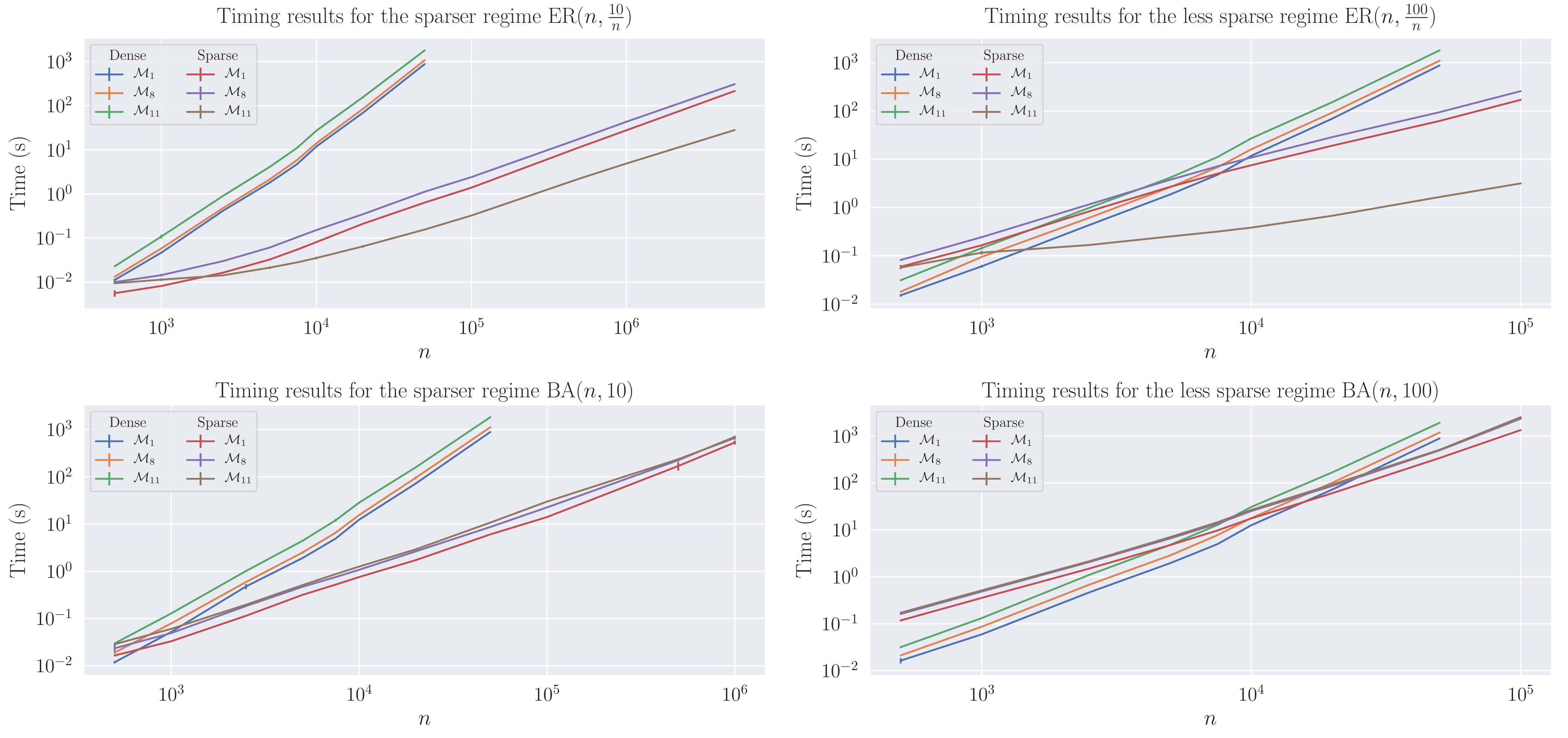}
  \caption{\green{Run time for our two approaches.
  {\bf Top:} Erd\H{o}s-R\'enyi (ER) graph ensembles.
  {\bf Bottom:} Barab\'asi-Albert (BA) graph ensembles.
  The left panels show very sparse graphs
  (ER: $p=\frac{10}{n}$, BA: $m=10$)
  and the right panels show less sparse graphs
  (ER: $p=\frac{100}{n}$, BA: $m=100$).
  We average over five repeats, and error bars are one sample standard deviation.}}
  \label{fig:runTimes}
\end{figure}
}{}

\noindent In this section,
we demonstrate the scalability of our two approaches to computing MAMs.
We showcase the running times of
\green{four different random graph ensembles.
For the first two ensembles, we use the directed Erd\H{o}s-R\'enyi (ER) model \citep{erdos1959random} in which each directed edge between $n$ nodes exists independently with probability $p$. We consider two regimes,
a very sparse regime
($p=\frac{10}{n}$)
and a less sparse regime
($p=\frac{100}{n}$).
For the last two  ensembles,
we use the
undirected Barab\'asi-Albert (BA)
model~\citep{barabasi1999emergence}.
In this model, the graph is initialized with $m$ nodes,
and $n-m$ nodes are then added sequentially,
with each node connecting to $m$ previously placed nodes, and with probability
of connection proportional to the current degree of each previously placed node,
resulting in a graph with a skewed (power-law) degree distribution.
We again consider two regimes,
a very sparse regime
($m = 10$)
and a less sparse regime
($m = 100$).
In each of these four ensembles,
the expected number of edges in the graph scales as $\ca{O}(n)$.}

We measure the performance of computing a functional MAM for a representative sample of motifs:
one triangle motif $\mathcal{M}_1$,
the 2-star $\mathcal{M}_{8}$,
and a motif with a
\green{bi-directional}
edge, $\mathcal{M}_{11}$.
We do not include the time taken to generate the graph in either the sparse or the dense matrix form.
To make this a fair test, we use an optimized Python environment
from the Data Science Virtual Machine image on an
\green{
E64s v3
Azure virtual machine with 64 vCPUs
and 432\,GiB of RAM.}

The results are summarized in \Cref{fig:runTimes}.
\green{In the denser graphs
(ER: $p=\frac{100}{n}$, BA: $m=100$),
we note that for smaller graphs
(i.e for $n$ less than $\approx 10^{4}$) the dense approach
tends to perform as well as, and in many cases exceeds, the performance of the sparse approach, highlighting the advantage of using this approach in denser graphs.}
\green{For graphs above this size and graphs in the sparser regime (ER: $p=\frac{10}{n}$, BA: $m=10$),
the sparse approach outperforms,}
handling graphs with
\green{over a million nodes} and tens of millions of edges.
\green{
In the less sparse regime (ER: $p=\frac{100}{n}$, BA: $m=100$), we consider graphs of size up to $n=10^5$ nodes and of order $10^7$ edges, for which our sparse approach takes less than $10^2$ seconds for the
ER model,  and around $10^3$ seconds for the BA model.}

We note that the time required for each method is roughly constant
across most of the tested motifs. The exception is the sparse approach for $\ca{M}_{11}$,
which takes substantially less time than the other motifs in both
\green{of the ER
regimes. We believe this to be related to the
bi-directional edge in $\ca{M}_{11}$
(not present in $\ca{M}_1$ or $\ca{M}_8$),
which is rare in sparse ER graphs,
heavily reducing the amount of computation required.
This is not seen in the undirected BA graphs,
where bi-directional edges are much more common, resulting in a similar level of
performance across all tested motifs in this regime.}
The various scenarios we experimented with
highlight the fact that our approach is highly scalable to very large sparse graphs,
\green{including those with skewed degree sequences
which often arise in real world applications.}

 \section{Applications to Synthetic Data}
\label{sec:motif_dsbms}

To validate our method, we consider three synthetic data sets as examples. We selected each example to highlight a
different aspect of our approach and to demonstrate the advantages of considering a weighted higher-order clustering measure.
Example 1 demonstrates the importance of taking the edge weights into account when performing higher-order clustering;
Example 2 shows the value of higher-order clustering, demonstrating that clustering using different motifs can yield different insights into the same data; and finally, Example 3 demonstrates the value of our bipartite clustering scheme for detecting structures in weighted bipartite networks.

\subsection{Directed stochastic block models}

For these tests we use weighted and unweighted \emph{directed stochastic block models} (DSBMs),
a broad class of generative models for directed graphs \citep{nowicki2001estimation}.
An unweighted DSBM is characterized by a block count $k$,
a list of block sizes $(n_i)_{i=1}^k$,
and a connection matrix $F \in [0,1]^{k \times k}$.
We define the cumulative block sizes $n^\mathrm{c}_i = \sum_{j=1}^i n_j$,
and the total graph size $n=n^\mathrm{c}_k$.
These are used to construct the group allocations
$g_i = \min \{r: n^\mathrm{c}_r \geq i\}$,
and finally a graph $\ca{G}$ is generated with adjacency matrix entries

\begin{equation}
G_{ij} = \textrm{Ber}(F_{g_i,g_j}) \cdot \bb{I}\{i \neq j\},
\label{eq:unweighted}
\end{equation}

\noindent with all Bernoulli random variables sampled independently.
A weighted DSBM is constructed in a similar manner
(see for example
\cite{mariadassou2010uncovering}
and
\cite{clausetWeightedBlockModel}),
but also requires a  weight matrix $\Lambda \in [0,\infty)^{k \times k}$. In this case, the weighted adjacency matrix entries are generated by the following mixture model

\begin{equation}
G_{ij} = \textrm{Ber}(F_{g_i,g_j})
\cdot \textrm{Poi}(\Lambda_{g_i,g_j})
\cdot \bb{I}\{i \neq j\},
\label{eq:weighted}
\end{equation}

\noindent  with all Bernoulli and Poisson random variables sampled independently.
Note that the Bernoulli variable now no longer directly corresponds to
edge existence, since there is always a non-zero chance of the Poisson
variable being zero, which sets $G_{ij}=0$ and thus removes the edge.
We assume a DSBM is unweighted unless stated otherwise. We evaluate performance using the Adjusted Rand Index (ARI) \citep{rand1971objective}, as in \Cref{sec:clustereval}.
\subsection{Bipartite stochastic block models}
\label{sec:bipartite_sbm}

We define the unweighted \emph{bipartite stochastic block model} (BSBM)
with source block count $k_\ca{S}$,
destination block count $k_\ca{D}$,
source block sizes $(n_\ca{S}^i)_{i=1}^{k_\ca{S}}$,
destination block sizes $(n_\ca{D}^i)_{i=1}^{k_\ca{D}}$,
and bipartite connection matrix $F_\textrm{b} \in [0,1]^{k_\ca{S} \times k_\ca{D}}$
as the unweighted DSBM with block count
$k = k_\ca{S} + k_\ca{D}$,
block sizes
$(n_i)_{i=1}^{k} =
\big((n_\ca{S}^i)_{i=1}^{k_\ca{S}},
(n_\ca{D}^i)_{i=1}^{k_\ca{D}}\big)$,
and connection matrix
$F =
\begin{psmallmatrix}
0 & F_\mathrm{b} \\
0 & 0
\end{psmallmatrix}
$.

A weighted BSBM with
bipartite weight matrix $\Lambda_\textrm{b} \in [0,\infty)^{k_\ca{S} \times k_\ca{D}}$
can similarly be constructed by using a weighted DSBM with
weight matrix
$\Lambda =
\begin{psmallmatrix}
0 & \Lambda_\mathrm{b} \\
0 & 0
\end{psmallmatrix}
$.
Note that this is a generalization of the model in \cite{florescu2016spectral}.

\subsection{Example 1}

In this example we demonstrate the advantages of taking edge weights into account when detecting higher-order structures.
We compare \Cref{alg:motifrwspectclust} with two clusters ($k=2$) and two eigenvectors ($l=2$), against two standard  approaches. The first comparison is random-walk spectral clustering using a symmetrized weighted matrix, which captures the ability of non-motif-based methods to uncover the underlying structure. This is equivalent to \Cref{alg:motifrwspectclust} with motif $\ca{M}_\mathrm{s}$. Secondly, we compare against our own motif-based approaches, but with all edge weights set to $1$, similar to the formulation of \cite{benson2016higher}.

\wf{\begin{figure}[t]
    \centering
    \includegraphics[width=0.87\textwidth]{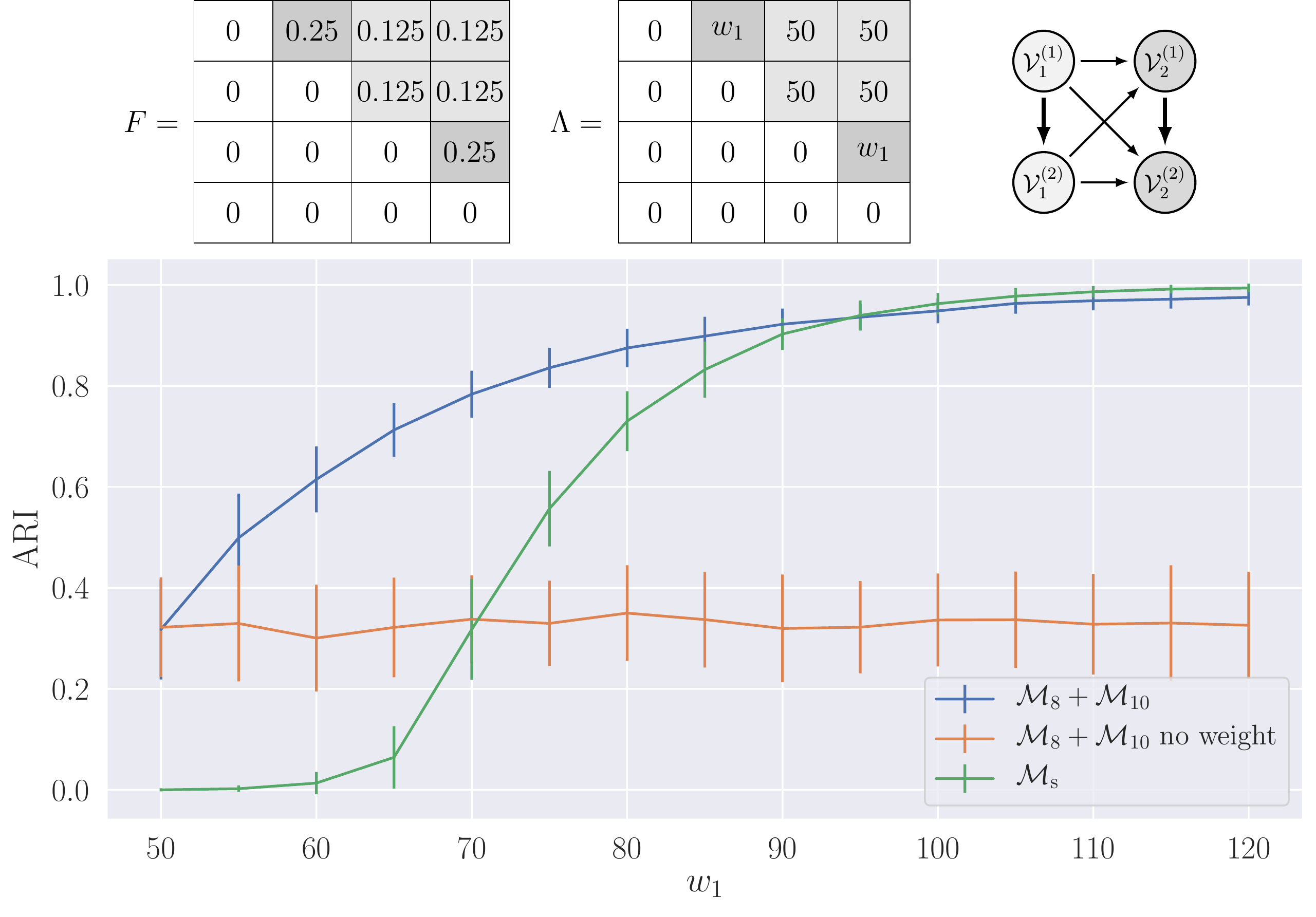}
	\caption{{\bf Top:} DSBM parameters for Example 1. We note that the DSBM has been constructed to favor weighted motifs.
	The left and middle panels display the connection matrix and  weight matrix (\Cref{eq:weighted}).
	In the right panel, a schematic diagram of the structure, blocks of the same color belong to the same cluster and larger arrows represent connections with higher probabilities and larger edge weights.
	{\bf Bottom:} Exploring the performance of MAMs based on $\mathcal{M}_8$ and $\mathcal{M}_{10}$ on the model in the upper panel. Each block contains $100$ nodes. We compare both to the unweighted case, and to the symmetrized case. We perform $100$ repeats, and error bars are one sample standard deviation.
	}
	\label{fig:weighted}
\end{figure}
}{}

To demonstrate the advantages of our approach, we consider an example for which
(i) a higher-order structure is present, and
(ii) weight is important in the structure.
Constructing higher-order structures in a stochastic block model is challenging, as by definition all the edges are independent. Thus the block structure in a DSBM with strongly connected blocks is
captured by density rather than by the existence of motifs (although these are correlated).

To this end, we construct clusters that consist of several blocks,  and introduce a higher-order structure between blocks of the same cluster. For this example, we use two clusters (\Cref{fig:weighted} (upper panel)), each one consisting
of two blocks each of size $100$, for a total of $n=400$ nodes.
Each cluster ($\{\mathcal{V}^{(1)}_1,\mathcal{V}^{(2)}_1\}$
and $\{\mathcal{V}^{(1)}_2,\mathcal{V}^{(2)}_2 \} $)
consists of two blocks with strong uni-directional
(probability $p=0.25$) connections with
potentially large weights (Poisson with mean $w_1$).
The two clusters are then linked by weaker inter-block connections
with potentially smaller weights (Poisson with mean $50$).
By design, this model has strong heavy-weighted uni-directional structures, and is thus well captured by $\ca{M}_8$ and $\ca{M}_{10}$, both of which capture uni-directional structure.
Thus,  following~\citep{benson2016higher}
rather than focusing on an individual motif,
we use the sum of both MAMs.

The lower panel of \Cref{fig:weighted} displays the results for functional motifs (structural motifs in	\Cref{sec:additionSyn}). As our procedure only clusters the largest connected MAM component, we compute ARI over this component.
For $w_1=50$,
all edges have the same expected weight,
and thus the performances of weighted and non-weighted motifs are equal.
In the mid-range of weights ($50 < w_1 \leq 90$), our approach outperforms both of the others, indicating the advantages of accounting for weights.

Finally, for large weights ($w_1>90$), we are (on average) slightly outperformed by $\ca{M}_\mathrm{s}$,
the symmetrized weighted adjacency matrix,
although the error bars are overlapping.
One possible reason for this is that
while using $\ca{M}_8$ and $\ca{M}_{10}$ gives a strong signal,
it also introduces noise when
motifs with heavy and non-heavy weighted edges span clusters.

The pattern on structural motifs is similar
(\Cref{fig:weightedStruc} in \Cref{sec:additionSyn}),
with two key differences:
first, the non-weighted motifs have a larger ARI
($\approx 0.7$ vs. $\approx 0.3$),
and second, the weighted motif equals or outperforms $\mathcal{M}_s$.
\green{We hypothesize the following
possible
reason for this phenomenon:
in this model,
motifs on three nodes which span clusters can form triangles,
whereas those within clusters cannot
(due to the probability-$0$ connections).
Hence
versions of the non-triangle motifs $\mathcal{M}_8$ and $\mathcal{M}_{10}$
are able to filter out some of the noise
described in the previous paragraph,
leading to better performance.}

Finally, we note that this example has been designed to highlight the advantages of our motifs;
and several other structures were considered which do not have this property.
When applying this to real world data, it is important to consider the correct motif to use,
as discussed in \Cref{sec:motif_choice}.

\subsection{Example 2}\label{sec:syntheticexample2}
In the second example, we explore the value of higher-order clustering, by demonstrating that clustering using  different motifs can give different albeit potentially equally valuable insights.
We showcase this with two experiments. First, we consider the case where, due to the construction of the network, certain motifs are better at detecting the underlying structure in different parameterization regimes.
Second, expanding on our first example, we consider a DSBM where, by construction, the network does not have strong densely connected blocks, and thus different motifs highlight distinct and  equally relevant groupings in the network.

\subsubsection{Experiment 1}

For the first experiment, we use the DSBM with $k=2$, $n_1=n_2=100$
(so $n=200$)
and $F =
\begin{psmallmatrix} 0.2 & q_1 \\ 0.05 & 0.2 \end{psmallmatrix}$, as depicted
in the upper panel of \Cref{fig:benchmark1}.

\wf{\begin{figure}[t]
  \centering
  \includegraphics[width=0.83\textwidth]{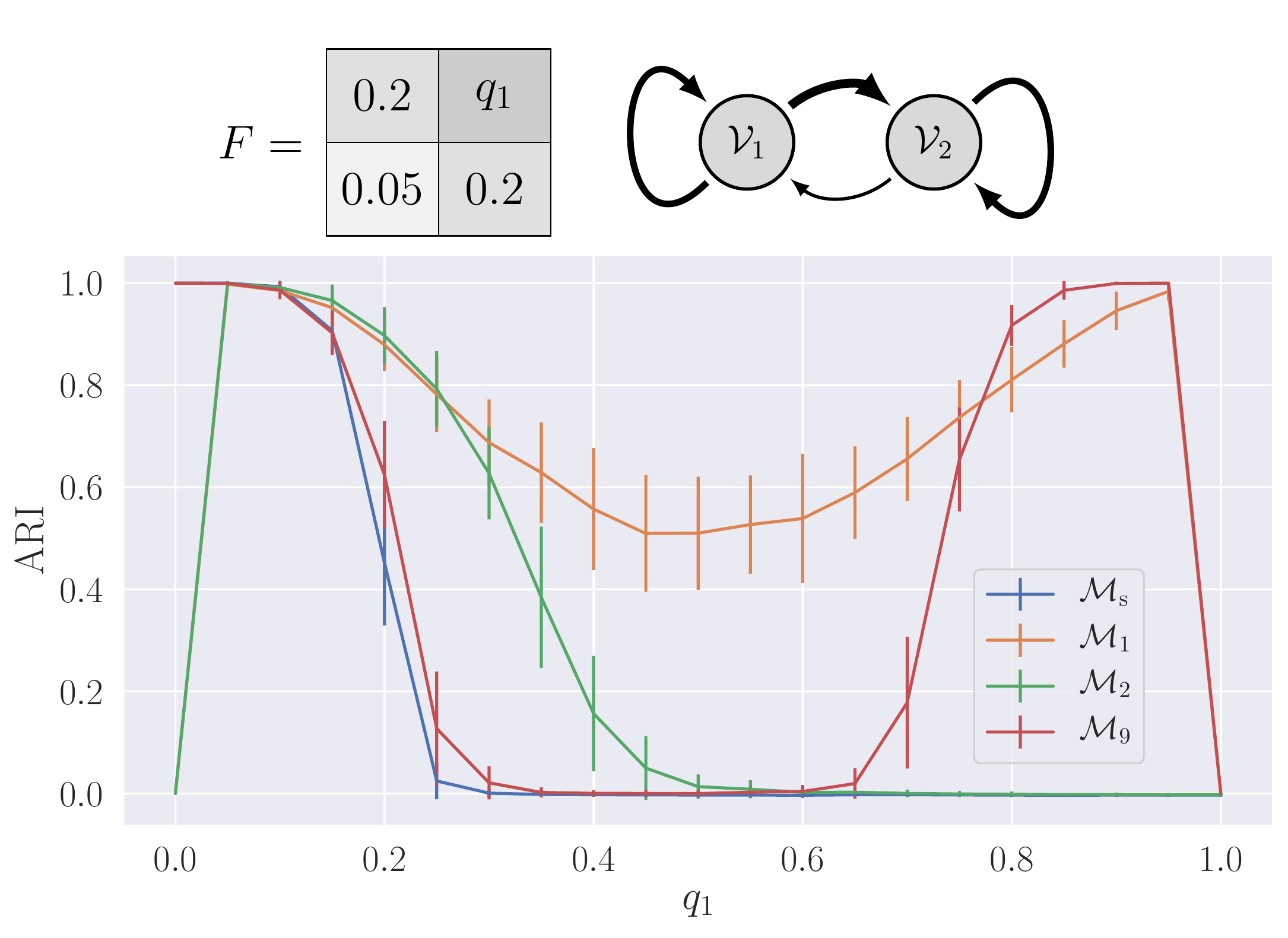}
  \caption{{\bf Top:} DSBM parameters for Example 2 Experiment 1.
    Left panel, the connection matrix for this model (\Cref{eq:unweighted}).
    Right panel, a schematic diagram of the structure: larger arrows represent connections with higher probabilities.
    {\bf Bottom:}
    Performance on this benchmark using \green{structural} MAMs. We compare with standard symmetrization $\mathcal{M}_s$.
    We perform
    $100$
    repeats, and error bars are one sample standard deviation.
  }
  \label{fig:benchmark1}
\end{figure}
}{}

We test the performance of \Cref{alg:motifrwspectclust}
across a few structural motifs with parameters $k=l=2$ on this model
(functional motifs in
\Cref{sec:additionSyn}).
It can be seen that the motifs perform well in different regions,
with $\mathcal{M}_1$ outperforming other methods in the regime $0.3 \le q_1 \le 0.7$,
and also performing well outside this range.
For large values of $q_1$, motif $\mathcal{M}_9$ performs best.
For lower values of $q_1$, $\mathcal{M}_2$ has a slightly higher average ARI,
although the difference is within the standard errors.
This altogether demonstrates the importance of selecting the right motif. Furthermore, we compare against $\mathcal{M}_s$, the symmetrized weighted adjacency matrix, and
each presented motif outperforms this baseline.

We also observe an artifact in this plot: for certain motifs, the performance drops away at
$q_1=0$ (triangles) and at $q_1=1$ (all motifs).
For these parameter values,
the MAM becomes entirely disconnected, with each of the two groups in its own connected component. As our clustering  scheme only considers the largest connected component (\Cref{sec:connected_components}),
we then obtain an ARI of $0$. In real world graphs, we would recommend investigating all reasonably large connected components
\green{or, depending on the application, employing some form of regularization
(see \Cref{sec:motif_polblogs}).}

We note that there is a bi-modality with certain motifs,
notable with $\mathcal{M}_9$ performing well for small and large values of $q_1$, a feature not present within the functional motifs (\Cref{fig:benchmark1FUNC}).
We believe this is due to the fact that structural motifs act as a filter: with high values of $q_1$, it is difficult to form 2-paths ($\mathcal{M}_9$) between groups without also forming
triangular motifs, which would not contribute towards a structural MAM.

\subsubsection{Experiment 2}

In this experiment, we demonstrate the ability of MAMs to uncover different structures.
We construct a DSBM with $8$ groups
of size $100$ ($n=800$)
with block structure given by the diagram
in the left panel of \Cref{fig:benchmark1Fig}:
we place edges which match the block structure with probability $0.5$
(i.e. $F_{ab}=0.5$ if there is an arrow from $a$ to $b$ in the
diagram),
and $0.0001$ otherwise, in order  to maintain connectivity.

\wf{\begin{figure}[t]
  \centering
  \includegraphics[width=0.85\textwidth]{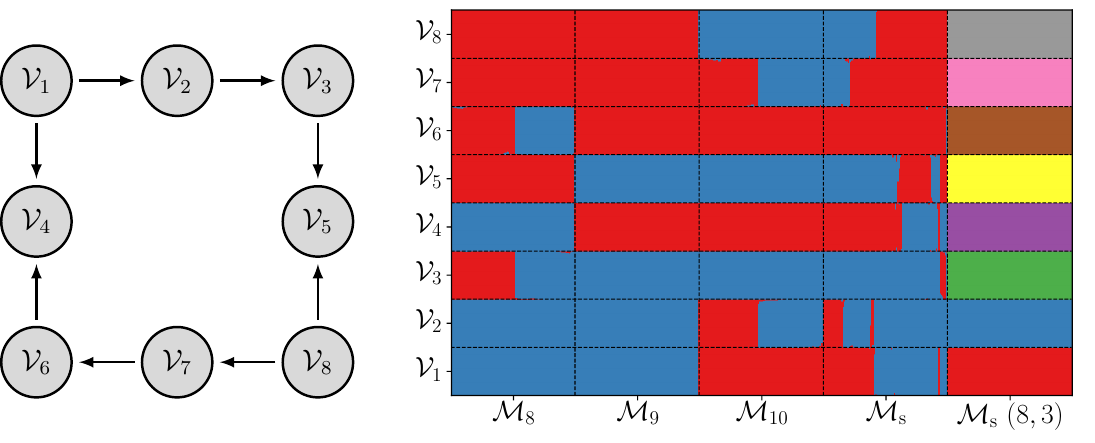}
  \caption{{\bf Left}:
    Diagram of the block structure for Example 2 Experiment 2: edges in the diagram are present with probability $0.5$, all other edges are present with probability $0.0001$ (including edges within blocks).
    {\bf Right}: The detected groups found by each method, with $k=2$ and $l=2$, with the exception of the last column which has $k=8$, $l=3$.
    We test $100$ replicates and present the results as columns in the plot.
    The nodes are ordered by block, and colors represent the group
    allocations in a given block for each motif-based method.
    We order replicates (i.e. columns) within each motif to highlight similarities,
    the columns are sorted within block in order to
    promote
    contiguity of the colors across each block.
    While the clustering assignments may contain some errors,
    the results are relatively robust across replicates.
  }
  \label{fig:benchmark1Fig}
\end{figure}
}{}

Following Example 1, in this DSBM we do not have the usual strongly connected groups:
in fact, each group has a close to zero probability
($0.0001$) of within-group connections.
Thus, while there are clearly $8$ groups with unique connection patterns,
there does not exist a ``correct'' division of the nodes into densely connected groups, but rather, different motifs highlight different structures.

We display the results of using functional motifs with \Cref{alg:motifrwspectclust} with $k=l=2$ in \Cref{fig:benchmark1Fig} (structural results are similar in
\Cref{fig:benchmark1FigStruc}).
Each column represents the structure uncovered by one of the motifs. For robustness, we perform the experiment on $100$ replicates, and place the resulting partitions side by side in each column. For each motif, the columns are sorted within block in order to
promote
contiguity of the colors across each block. 

We observe that (by design) this graph has $3$ different clusterings,
highlighted by different
motifs. First, $\mathcal{M}_8$ (a 2-star with the edges pointing outwards (\Cref{fig:motif_definitions_directed})).
Each pair of connected blocks is connected by this motif.
However, there is a much stronger connection when this motif is part of the higher-order structure. Thus, when we cluster using $\mathcal{M}_8$, we obtain two consistent clusters consisting of
$\{\mathcal{V}_1, \mathcal{V}_2, \mathcal{V}_4\}$
and $\{\mathcal{V}_5, \mathcal{V}_7, \mathcal{V}_8\}$,
each of which is
united
by this motif at a block level.
The remaining blocks without this block level structure ($\mathcal{V}_3$ and
$\mathcal{V}_6$) are then
essentially randomly
placed into one of the two clusters,
giving the behavior observed in
\Cref{fig:benchmark1Fig}.
We observe a similar structure with the other motifs:
$\mathcal{M}_{9}$ (a 2-path) splits the graph into the two groups
characterized by their 2-paths,
$\{\mathcal{V}_1, \mathcal{V}_2, \mathcal{V}_3,\mathcal{V}_5\}$
and
$\{\mathcal{V}_4, \mathcal{V}_6, \mathcal{V}_7,\mathcal{V}_8\}$;
and finally $\mathcal{M}_{10}$ (a 2-star with the edges pointing inwards)
splits the graph into the two clusters based on this structure,  namely,
$\{\mathcal{V}_1, \mathcal{V}_4, \mathcal{V}_6\}$
and
$\{\mathcal{V}_3, \mathcal{V}_5, \mathcal{V}_8\}$
(with a similar behavior to $\mathcal{M}_{8}$ for the remaining blocks).
Thus, we conclude that our MAMs can obtain very different and equally valid structures in the same graph.

Finally, we compare with $\mathcal{M}_{s}$ which is equivalent to clustering with $G+G^\top$. Under
symmetrization, this graph is a ring of indistinguishable blocks. Therefore, the best possible division is to arbitrarily divide the ring into two roughly equal-sized pieces. Considering \Cref{fig:benchmark1Fig}, we observe this behavior with many different divisions, and no pair of blocks is consistently placed within the same cluster.
For completeness we also compare against $\mathcal{M}_s$ with $k=8$ and $l=3$, which divides the network into the 8 blocks of the underlying DSBM.
Although this is a specially constructed example,
it highlights the different equally valid structures that can be uncovered,
emphasizing the importance of considering the correct motif
\green{(see \Cref{sec:motif_choice})}
and demonstrating the value of this procedure above standard spectral clustering.

\subsection{Example 3}
\label{sec:example3}

In our final example,  we demonstrate clustering the source vertices of bipartite networks, using the collider motif as in
\Cref{alg:bipartite_clustering}.
Clustering the destination vertices with the
expander motif is exactly analogous
(by simply reversing edge directions),
so we do not demonstrate it explicitly.
Note also that in bipartite graphs,
all instances of the collider or expander
motifs are structural, so we need
not compare functional and
structural MAMs.

We illustrate this with two experiments.
In the first one, we show that in certain networks, edge weights are important for our collider method to perform well.
In the second experiment,  we compare against an alternative method for clustering weighted
bipartite networks, showing that under certain conditions, our collider method performs better.

\wf{\begin{figure}[t]
    \centering
    \includegraphics[width=0.85\textwidth]{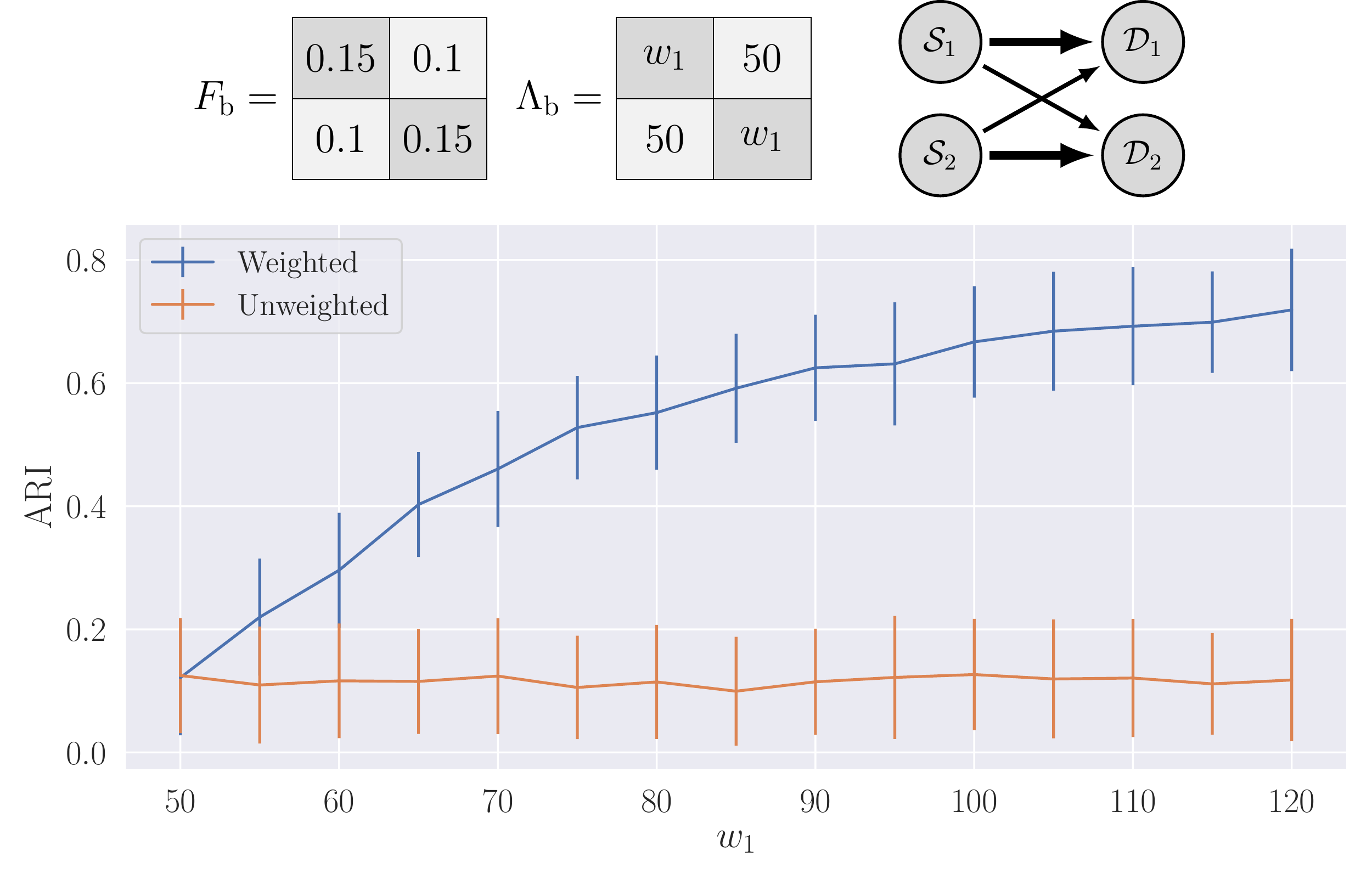}
    \caption{
      \textbf{Top:} BSBM parameters for Example 3 Experiment 1.
      Left panel is the connection matrix, middle panel is the weight matrix (see \Cref{eq:weighted} and \Cref{sec:bipartite_sbm} for details),
      and right panel is a schematic diagram of the structure.
      \textbf{Bottom:} performance of our bipartite collider method with and without edge weights.
      We perform 100 repeats,
      and error bars are one sample standard
      deviation.
    }
    \label{fig:bipartite_weight}
  \end{figure}
}{}

\subsubsection{Experiment 1}

This example illustrates the advantages of using weights for bipartite clustering. We use a weighted BSBM with block counts
$k_\ca{S} = k_\ca{D} = 2$,
block sizes
$n_\ca{S}^1 = n_\ca{S}^2 = n_\ca{D}^1 = n_\ca{D}^2 = 100$,
bipartite connection matrix
$F_\mathrm{b} =
\begin{psmallmatrix}
0.15 & 0.1 \\
0.1 & 0.15 \\
\end{psmallmatrix}$,
and bipartite weight matrix
$\Lambda_\mathrm{b} =
\begin{psmallmatrix}
w_1 & 50 \\
50 & w_1 \\
\end{psmallmatrix}$,
where $w_1$
is a varying parameter.
This network has been constructed so that
vertices in
$\ca{S}_1$
are slightly more likely to connect to
vertices in
$\ca{D}_1$
than to vertices in
$\ca{D}_2$,
and vice versa for
$\ca{S}_2$.
This makes
the source vertex clusters
$\ca{S}_1$
and
$\ca{S}_2$
weakly distinguishable when not considering
edge weights.
However, the edge weights exhibit the same preferences,
and this effect becomes stronger as $w_1$
increases.

We test the performance of \Cref{alg:bipartite_clustering}
for clustering the source vertices (using the collider motif),
with parameters $k_\ca{S} = l_\ca{S} = 2$ on this model.
We compare the results against those obtained by ignoring the edge weights.
\Cref{fig:bipartite_weight} shows that the weights in this model
allow the structure to be recovered well when the weighting effect is large enough.

\wf{\begin{figure}[t]
    \centering
    \includegraphics[width=0.85\textwidth]{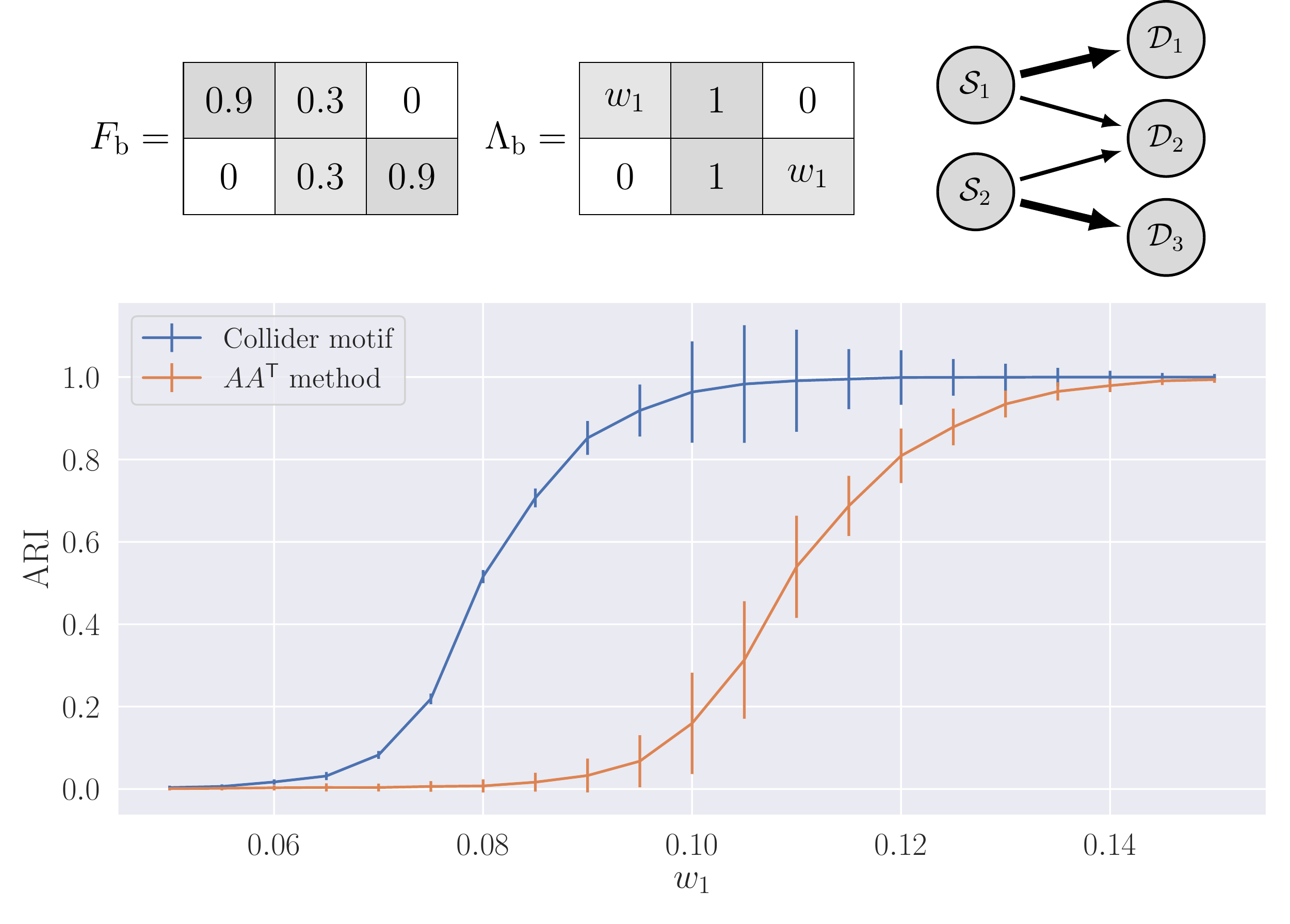}
    \caption{
      \textbf{Top:} BSBM parameters for Example 3 Experiment 2.
      Left panel is the connection matrix,
      middle panel is the weight matrix (see \Cref{eq:weighted} and \Cref{sec:bipartite_sbm} for details),
      right panel is a diagram of the block structure, with arrow widths representing
      connection probabilities.
      \textbf{Bottom:} performance of our bipartite collider method vs. clustering based on $AA^\mathsf{T}$.
      We perform 100 repeats,
      and error bars are one sample standard
      deviation.
    }
    \label{fig:bipartite_motif}
  \end{figure}
}{}

\subsubsection{Experiment 2}

In this final experiment,
we show the advantages of using the collider motif
over simply clustering using the matrix
$A A^\mathsf{T}$, where
$A \in [0,\infty)^{|\ca{S}| \times |\ca{D}|}$
is the weighted bipartite adjacency matrix of the graph
\citep{chessa2014cluster}.
We use a weighted BSBM with
block counts
$k_\ca{S} = 2$,
$k_\ca{D} = 3$,
block sizes
$n_\ca{S}^1 = n_\ca{S}^2 = n_\ca{D}^1 = n_\ca{D}^2 = n_\ca{D}^3 = 200$,
bipartite connection matrix
$F_\mathrm{b} =
\begin{psmallmatrix}
0.9 & 0.3 & 0 \\
0 & 0.3 & 0.9 \\
\end{psmallmatrix}$,
and bipartite weight matrix
$\Lambda_\mathrm{b} =
\begin{psmallmatrix}
w_1 & 1 & 0 \\
0 & 1 & w_1 \\
\end{psmallmatrix}$,
where $w_1$
is a varying parameter.

This model has been constructed such that vertices in
$\ca{S}_1$
are more likely to be connected to
$\ca{D}_1$
than to
$\ca{D}_2$,
and similarly vertices in
$\ca{S}_2$
are more likely to be connected to
$\ca{D}_3$
than to
$\ca{D}_2$.
However, the weights show a reverse preference (for $w_1 < 1$, as in our model),
with larger weights assigned to the
lower-probability edges.
Note that in this regime where weights are small,
there is a significant probability of the Poisson weight variables
being zero,
removing edges which might otherwise have had a high
probability of existing.

The results are shown in
\Cref{fig:bipartite_motif},
which illustrates that our collider method
(\Cref{alg:bipartite_clustering}
with $k_\ca{S} = l_\ca{S} = 2$)
is better at distinguishing the
source vertex clusters
$\ca{S}_1$
and
$\ca{S}_2$
compared to the method based on  $AA^\mathsf{T}$.
Although it must be made clear that the model has been specifically constructed to do this, and that this effect occurs only for a limited parameter set, it gives an example of the different behavior observed when using a product-based weight formulation
(\Cref{sec:weighted_mams}).
\green{We explore this further in real world data in \Cref{sec:languages}}.
For our collider method,
the expected similarity between two source vertices in
the same cluster directly depends on
$w_1$,
while the expected similarity between two source vertices in
different clusters is fixed.
Hence for large enough values of $w_1$,
our collider method performs well.
However, for the $AA^\mathsf{T}$ method,
the expected similarity between two source vertices in
the same cluster depends on
$w_1^2$
(since $AA^\mathsf{T}$ contains squared edge weights),
which in this model is small
(as $w_1 < 1$).
Hence the $AA^\mathsf{T}$ method does not perform so well
on this model.

\ja{}{\pagebreak}

\section{Applications to Real World Data}
\label{sec:realWorld}
This section details the
results of our proposed methodology
on a number of directed networks arising from real world data sets.
We show that weighted edges and motif-based clustering
allow
\green{various}
structures to be uncovered in migration data
with the \textsc{US-Migration} network,
while the \textsc{US-Political-Blogs} network demonstrates how motif-based methods
can control misclassification error for weakly-connected vertices,
\green{how sweep profiles can be used to select a motif,
and how regularization can affect clustering.}
Finally, we consider the bipartite territory-to-language network, \textsc{Unicode-Languages},
and show that
\green{when using our}
bipartite clustering method,
\green{mean-weighted motifs can produce more desirable
clusters than their unweighted or product-weighted counterparts.}

\subsection{\textsc{US-Migration} network} \label{sec:motif_migration}

The first data set is the \textsc{US-Migration} network \citep{census2000}, consisting of data collected during the US Census in 2000.
Vertices represent the 3075 counties in 49 contiguous states (excluding Alaska and Hawaii, and including the District of Columbia).
The $721\,432$ weighted directed edges represent the number of people migrating from county to county, capped at $10 \, 000$ (the 99.9th percentile) to control large entries, as in \cite{DirectedClustImbCuts}.

\wf{\begin{figure}[t]
  \centering
  \includegraphics[width=0.9\textwidth]{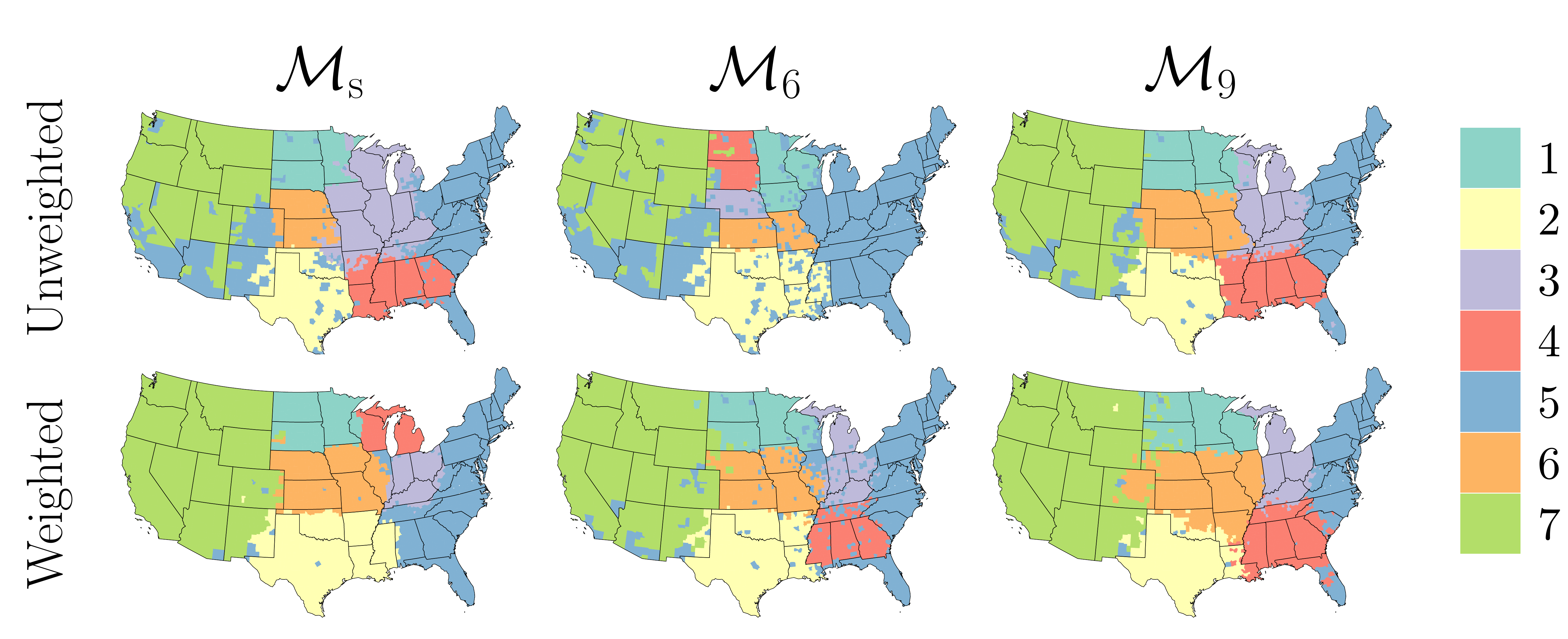}
  \caption{\green{Motif-based clusterings of the \textsc{US-Migration} network, for various functional motifs.
      Each county (node) is colored by its cluster allocation.
      {\bf Top row}: clusterings from the unweighted graph.
      {\bf Bottom row}: clusterings from the weighted graph.
      $\ca{M}_\mathrm{s}$ corresponds to using the symmetrized adjacency matrix.}}
  \label{fig:us_migration}
\end{figure}
}{}

We test the performance of \Cref{alg:motifrwspectclust} with three selected motifs:
$\ca{M}_\mathrm{s}$, $\ca{M}_6$ and $\ca{M}_9$ (see \Cref{fig:motif_definitions_directed}).
$\ca{M}_\mathrm{s}$ gives a spectral clustering method with naive symmetrization.
$\ca{M}_6$ represents a pair of counties exchanging migrants, with both also receiving migrants from a third.
$\ca{M}_9$ is a path of length two, allowing counties to be deemed similar if
they both lie on a heavy-weighted 2-path in the migration network.
\green{These motifs give a representative sample,
including a triangle and a non-triangle motif,
and results with the other motifs
(both functional and structural)
can be found in
Figures \ref{fig:appendix_migration_clusts_func}
and \ref{fig:appendix_migration_clusts_struc}
in \Cref{sec:additionReal}.}

\Cref{fig:us_migration} plots maps of the US,
with counties colored
by the clustering obtained using \Cref{alg:motifrwspectclust} with $k=l=7$.
The top row shows clusterings from the unweighted graph,
while those in the bottom row are from the weighted graph.
The advantage of using weighted graphs is clear,
with the clustering showing better spatial coherence
(compared  with the fragmented blue clusters produced by the unweighted graph).
Furthermore,
\green{as suggested in \Cref{sec:motif_choice}},
the different motifs induce different similarity measures, and thus uncover distinct structures.
For example, \green{weighted} $\ca{M}_6$ and $\ca{M}_9$ both allocate a cluster to the southern states of Mississippi, Alabama, Georgia and Tennessee,
while \green{weighted} $\ca{M}_s$ identifies the northern states of Michigan and Wisconsin.
Also, \green{weighted} $\ca{M}_9$ favors a larger ``central'' region, which includes significant parts of Colorado, Oklahoma, Arkansas and Illinois.

\green{It is also interesting to investigate the cut imbalance ratio
(\Cref{sec:cut_imbalance_ratio})
\citep{DirectedClustImbCuts}
associated with these clusters.
For each pair of clusters, it measures the imbalance of migration flow from one to the other.
\Cref{fig:migration_flow} indicates, for each method,  the first (resp. second) pair of clusters which exhibit the largest (resp. second largest) cut imbalance ratio, with migration mostly occurring from the red counties to the blue counties. We note that the domestic migration report from this census \citep{census2000report}
states that the South had the most in-migration and out-migration, with in-migration accounting for over 60\% of the region's total.
In \Cref{fig:migration_flow}, this is only consistently observed (by coloring the South blue) when using weighted three-node motifs.}
\green{The report also states that out-migration accounts for  over 64\% of migration involving the Northeast, and this
is witnessed only by the motif $\ca{M}_9$ (by coloring the Northeast red).}
\green{It is also worth mentioning that the pattern
of flow from Northeast to South
uncovered by
$\ca{M}_9$ on the unweighted graph (top pair) and the weighted graph (second pair) bears significant resemblance to the topmost imbalanced structure uncovered by the
\textsc{Herm-Sym} algorithm in Figure 16 within \cite{DirectedClustImbCuts}.}

\green{\Cref{fig:migration_heatmaps}
displays the full cut imbalance ratio matrices
(\Cref{sec:cut_imbalance_ratio})
associated with
each method.
The $(i,j)$th matrix entry is positive if there is
more migration flow from cluster $i$ to cluster $j$
than from cluster $j$ to cluster $i$
(yielding an antisymmetric matrix),
and a value of $\pm 1/2$ indicates uni-directional flow.
We see that weighted $\ca{M}_6$ and $\ca{M}_9$
produce clear ``stripes'' of blue and red in the matrices,
where they identify clusters which have
more imbalanced migration.
This again highlights the ability of weighted motifs
to uncover hidden structures
(in this case corresponding to large-scale migration patterns).}

\wf{\begin{figure}[t]
    \centering
    \includegraphics[width=0.7\textwidth]{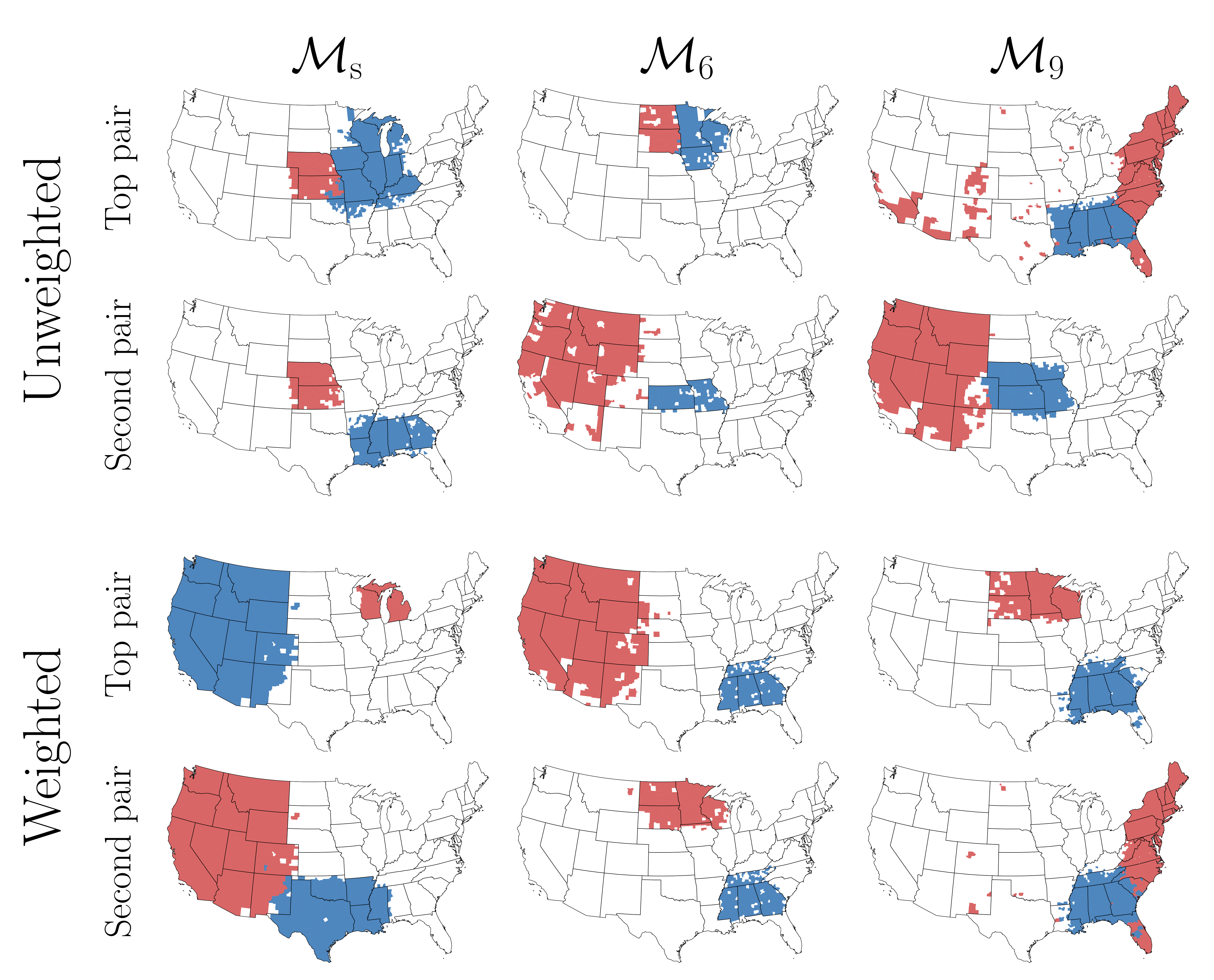}
    \caption{\green{Cut imbalance for functional motif-based clusterings of the \textsc{US-Migration} network, for various motifs.
        For each method, the pairs of clusters with largest
        and second largest cut imbalance ratio are
        colored, with most migration flow occurring from red to blue.
        {\bf Top two rows:} clusterings from the unweighted graph.
        {\bf Bottom two rows:} clusterings from the weighted graph.
        $\ca{M}_\mathrm{s}$ corresponds to using the symmetrized adjacency matrix.}}
    \label{fig:migration_flow}
  \end{figure}
}{}

\wf{\begin{figure}[H]
    \centering
    \includegraphics[width=0.8\textwidth]{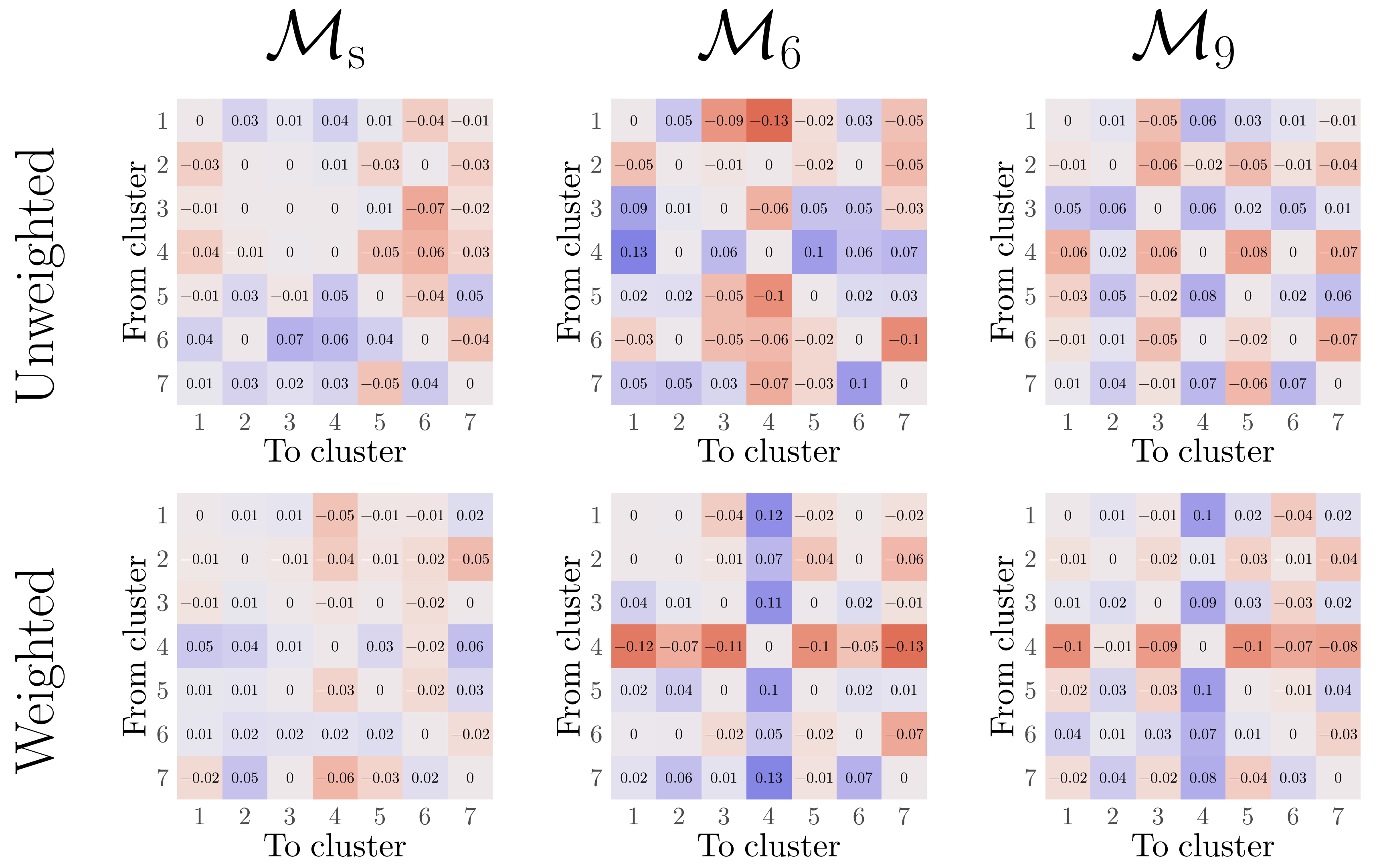}
    \caption{\green{Cut imbalance ratio matrices for functional motif-based clusterings of the \textsc{US-Migration} network, for various motifs.
        Entry $(i,j)$ is positive if most of the flow is in the direction $i \to j$.
        {\bf Top row:} clusterings from the unweighted graph.
        {\bf Bottom row:} clusterings from the weighted graph.
        $\ca{M}_\mathrm{s}$ corresponds to using the symmetrized adjacency matrix.} }
    \label{fig:migration_heatmaps}
  \end{figure}
}{}

\wf{\begin{figure}[t]
  \centering
		\includegraphics[width=0.95\textwidth]{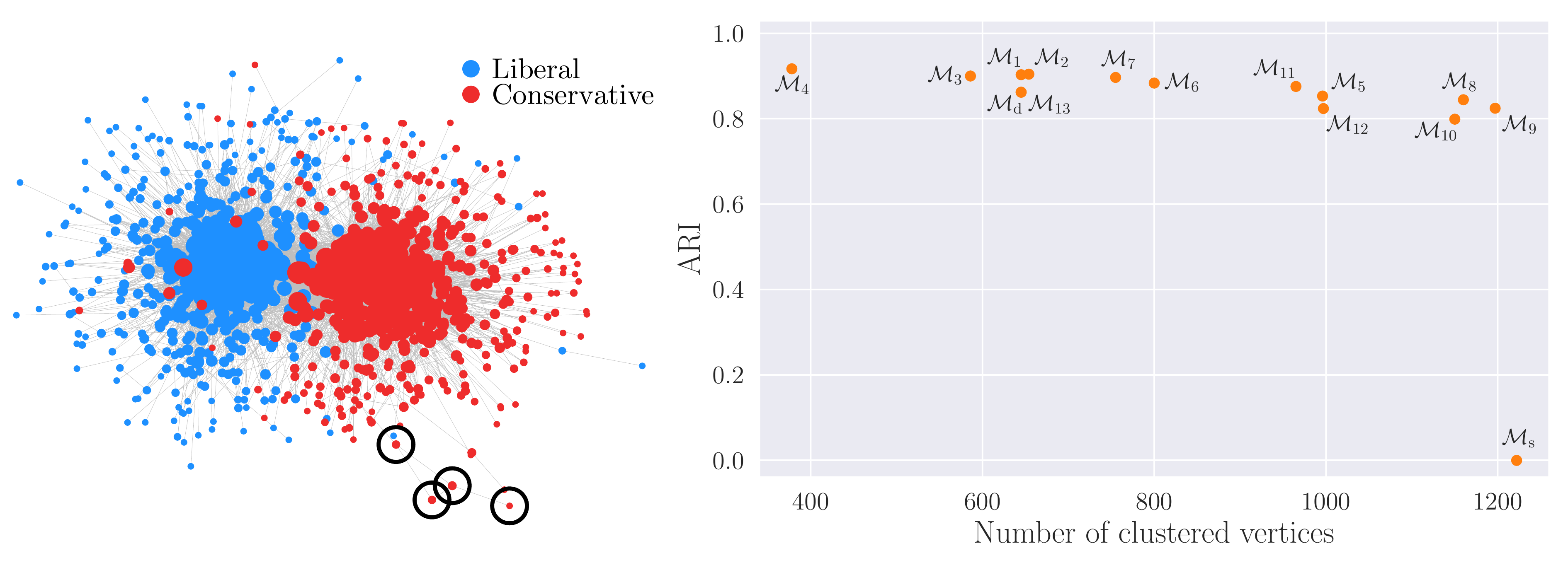}\caption{{\bf Left:} the \textsc{US-Political-Blogs} network.
	{\bf Right:} ARI score against number of clustered vertices, across various motifs.
	}
	\label{fig:polblogs}
\end{figure}
}{}

\wf{\pagebreak}{}

\subsection{\textsc{US-Political-Blogs} network} \label{sec:motif_polblogs}

The \textsc{US-Political-Blogs} network \citep{adamic2005political}
consists of data collected two months before the 2004 US election.
Vertices represent blogs, and are labeled by their political leaning (``liberal'' or ``conservative'').
Weighted directed edges represent the number of citations from one blog to another.
After restricting to the largest connected component,
there are $536$ liberal blogs, $636$ conservative blogs (total $1222$) and $19 \, 024$ edges.
The network is plotted in \Cref{fig:polblogs}.

\green{Firstly we use sweep profiles \citep{shi2000normalized}
with a few selected motifs to motivate the use of motif-based clustering on this network.
To construct a sweep profile,
we order the vertices according to the first non-trivial
eigenvector of the random-walk Laplacian,
and select a splitting point based on minimizing an objective function;
in this case the Ncut score
(see~\Cref{chap:spectral}).
\Cref{fig:polblogs_sweep_profile}
exhibits visibly sharper minima for motifs
$\ca{M}_3$ and $\ca{M}_8$
than for $\ca{M}_s$
(which corresponds to traditional spectral clustering),
indicating that motif-based methods can produce more
well-defined clusters
(for the vertices contained in the largest connected component of the MAM)
than traditional methods on this network.}

\wf{\begin{figure}[H]
    \centering
    \includegraphics[scale=0.32]{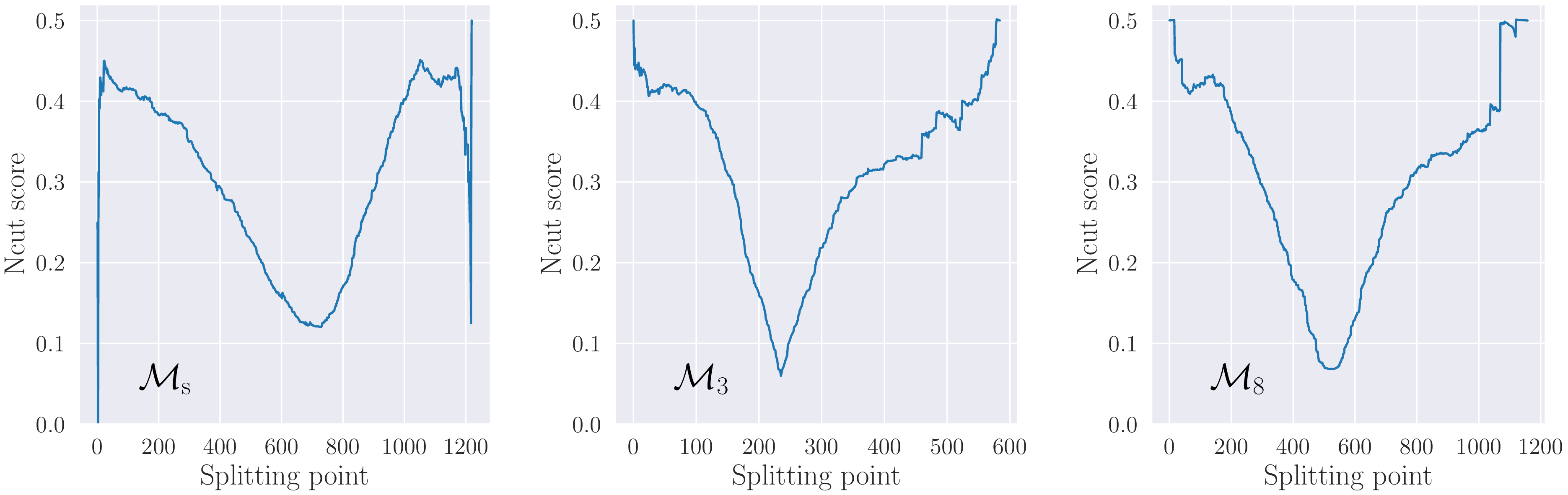}
    \caption{\green{Sweep profiles for different motifs on the \textsc{US-Political-Blogs} network.
        $\ca{M}_\mathrm{s}$ corresponds to using the symmetrized adjacency matrix.}}
    \label{fig:polblogs_sweep_profile}
  \end{figure}
}{}

In fact,
traditional two-cluster spectral clustering performs very poorly on this network,
with one of the clusters containing just four vertices
(circled in \Cref{fig:polblogs}),
which are very weakly connected to the rest of the graph.
However, motif-based clustering performs significantly better.
\green{As discussed in
\Cref{sec:connected_components},
our framework
tries to reduce the misclassification error by only clustering the nodes
which are in the largest connected component of the MAM,
and are therefore in some sense well connected to the graph.}
\green{In this dataset,} the choice of motif determines the trade-off between the number of vertices assigned to clusters,
and the accuracy of those clusters
(see \Cref{fig:polblogs} for a plot of ARI score against number of vertices clustered).
\green{For example,}
motif $\ca{M}_9$ clusters 1197 vertices with an ARI of 0.82,
while the more strongly connected $\ca{M}_4$ only clusters $378$ vertices, with an improved ARI of 0.92.
This is because the largest connected component of the MAM will be smaller for more strongly-connected motifs such as $\ca{M}_4$.

\green{To explore the classification
performance of our
reduced clustering method,
we compare
its effectiveness
with other approaches from the literature.}
\green{\Cref{tab:polblogs_performance} gives the ARI,
number of clustered vertices,
normalized mutual information (NMI) \citep{nguyen2009information},
and percentage error
for various motifs on this network.
We note that both of the motifs $\ca{M}_3$
and $\ca{M}_8$
outperform the NMI score of 0.72 given by
the degree-corrected stochastic blockmodel in
\cite{karrer2011stochastic},
and improve on the percentage error of 4.66\% given by
the normalized spectral clustering procedure in
\cite{li2017divided},
at the expense of not assigning every vertex to a cluster.}

\wf{{
\begin{table}[H]
\centering
\small
\begin{tabular}{|c|c|c|c|c|}

\hline

Motif & ARI & Number of clustered vertices & NMI & Percentage error \\

\hline

$\ca{M}_\mathrm{s}$ &
0.00 &
1222 &
0.00 &
48.2\% \\
$\ca{M}_3$ &
0.90 &
586 &
0.83 &
2.6\% \\
$\ca{M}_8$ &
0.84 &
1160 &
0.75 &
4.1\% \\

\hline

\end{tabular}
\vspace{2mm}
\caption{\green{Performance of motif-based clustering on the \textsc{US-Political-Blogs} network.}}
\label{tab:polblogs_performance}
\end{table}
}
}{}

\green{We further compare our
proposed method
to an alternative approach
for dealing with disconnected vertices;
namely regularization.}
\green{We follow the method suggested by
\cite{joseph2016impact}
and
\cite{zhang2018understanding}
using the
regularized Laplacian
$ L^\tau_\mathrm{rw} = I - (D + \tau I)^{-1} (G + \tau {\bf 1} / n)$,
where ${\bf 1}$ is a matrix of all ones
and $\tau$ is a tuning parameter.
\Cref{tab:polblogs_reg}
shows how
using this regularized Laplacian
with $\tau \sim \bar{d}$
(the mean weighted vertex degree),
can improve the ARI performance of traditional
(top row)
and motif-based (second and third rows)
spectral clustering
while still assigning all of the vertices to clusters.
However this regularization method is unable to reach the ARI scores of over
$0.9$ achieved by our proposed method of
motif-based clustering on the largest connected component
of the MAM
(\Cref{tab:polblogs_performance}).
Furthermore,
regularization is sensitive to the
choice of tuning parameter $\tau$,
and the default setting of
$\tau=\bar{d}$
suggested by
\cite{zhang2018understanding}
and
\cite{qin2013regularized}
performs rather poorly here.}

\wf{{
\setlength{\tabcolsep}{1.8mm}
\begin{table}[H]
\centering
\scriptsize
\begin{tabular}{|c|c|c|c|c|c|c|c|c|c|c|c|c|c|}

\hline
$\nicefrac{\tau}{\bar{d}}$ & 10 & 5 & 2 & 1 & 0.5 & 0.2 & 0.1 & 0.05 & 0.02 & 0.01 & 0.005 & 0.002 & 0.001 \\
\hline
$\ca{M}_\mathrm{s}$
& 0.08
& 0.14
& 0.28
& 0.36
& 0.49
& 0.81
& 0.80
& 0.81
& 0.81
& 0.82
& 0.82
& 0.00
& 0.00
\\
$\ca{M}_\mathrm{3}$
& 0.00
& 0.00
& 0.01
& 0.02
& 0.04
& 0.10
& 0.14
& 0.17
& 0.24
& 0.25
& 0.25
& 0.00
& 0.00
\\
$\ca{M}_\mathrm{8}$
& 0.05
& 0.07
& 0.13
& 0.28
& 0.49
& 0.58
& 0.62
& 0.65
& 0.71
& 0.75
& 0.80
& 0.80
& 0.80
\\
\hline

\end{tabular}
\caption{\green{ARI scores for regularized traditional and motif-based spectral clustering on the \textsc{US-Political-Blogs} network.}}
\label{tab:polblogs_reg}
\end{table}
}
}{}

\subsection{\textsc{Unicode-Languages} bipartite network}
\label{sec:languages}

The \textsc{Unicode-Languages} network \citep{konect:unicodelang} is a bipartite network
consisting of data collected in 2014 on languages spoken around the world.
Source vertices are territories, and destination vertices are languages.
Weighted edges from territory to language indicate the number of inhabitants
\citep{geonames}
in that territory who speak the specified language.
After removing territories with fewer than one million inhabitants, and
languages with fewer than one million speakers,
there are $155$ territories, $270$ languages and $705$ edges remaining.
We then restrict the network to its largest connected component,
and in doing so remove the territories of
Laos, Norway and Timor-Leste,
and the languages of
Lao, Norwegian Bokm{\aa}l and Norwegian Nynorsk.
We test \Cref{alg:bipartite_clustering} with parameters $k_\ca{S} = l_\ca{S} = k_\ca{D} = l_\ca{D} = 6$ on this network
\green{(such that each cluster in
\Cref{tab:bipartite_languages_source_clusters}
contains at least $1\%$ of the world's population),}
comparing three different motif weighting schemes:
unweighted motifs,
mean-weighted motifs
and product-weighted motifs,
as in \Cref{sec:weighted_mams}.
For the source vertices,
\Cref{fig:bipartite_languages_map} plots maps of the world with territories colored by the recovered clustering.

\green{Focusing first
purely on the
clusters produced by
mean-weighted motifs,
we observe
balanced clusters with good spatial coherence
(middle panel in \Cref{fig:bipartite_languages_map}),
from which some language groups can be easily recognized.}
The top 20 territories (by population) in each cluster
for this mean-weighted scheme
are given in \Cref{tab:bipartite_languages_source_clusters}.
Cluster~1 is by far the largest cluster,
and includes a wide variety of territories.
Cluster~2 contains several Persian-speaking and African French-speaking territories,
while Cluster~3 mostly captures Spanish-speaking territories in the Americas.
Cluster~4 includes the Slavic territories of Russia and some of its neighbors.
Cluster~5 covers China, Hong Kong, Mongolia
and some of South-East Asia.
Cluster~6 is the smallest cluster and contains only Japan and the Koreas.

\wf{\renewcommand{\arraystretch}{1}
\begin{table}[H]
  \centering
  \scriptsize
	\begin{tabular}{ |@{ }c@{ }|@{ }c@{ }|@{ }c@{ }|@{ }c@{ }|@{ }c@{ }|@{ }c@{ }| }
		\hline
		\rule{0pt}{1.2em}
    Cluster 1 & Cluster 2 & Cluster 3 & Cluster 4 & Cluster 5 & Cluster 6  \\[0.1cm]
    \hline \rule{0pt}{1.2em} India & Iran & Mexico & Russia & China & Japan  \\
    United States & DR Congo & Colombia & Ukraine & Indonesia & S. Korea  \\
    Brazil & Afghanistan & Argentina & Uzbekistan & Vietnam & N. Korea  \\
    Pakistan & Saudi Arabia & Peru & Belarus & Malaysia &   \\
    Bangladesh & Syria & Venezuela & Tajikistan & Taiwan &   \\
    Nigeria & C\^ote d'Ivoire & Ecuador & Kyrgyzstan & Cambodia &   \\
    Philippines & Burkina Faso & Guatemala & Turkmenistan & Hong Kong &   \\
    Ethiopia & Niger & Cuba & Georgia & Singapore &   \\
    Germany & Mali & Bolivia & Moldova & Panama &   \\
    Egypt & Senegal & Paraguay & Latvia & Mongolia &   \\
    Turkey & Tunisia & El Salvador & Estonia &  &   \\
    Thailand & Chad & Nicaragua &  &  &   \\
    France & Guinea & Costa Rica &  &  &   \\
    United Kingdom & Somalia & Uruguay &  &  &   \\
    Italy & Burundi & Eq. Guinea &  &  &   \\
    Myanmar & Haiti &  &  &  &   \\
    South Africa & Benin &  &  &  &   \\
    Spain & Azerbaijan &  &  &  &   \\
    Tanzania & Togo &  &  &  &   \\
    Kenya & Libya &  &  &  &   \\
    $\cdots$ & $\cdots$ &  &  &  &   \\
    $|$Cluster 1$|$ = 87 & $|$Cluster 2$|$ = 29 & $|$Cluster 3$|$ = 15
    & $|$Cluster 4$|$ = 11 & $|$Cluster 5$|$ = 10 & $|$Cluster 6$|$ = 3 \\[0.1cm]
		\hline
	\end{tabular}
  \vspace{2mm}
  \caption{Clustering the territories from the \textsc{Unicode-Languages} network
  \green{using the mean-weighted collider motif.}}
  \label{tab:bipartite_languages_source_clusters}
\end{table}
}{}

For the destination vertices, we present the six clusters obtained by \Cref{alg:bipartite_clustering} with mean-weighted motifs.
\Cref{tab:bipartite_languages_dest_clusters} contains the top 20 languages (by number of speakers) in each cluster.
Cluster~1 is the largest cluster and contains some European languages and dialects of Arabic.
Cluster~2 is also large and includes English, as well as several South Asian languages.
Cluster~3 consists of many indigenous African languages.
Cluster~4 captures languages from South-East Asia, mostly spoken in Indonesia and Malaysia.
Cluster~5 identifies several varieties of Chinese and a few other Central and East Asian languages.
Cluster~6 captures more South-East Asian languages, this time from Thailand, Myanmar and Cambodia.

\wf{\renewcommand{\arraystretch}{1}
\begin{table}[H]
  \centering
  \scriptsize
	\begin{tabular}{ |@{ }c@{ }|@{ }c@{ }|@{ }c@{ }|@{ }c@{ }|@{ }c@{ }|@{ }c@{ }| }
		\hline
		\rule{0pt}{1.2em}
    Cluster 1 & Cluster 2 & Cluster 3 & Cluster 4 & Cluster 5 & Cluster 6 \\[0.1cm]
    \hline \rule{0pt}{1.2em} Spanish & English & Swahili & Indonesian & Chinese & Thai  \\
    Arabic & Hindi & Kinyarwanda & Javanese & Wu Chinese & N.E. Thai  \\
    Portuguese & Bengali & Somali & Malay & Korean & Khmer  \\
    French & Urdu & Luba-Lulua & Sundanese & Xiang Chinese & N. Thai  \\
    Russian & Punjabi & Kikuyu & Madurese & Hakka Chinese & S. Thai  \\
    Japanese & Telugu & Congo Swahili & Minangkabau & Minnan Chinese & Shan  \\
    German & Marathi & Luyia & Betawi & Gan Chinese & Pattani Malay  \\
    Turkish & Vietnamese & Ganda & Balinese & Kazakh &   \\
    Persian & Tamil & Luo & Buginese & Uighur &   \\
    Italian & Lahnda & Sukuma & Banjar & Sichuan Yi &   \\
    Egyptian Arabic & Filipino & Kalenjin & Achinese & Mongolian &   \\
    Polish & Gujarati & Lingala & Sasak & Zhuang &   \\
    Nigerian Pidgin & Kannada & Nyankole & Makasar & Tibetan &   \\
    Ukrainian & Pushto & Gusii & Lampung Api &  &   \\
    Dutch & Malayalam & Kiga & Rejang &  &   \\
    Algerian Arabic & Oriya & Soga &  &  &   \\
    Moroccan Arabic & Burmese & Luba-Katanga &  &  &   \\
    Hausa & Bhojpuri & Meru &  &  &   \\
    Azerbaijani & Amharic & Teso &  &  &   \\
    Uzbek & Oromo & Nyamwezi &  &  &   \\
    $\cdots$ & $\cdots$ & $\cdots$ &  &  &   \\
    $|$Cluster 1$|$ = 120 & $|$Cluster 2$|$ = 90 & $|$Cluster 3$|$ = 25
    & $|$Cluster 4$|$ = 15 & $|$Cluster 5$|$ = 13 & $|$Cluster 6$|$ = 7 \\[0.1cm]
		\hline
	\end{tabular}
  \vspace{2mm}
  \caption{Clustering the languages from the \textsc{Unicode-Languages} network
  \green{using the mean-weighted expander motif.}}
  \label{tab:bipartite_languages_dest_clusters}
\end{table}
}{}

\green{Next, we
  return to the source vertices (territories) and compare the mean-weighted scheme
  with the unweighted and product-weighted schemes, as illustrated in
  \Cref{fig:bipartite_languages_map}. Unlike the mean-weighted approach, the unweighted partition fails to identify the Chinese-speaking territories, which is likely to be related to the fact that this scheme is unable to account for the large number of speakers in China. Further, it does not identify Mexico and Argentina with other Spanish-speaking
  territories in the Americas.
  Finally,
the product-weighted motifs make
  no distinctions at all between territories in the Americas,
  Africa and Western Europe,
  and instead identify a few much smaller clusters.}

\wf{\begin{figure}[H]
    \centering
    \includegraphics[width=0.73\textwidth]{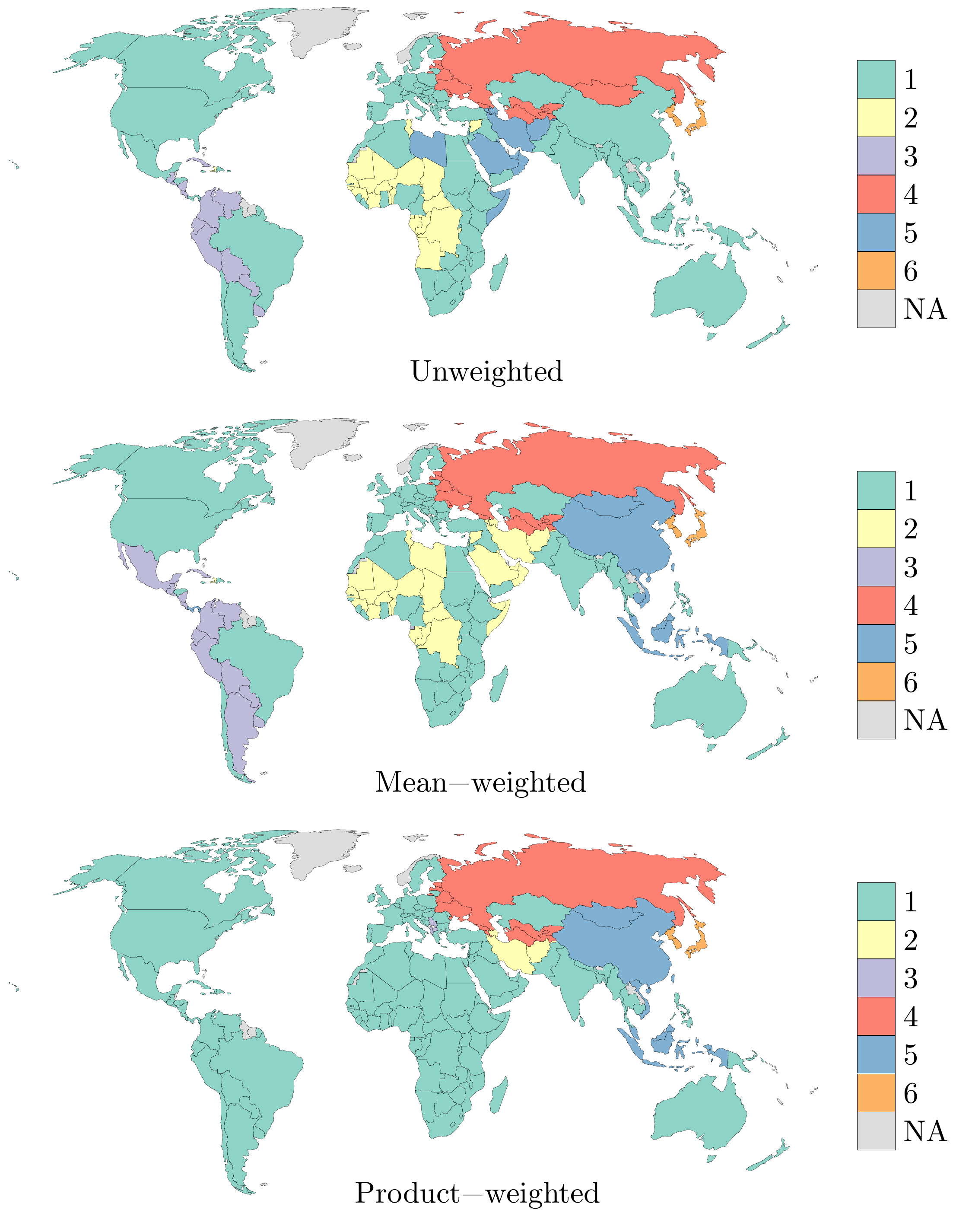}
    \caption{Clustering the territories from the \textsc{Unicode-Languages} network,
      \green{using the collider motif.
        {\bf Top:} clustering the unweighted graph.
        {\bf Middle:} motifs weighted using mean edge weight.
        {\bf Bottom:} motifs weighted using product of edge weights.}
    }
    \label{fig:bipartite_languages_map}
  \end{figure}
}{}
 \ja{}{\pagebreak}

\section{Conclusion and Future Work} \label{chap:conclusions}

\paragraph{Contribution}
We have
introduced
generalizations of MAMs to weighted directed graphs
and presented new matrix-based formulae for calculating them,
which are easy to implement and scalable.
By leveraging the popular random-walk spectral clustering algorithm,
we have proposed motif-based techniques for clustering both weighted directed graphs and weighted
bipartite graphs.
We demonstrated
using
synthetic data examples that accounting for edge weights and higher-order structure can be essential for good cluster recovery in directed and bipartite graphs,
and that different motifs can uncover different but equally insightful clusters.
Applications to real world networks have shown that weighted motif-based clustering can reveal a variety of different structures,
and that it can reduce misclassification rate,
\green{performing favorably
in comparison with other methods from the literature.}

\paragraph{Future work}

There are limitations to our proposed methodology,
which raise a number of research avenues to explore.
While our matrix-based formulae for MAMs are simple to implement and scalable,
there is scope for further computational improvement.
As mentioned in~\citep{benson2016higher}, fast triangle enumeration  algorithms~\citep{demeyer2013ISMA,wernicke2006efficient,wernicke2006fanmod}
may offer superior performance at scale for triangular motifs, but MAMs for motifs with missing edges are inherently denser.
The implementation could also be optimized to further exploit any sparsity structures in the network, rather than relying on general-purpose linear algebra libraries.
\green{More recent work on
combinatorial graphlet counting
\citep{ahmed2015efficient}
or vertex orbit counting
\citep{pashanasangi2020efficiently}
could also potentially yield further computational improvements.}
\green{Another shortcoming of our matrix-based formulae is that,
unlike motif detection algorithms such as
\textsc{FANMOD}
\citep{wernicke2006fanmod},
they do not easily extend to motifs on four or more vertices.
While we believe that this extension
is in principle possible using tensor
operations rather than matrix operations,
it is currently beyond the scope of the work.}

We have seen that choosing the correct motif
for clustering a network can play a crucial role
(\Cref{sec:syntheticexample2}),
\green{and while we have suggested some methods for doing so
(\Cref{sec:motif_choice}),
it may be that
other techniques
can also be used to
guide the motif selection process.}
\green{We would also
be interested in seeing more
approaches to dealing with disconnected MAMs,
perhaps involving some criterion for when a
connected component is ``large enough'' to be
worth keeping,
or a deeper exploration of regularization methods
for motif-based spectral clustering.}
Extensions to our methodology could involve further
analysis of the differences between
\green{functional and structural MAMs,
between different types of Laplacians
\citep{von2007tutorial},
or between different weighted stochastic graph models,
perhaps following an exponential family method
\citep{aicher2013adapting}.
Also, connections between motif-based clustering and
hypergraph partitioning \citep{li2017inhomogeneous, li2018submodular, veldt2020minimizing}
would be worth further investigation.}

\green{Further experimental work
would also be interesting.}
We would like to explore the utility of our framework on additional real data,
and suggest that
collaboration networks such as~\cite{snap:astro},
transportation networks such as the airports network in
\cite{frey2007clustering},
and bipartite preference networks such as~\cite{icon:movie}
could be interesting.
Direct comparison of results with other state-of-the-art clustering methods from the literature  could also be insightful; suitable benchmarks for performance include
the Hermitian matrix-based clustering method in~\citep{DirectedClustImbCuts},
the Motif-based Approximate Personalized PageRank (MAPPR) in~\citep{yin2017local}, and
\textsc{Tectonic} from~\citep{tsourakakis2017scalable}.
Other established and fast methods from the literature on directed networks include
{\sc SaPa}~\citep{satuluri2011sym} and {\sc DI-SIM}~\citep{rohe2016co}, which respectively perform a degree-discounted symmetrization; and a combined row and column clustering into four sets using the concatenation of the left and right singular vectors.

Further potential extensions of our work pertain to the detection of overlapping clusters, allowing nodes to belong to multiple groups~\citep{li2016motifOverlapping}, to the detection of core--periphery structures in directed graphs~\citep{elliott2019coreperiphery}, as well as implications for existing methods on network comparison~\citep{wegner2017identifying}, anomaly detection in directed networks \citep{elliott2019anomaly}, and ranking from a sparse set of noisy pairwise comparisons~\citep{syncRank}.
Furthermore, incorporating our pipeline into deep learning architectures is another avenue worth exploring, building on the recent \textsc{MotifNet} architecture~\citep{monti2018motifnet}, a graph convolutional neural network for directed graphs, or  \textsc{SiGAT}~\citep{huang2019signedMotifs}, another recent graph neural network which incorporates graph motifs in the context of signed networks which contain both positive and negative links~\citep{sponge}.

\section*{Acknowledgments}
Andrew Elliott and Mihai Cucuringu acknowledge support from the
EPSRC grant EP/N510129/1 at The Alan Turing Institute and Accenture PLC.
We would like to thank
Lyuba Bozhilova
for her feedback on this work.
We would also like to thank
Microsoft for
their generous donation of Azure credits to The Alan Turing Institute which made this work possible.

\section*{Availability of data and materials}
All data sets are cited and are
publicly available online.
Source code in R and Python is available at
\href{https://github.com/wgunderwood/motifcluster}{\texttt{https://github.com/wgunderwood/motifcluster}}.

\bibliographystyle{spbasic}
\bibliography{refs}

\appendix
\newpage
\section{Motif Adjacency Matrix Formulae} \label{chap:appendix_matrices}

We give explicit matrix-based formulae for mean-weighted motif adjacency matrices
$M$ for all simple motifs $\ca{M}$
on at most three vertices,
along with the anchored motifs $\ca{M}_\mathrm{coll}$ and $\ca{M}_\mathrm{expa}$.
Entry-wise (Hadamard) products are denoted by $\circ$.
\Cref{tab:motif_adj_mat_table}
gives our dense formulation of functional MAMs.
For the dense formulation of
structural MAMs,
simply replace $J_\mathrm{n}$,
$J$, and $G$ with $J_0$, $J_\mathrm{s}$, and $G_\mathrm{s}$ respectively.
For the sparse formulation of functional MAMs,
replace
$J_\mathrm{n}$
with
${\bf 1} - I$,
and expand and simplify the expressions
as in
\Cref{sec:computational_analysis}.
For the sparse formulation of structural MAMs,
replace
$J$ and $G$ with $J_\mathrm{s}$ and $G_\mathrm{s}$ respectively.
Also replace $J_0$ with ${\bf 1} - (I + \tilde{J})$,  where
$\tilde{J} = J_\mathrm{s} + J_\mathrm{s}^\mathsf{T} + J_\mathrm{d}$,
and expand and simplify the expressions.

\wf{\renewcommand{\arraystretch}{2.5}
\setlength\tabcolsep{1.8pt}
\vspace*{5mm}
\begin{table}[ht]

	\centering
	\fontsize{07}{07}\selectfont
	\begin{tabular}{ |c|c|c|c| }

		\hline

		Motif & $C$ & $C'$ & $M^\mathrm{func}$ \\

		\hline

		$\ca{M}_\mathrm{s}$ & & & $G + G^\mathsf{T}$ \\

		\hline

		$\ca{M}_\mathrm{d}$ & & & $\frac{1}{2} G_\mathrm{d}$ \\

		\hline

		$\ca{M}_1$ & $J^\mathsf{T} \circ (J G) + J^\mathsf{T} \circ (G J) + G^\mathsf{T} \circ (J J)$ & & $\frac{1}{3} \big(C + C^\mathsf{T}\big)$ \\

		\hline

		$\ca{M}_2$ & \rule{0pt}{2.7em}$\displaystyle
		\begin{aligned}
       					& J^\mathsf{T} \circ (J_\mathrm{d} G)
       					+ J^\mathsf{T} \circ (G_\mathrm{d} J)
       					+ G^\mathsf{T} \circ (J_\mathrm{d} J)
       					\\ &
       					+ J^\mathsf{T} \circ (J G_\mathrm{d})
       					+ J^\mathsf{T} \circ (G J_\mathrm{d})
       					+ G^\mathsf{T} \circ (J J_\mathrm{d})
       					\\ &
       					+ J_\mathrm{d} \circ (J G)
       					+ J_\mathrm{d} \circ (G J)
       					+ G_\mathrm{d} \circ (J J)
    				\end{aligned}$\rule[-2em]{0pt}{1em} & & $\frac{1}{4} \big(C + C^\mathsf{T}\big)$ \\

    	\hline

		$\ca{M}_3$ & \rule{0pt}{2.7em}$\displaystyle\begin{aligned}
						& J \circ (J_\mathrm{d} G_\mathrm{d})
						+ J \circ (G_\mathrm{d} J_\mathrm{d})
						+ G \circ (J_\mathrm{d} J_\mathrm{d})
       					\\ &
						+ J_\mathrm{d} \circ (J_\mathrm{d} G)
						+ J_\mathrm{d} \circ (G_\mathrm{d} J)
						+ G_\mathrm{d} \circ (J_\mathrm{d} J)
						\\ &
						+ J_\mathrm{d} \circ (J G_\mathrm{d})
						+ J_\mathrm{d} \circ (G J_\mathrm{d})
						+ G_\mathrm{d} \circ (J J_\mathrm{d})
					\end{aligned}$\rule[-2em]{0pt}{1em} & & $\frac{1}{5} \big(C + C^\mathsf{T}\big)$ \\

		\hline

		$\ca{M}_4$ & $ J_\mathrm{d} \circ (J_\mathrm{d} G_\mathrm{d}) + J_\mathrm{d} \circ (G_\mathrm{d} J_\mathrm{d}) + G_\mathrm{d} \circ (J_\mathrm{d} J_\mathrm{d}) $ & & $ \frac{1}{6} C$ \\

		\hline

		$\ca{M}_5$ & \rule{0pt}{2.7em}$\displaystyle\begin{aligned}
						& J \circ (J G)
						+ J \circ (G J)
						+ G \circ (J J)
						\\ &
						+ J \circ (J G^\mathsf{T})
						+ J \circ (G J^\mathsf{T})
						+ G \circ (J J^\mathsf{T})
						\\ &
						+ J \circ (J^\mathsf{T} G)
						+ J \circ (G^\mathsf{T} J)
						+ G \circ (J^\mathsf{T} J)
					\end{aligned}$\rule[-2em]{0pt}{1em} & & $\frac{1}{3} \big(C + C^\mathsf{T}\big)$ \\

		\hline

		$\ca{M}_6$ & $J \circ (J G_\mathrm{d}) + J \circ (G J_\mathrm{d}) + G \circ (J J_\mathrm{d}) + J_\mathrm{d} \circ (J^\mathsf{T} G)$ & $G_\mathrm{d} \circ (J^\mathsf{T} J)$ & $\frac{1}{4} \big(C + C^\mathsf{T} + C' \big)$ \\

		\hline

		$\ca{M}_7$ & $J \circ (J_\mathrm{d} G) + J \circ (G_\mathrm{d} J) + G \circ (J_\mathrm{d} J)$ &
		\rule{0pt}{2.7em}$\displaystyle\begin{aligned}
		J_\mathrm{d} &\circ (J G^\mathsf{T}) + J_\mathrm{d} \circ (G J^\mathsf{T})
		\\
		& + G_\mathrm{d} \circ (J J^\mathsf{T})
		\end{aligned}$\rule[-2em]{0pt}{1em} &
		$ \frac{1}{4} \big(C + C^\mathsf{T} + C' \big)$ \\

		\hline

		$\ca{M}_8$ & $J \circ (G J_\mathrm{n}) + G \circ (J J_\mathrm{n})$ & $J_\mathrm{n} \circ (J^\mathsf{T} G) + J_\mathrm{n} \circ (G^\mathsf{T} J)$ & $\frac{1}{2} \big(C + C^\mathsf{T} + C' \big)$ \\

		\hline

		$\ca{M}_9$ & \rule{0pt}{1.9em}$\displaystyle\begin{aligned}
						& J \circ (J_\mathrm{n} G^\mathsf{T}) + G \circ (J_\mathrm{n} J^\mathsf{T}) + J_\mathrm{n} \circ (J G) \\
						& + J_\mathrm{n} \circ (G J) + J \circ (G^\mathsf{T} J_\mathrm{n}) + G \circ (J^\mathsf{T} J_\mathrm{n})
					\end{aligned}$\rule[-1.3em]{0pt}{1em} & & $\frac{1}{2} \big(C + C^\mathsf{T}\big)$ \\

		\hline

		$\ca{M}_{10}$ & $J \circ (J_\mathrm{n} G) + G \circ (J_\mathrm{n} J)$ & $J_\mathrm{n} \circ (J G^\mathsf{T}) + J_\mathrm{n} \circ (G J^\mathsf{T})$ & $\frac{1}{2} \big(C + C^\mathsf{T} + C' \big)$ \\

		\hline

		$\ca{M}_{11}$ & \rule{0pt}{1.9em}$\displaystyle\begin{aligned}
						& J_\mathrm{d} \circ (G J_\mathrm{n}) + G_\mathrm{d} \circ (J J_\mathrm{n}) + J_\mathrm{n} \circ (J_\mathrm{d} G) \\
						&  + J_\mathrm{n} \circ (G_\mathrm{d} J) + J \circ (G_\mathrm{d} J_\mathrm{n}) + G \circ (J_\mathrm{d} J_\mathrm{n})
					\end{aligned}$\rule[-1.3em]{0pt}{1em} & & $\frac{1}{3} \big(C + C^\mathsf{T}\big)$ \\

		\hline

	                  $\ca{M}_{12}$ & \rule{0pt}{1.9em}$\displaystyle\begin{aligned}
                                                  & J_\mathrm{d} \circ (J_\mathrm{n} G) + G_\mathrm{d} \circ (J_\mathrm{n} J) + J_\mathrm{n} \circ (J G_\mathrm{d}) \\
                                                  & + J_\mathrm{n} \circ (G J_\mathrm{d}) + J \circ (J_\mathrm{n} G_\mathrm{d}) + G \circ (J_\mathrm{n} J_\mathrm{d})
                                          \end{aligned}$\rule[-1.3em]{0pt}{1em} & & $ \frac{1}{3} \big(C + C^\mathsf{T}\big)$ \\

		\hline

		$\ca{M}_{13}$ & $J_\mathrm{d} \circ (G_\mathrm{d} J_\mathrm{n}) + G_\mathrm{d} \circ (J_\mathrm{d} J_\mathrm{n}) + J_\mathrm{n} \circ (J_\mathrm{d} G_\mathrm{d})$ & & $\frac{1}{4} \big(C + C^\mathsf{T} \big)$ \\

		\hline

		$\ca{M}_\mathrm{coll}$ & $J_\mathrm{n} \circ (J G^\mathsf{T})$ & & $\frac{1}{2} \big( C + C^\mathsf{T} \big)$ \\

		\hline

		$\ca{M}_\mathrm{expa}$ & $J_\mathrm{n} \circ (J^\mathsf{T} G)$ & & $\frac{1}{2} \big( C + C^\mathsf{T} \big)$ \\

		\hline

	\end{tabular}
  \vspace{2mm}
	\caption{\normalsize Dense formulation of mean-weighted functional motif adjacency matrices.}
	\label{tab:motif_adj_mat_table}
	\vspace{-0.5cm}
\end{table}
}{}
 \clearpage
\section{Proofs}\label{chap:appendix_proofs}

\begin{prf}[\Cref{prop:motif_adj_matrix_formula}, MAM formula] \label{proof:motif_adj_matrix_formula}

Consider
\Cref{eq:mamformulafunc}.
We sum over functional instances $\ca{M} \cong \ca{H} \leq \ca{G}$ such that $\{i,j\} \in \ca{A(H)}$.
This is equivalent to summing over $\{k_2, \ldots, k_{m-1}\} \subseteq \ca{V}$ and $\sigma \in S_\ca{M,A}^\sim$, such that $k_u$ are all distinct and

\[ (u,v) \in \ca{E_M} \implies (k_{\sigma u}, k_{\sigma v}) \in \ca{E}\,. \qquad (\dagger) \]

\noindent This is because the vertex set $\{k_2, \ldots, k_{m-1}\} \subseteq \ca{V}$ indicates which vertices are present in the instance $\ca{H}$, and $\sigma$ describes the mapping from $\ca{V_M}$ onto those vertices: $u \mapsto k_{\sigma u}$. We take $\sigma \in S_\ca{M,A}^\sim$ to ensure that $\{i,j\} \in \ca{A(H)}$ (since $i=k_1, \ j=k_m$), and that instances are counted exactly once.
The condition $(\dagger)$ is to check that $\ca{H}$ is a functional instance of $\ca{M}$ in $\ca{G}$. Hence

\begin{align*}
	M^\mathrm{func}_{ij} &= \frac{1}{|\ca{E_M}|} \sum_{\ca{M} \cong \ca{H} \leq \ca{G}} \bb{I} \big\{ \{i,j\} \in \ca{A}(\ca{H}) \big\} \sum_{e \in \ca{E_H}} W(e) \\
&=  \frac{1}{|\ca{E_M}|} \sum_{\{ k_2, \ldots, k_{m-1} \}} \sum_{\sigma \in S_\ca{M,A}^\sim} \bb{I} \big\{ k_u \textrm{ all distinct}, \, (\dagger) \big\} \sum_{e \in \ca{E_H}} W(e)\,.
\end{align*}

\noindent For the first term, by conditioning on the types of edge in $\ca{E_M}$:

\begin{align*}
\bb{I} \big\{ k_u \textrm{ all distinct}, \, (\dagger) \big\}
	&= \prod_{\ca{E}_\ca{M}^0} \bb{I} \{ k_{\sigma u} \neq k_{\sigma v} \} \\
	& \qquad \times \prod_{\ca{E}_\ca{M}^\mathrm{s}} \bb{I} \{ (k_{\sigma u}, k_{\sigma v}) \in \ca{E} \} \\
	& \qquad \times \prod_{\ca{E}_\ca{M}^\mathrm{d}} \bb{I} \{(k_{\sigma u}, k_{\sigma v}) \in \ca{E} \textrm{ and } (k_{\sigma v}, k_{\sigma u}) \in \ca{E}\} \\
&= \prod_{\ca{E}_\ca{M}^0} (J_\mathrm{n})_{k_{\sigma u},k_{\sigma v}}
	\prod_{\ca{E}_\ca{M}^\mathrm{s}} J_{k_{\sigma u},k_{\sigma v}}
	\prod_{\ca{E}_\ca{M}^\mathrm{d}} (J_\mathrm{d})_{k_{\sigma u},k_{\sigma v}} \\
&= J^\mathrm{func}_{\mathbf{k},\sigma}\,.
\end{align*}

\noindent Assuming $\big\{ k_u \textrm{ all distinct}, \, (\dagger) \big\}$, the second term is

\begin{align*}
\sum_{e \in \ca{E_H}} W(e)
	&= \sum_{\ca{E}_\ca{M}^\mathrm{s}} W((k_{\sigma u},k_{\sigma v}))
	+ \sum_{\ca{E}_\ca{M}^\mathrm{d}} \big( W((k_{\sigma u},k_{\sigma v})) + W((k_{\sigma v},k_{\sigma u})) \big) \\
&= \sum_{\ca{E}_\ca{M}^\mathrm{s}} G_{k_{\sigma u},k_{\sigma v}}
	+ \sum_{\ca{E}_\ca{M}^\mathrm{d}} (G_\mathrm{d})_{k_{\sigma u},k_{\sigma v}} \\
&= G^\mathrm{func}_{\mathbf{k},\sigma}
\end{align*}

\noindent as required.
For
\Cref{eq:mamformulastruc},
we simply change $(\dagger)$ to $(\ddagger)$ to check that an instance is a \emph{structural} instance

\[ (u,v) \in \ca{E_M} \iff (k_{\sigma u}, k_{\sigma v}) \in \ca{E} \qquad (\ddagger) \]

\pagebreak

\noindent Now for the first term, the following holds true

\begin{align*}
\bb{I} \big\{ k_u \textrm{ all distinct}, \, (\ddagger) \big\}
	&= \prod_{\ca{E}_\ca{M}^0} \bb{I} \{(k_{\sigma u}, k_{\sigma v}) \notin \ca{E} \textrm{ and } (k_{\sigma v}, k_{\sigma u}) \notin \ca{E}\} \\
	& \qquad \times \prod_{\ca{E}_\ca{M}^\mathrm{s}} \bb{I} \{(k_{\sigma u}, k_{\sigma v}) \in \ca{E} \textrm{ and } (k_{\sigma v}, k_{\sigma u}) \notin \ca{E}\} \\
	& \qquad \times \prod_{\ca{E}_\ca{M}^\mathrm{d}} \bb{I} \{(k_{\sigma u}, k_{\sigma v}) \in \ca{E} \textrm{ and } (k_{\sigma v}, k_{\sigma u}) \in \ca{E}\} \\
&= \prod_{\ca{E}_\ca{M}^0} (J_\mathrm{0})_{k_{\sigma u},k_{\sigma v}}
	\prod_{\ca{E}_\ca{M}^\mathrm{s}} (J_\mathrm{s})_{k_{\sigma u},k_{\sigma v}}
	\prod_{\ca{E}_\ca{M}^\mathrm{d}} (J_\mathrm{d})_{k_{\sigma u},k_{\sigma v}} \\
&= J^\mathrm{struc}_{\mathbf{k},\sigma}\,.
\end{align*}

\noindent Assuming $\big\{ k_u \textrm{ all distinct}, \, (\ddagger) \big\}$, the second term is given by

\begin{align*}
\sum_{e \in \ca{E_H}} W(e)
	&= \sum_{\ca{E}_\ca{M}^\mathrm{s}} W((k_{\sigma u},k_{\sigma v}))
	+ \sum_{\ca{E}_\ca{M}^\mathrm{d}} \big( W((k_{\sigma u},k_{\sigma v})) + W((k_{\sigma v},k_{\sigma u})) \big) \\
&= \sum_{\ca{E}_\ca{M}^\mathrm{s}} (G_\mathrm{s})_{k_{\sigma u},k_{\sigma v}}
	+ \sum_{\ca{E}_\ca{M}^\mathrm{d}} (G_\mathrm{d})_{k_{\sigma u},k_{\sigma v}} \\
&= G^\mathrm{struc}_{\mathbf{k},\sigma}\,.
\end{align*}

\hfill $\square$
\end{prf}

\begin{prf}[\Cref{prop:coll_expa_formulae}, Colliders and expanders in bipartite graphs] \label{proof:coll_expa_formulae}

Consider
\Cref{eq:mamcoll}
and the collider motif $\ca{M}_\mathrm{coll}$. Since $\ca{G}$ is bipartite, $M_\mathrm{coll}^\mathrm{func} = M_\mathrm{coll}^\mathrm{struc} = \vcentcolon M_\mathrm{coll}$, and by \Cref{tab:motif_adj_mat_table},  $M_\mathrm{coll} = \frac{1}{2} J_\mathrm{n} \circ (J G^\mathsf{T} + G J^\mathsf{T})$. Hence

\vspace{2mm}
\begin{align*}
	(M_\mathrm{coll})_{ij} &= \frac{1}{2} (J_\mathrm{n})_{ij} \ (J G^\mathsf{T} + G J^\mathsf{T})_{ij} \\
	&= \bb{I}\{i \neq j\} \sum_{k \in \ca{V}} \ \frac{1}{2} \Big(J_{ik} G_{jk} + G_{ik} J_{jk} \Big) \\
	&= \bb{I}\{i \neq j\} \sum_{k \in \ca{V}} \ \frac{1}{2} \,\bb{I} \, \Big\{ (i,k),(j,k) \in \ca{E} \Big\} \Big[W((i,k)) + W((j,k))\Big] \\
	&= \bb{I} \{i \neq j\} \hspace*{-0.4cm} \sum_{\substack{k \in \ca{D} \\ (i,k), (j,k) \in \ca{E}}} \hspace*{-0.2cm} \frac{1}{2} \Big[ W((i,k)) + W((j,k)) \Big]\,.
\end{align*}

\noindent Similarly for the expander motif $M_\mathrm{expa} = \frac{1}{2} J_\mathrm{n} \circ (J^\mathsf{T} G + G^\mathsf{T} J)$, which yields Equation (4):

\begin{align*}
	(M_\mathrm{expa})_{ij} &= \frac{1}{2} (J_\mathrm{n})_{ij} \ (J^\mathsf{T} G + G^\mathsf{T} J)_{ij} \\
	&= \bb{I} \{i \neq j\} \hspace*{-0.4cm} \sum_{\substack{k \in \ca{S} \\ (k,i), (k,j) \in \ca{E}}} \hspace*{-0.2cm} \frac{1}{2} \Big[ W((k,i)) + W((k,j)) \Big]\,.
\end{align*}

\hfill $\square$
\end{prf}

\pagebreak

\begin{prf}[\Cref{prop:motif_adj_matrix_computation}, Complexity of MAM formula] \label{proof:motif_adj_matrix_computation}
Suppose ${m \leq 3}$ and consider $M^\mathrm{func}$. The adjacency and indicator matrices of $\ca{G}$ are given by

\vspace{2mm}
\begin{equation*}
	\begin{aligned}[c]
		&(1) \quad J = \bb{I} \{ G>0 \}\,, \\
		&(2) \quad J_0 = \bb{I} \{ G + G^\mathsf{T} = 0 \} \circ J_\mathrm{n}\,, \\
		&(3) \quad J_\mathrm{s} = J - J_\mathrm{d}\,, \\
		&(4) \quad G_\mathrm{d} = (G + G^\mathsf{T}) \circ J_\mathrm{d} \,,
	\end{aligned}
	\hspace*{2cm}
	\begin{aligned}[c]
		&(5) \quad J_\mathrm{n} = \bb{I} \{I_{n \times n} = 0 \}\,, \\
		&(6) \quad J_\mathrm{d} = J \circ J^\mathsf{T}\,, \\
		&(7) \quad G_\mathrm{s} = G \circ J_\mathrm{s}\,, \\
		&
	\end{aligned}
\end{equation*}
\vspace{1mm}

\noindent and computed using four additions and four element-wise multiplications. $J^\mathrm{func}_{\mathbf{k},\sigma}$ is a product of at most three factors, and $G^\mathrm{func}_{\mathbf{k},\sigma}$ contains at most three summands, thus

\vspace{1mm}
\[ \sum_{k_2 \in \ca{V}} J^\mathrm{func}_{\mathbf{k},\sigma} \ G^\mathrm{func}_{\mathbf{k},\sigma} \]
\vspace{1mm}

\noindent is expressible as a sum of at most three matrices, each of which is constructed with at most one matrix multiplication (where $\{k_{\sigma r},k_{\sigma s}\} \neq \{i,j\}$) and one entry-wise multiplication (where $\{k_{\sigma r},k_{\sigma s}\} = \{i,j\}$). This is repeated for each $\sigma \in S_\ca{M,A}^\sim$ (at most six times) and the results are summed. Calculations are identical for  $M^\mathrm{struc}$.

\hfill $\square$
\end{prf}

 \section{Spectral Clustering} \label{chap:spectral}

We provide a summary of traditional random-walk spectral clustering and show how it applies to motif-based clustering.

\subsection{Overview of spectral clustering} \label{sec:spectral_overview}
For an undirected graph, the objects to be clustered are the vertices of the graph, and a similarity matrix is provided by the graph's adjacency matrix $G$.
To cluster directed graphs, the adjacency matrix must first be symmetrized,
perhaps
by considering
$G + G^\mathsf{T}$
\citep{Meila2007ClusteringBW}.
This symmetrization ignores information about edge direction and higher-order structures, and can lead to poor performance
\citep{benson2016higher}.
Spectral clustering consists of two steps. Firstly, eigendecomposition of a Laplacian matrix embeds the vertices into $\bb{R}^{l}$. The $k$ clusters are then extracted from this space.

\subsection{Graph Laplacians} \label{sec:appendixlaplacians}

The Laplacians of an undirected graph are a family of matrices which play a central role in spectral clustering.
While many different graph Laplacians are available,
we focus
on just the \emph{random-walk Laplacian},
for reasons concerning objective functions,
consistency and computation \citep{luxburg2004convergence, von2007tutorial}.
The random-walk Laplacian of an undirected graph $\ca{G}$ with
(symmetric) adjacency matrix $G$ is given by

\[ L_\mathrm{rw} \vcentcolon= I - D^{-1} G, \]

\noindent where $I$ is the identity and $D_{ii} \vcentcolon= \sum_j G_{ij}$ is the diagonal matrix of weighted degrees.

\subsection{Graph cuts} \label{sec:spectral_graph_cut}
Graph cuts provide objective functions which we seek to minimize while clustering the vertices of a graph.

\begin{definition}
Let $\ca{G}$ be a graph. Let $ \ca{P}_1, \ldots, \ca{P}_k $ be a partition of $\ca{V}$. Then the \emph{normalized cut} \citep{shi2000normalized} of $\ca{G}$ with respect to $ \ca{P}_1, \ldots, \ca{P}_k $ is

$$ \mathrm{Ncut}_\ca{G}(\ca{P}_1, \ldots, \ca{P}_k) \vcentcolon= \frac{1}{2} \sum_{i=1}^k \frac{ \mathrm{cut}(\ca{P}_i,\bar{\ca{P}_i}) }{ \mathrm{vol}(\ca{P}_i) }, $$

\vspace{1mm}
\noindent where $ \mathrm{cut}(\ca{P}_i,\bar{\ca{P}_i}) \vcentcolon= \sum_{u \in \ca{P}_i, \, v \in \ca{V} \setminus \ca{P}_i} G_{uv}$ and $\mathrm{vol}(\ca{P}_i) \vcentcolon= \sum_{u \in \ca{P}_i} D_{uu}$.
\end{definition}
\noindent Note that more desirable partitions have a lower Ncut value;
the numerators penalize partitions which cut a large number of heavily weighted edges,
and the denominators penalize partitions which have highly imbalanced cluster sizes.
It can be shown \citep{von2007tutorial} that minimizing Ncut over partitions $ \ca{P}_1, \ldots, \ca{P}_k $ is equivalent to finding the cluster indicator matrix $H \in \bb{R}^{n \times k}$ minimizing
$ \mathrm{Tr} \big( H^\mathsf{T} (D-G) H \big) $
subject to
$ H_{ij} = \mathrm{vol}(\ca{P}_j)^{-\frac{1}{2}} \ \bb{I} \{ v_i \in \ca{P}_j \}\, (\dagger) $, and  $ H^\mathsf{T} D H = I\,$.
Solving this problem is in general \textsf{NP}-hard \citep{wagner1993between}.
However, by dropping the constraint~$(\dagger)$ and applying the Rayleigh Principle \citep{lutkepohl1996handbook}, we find that the solution to this relaxed problem is that $H$ contains the first $k$ eigenvectors of $L_\mathrm{rw}$ as columns \citep{von2007tutorial}.
In practice, to find $k$ clusters it is often sufficient to use only the first $l < k$ eigenvectors of $L_\mathrm{rw}$.

\subsection{Cut imbalance ratio}
\label{sec:cut_imbalance_ratio}

\green{Given a partition of a directed graph,
cut imbalance ratio gives a notion of
flow imbalance between each pair of clusters.}

\begin{definition}

\green{Let $\ca{G}$ be a graph. Let $ \ca{P}_1, \ldots, \ca{P}_k $ be a partition of $\ca{V}$.
Then the \emph{cut imbalance ratio} \citep{DirectedClustImbCuts}
of $\ca{G}$ from $\ca{P}_i$ to $\ca{P}_j$
is}

$$ \mathrm{CIR}_\ca{G}(\ca{P}_i, \ca{P}_j) \vcentcolon= \frac{1}{2} \frac{ \mathrm{cut}(\ca{P}_i,\ca{P}_j) - \mathrm{cut}(\ca{P}_j, \ca{P}_i)}{ \mathrm{cut}(\ca{P}_i,\ca{P}_j) + \mathrm{cut}(\ca{P}_j, \ca{P}_i)},$$

\vspace{1mm}
\noindent \green{where $ \mathrm{cut}(\ca{P}_i,\ca{P}_j) \vcentcolon= \sum_{u \in \ca{P}_i, \, v \in \ca{P}_j} G_{uv}$.}
\end{definition}
\noindent \green{The cut imbalance ratio takes values in
$[-\frac{1}{2}, \frac{1}{2}]$,
with positive values indicating more flow from
$\ca{P}_i$ to $\ca{P}_j$
and negative values indicating more flow from
$\ca{P}_j$ to $\ca{P}_i$.
The \emph{cut imbalance ratio matrix}
is the antisymmetric $k \times k$ matrix with
$(i,j)$th entry equal to
$\mathrm{CIR}_\ca{G}(\ca{P}_i, \ca{P}_j)$.}

\subsection{Cluster extraction} \label{sec:spectral_cluster_extraction}
Once Laplacian eigendecomposition has been used to embed the data into $\bb{R}^l$, the clusters may be extracted using a variety of methods.
We propose $k$-means++ \citep{arthur2007k}, a popular clustering algorithm for data in $\bb{R}^l$,
as an appropriate technique.
It aims to minimize the within-cluster sum of squares, based on the standard Euclidean metric on $\bb{R}^l$.
This makes it a reasonable candidate for clustering spectral data, since the Euclidean metric corresponds to notions of ``diffusion distance'' in the original graph \citep{nadler2006diffusion}.

\subsection{Random-walk spectral clustering}

Algorithm~\ref{alg:rwspectclust} gives random-walk spectral clustering \citep{von2007tutorial},
which takes a symmetric connected adjacency matrix as input.
We drop the first column of $H$ (the first eigenvector of $L_\mathrm{rw}$)
since although it should be constant and uninformative,
numerical imprecision may give unwanted artifacts.
It is worth noting that although the relaxation used in \Cref{sec:spectral_graph_cut} is reasonable and often leads to good approximate solutions of the Ncut problem,
there are cases where it performs poorly~\citep{guattery1998quality}.
The Cheeger inequality~\citep{chung2005laplacians} gives a bound on the error introduced by this relaxation.

\subsection{Motif-based random-walk spectral clustering} \label{sec:motifrwspectclust}

Algorithm~\ref{alg:motifrwspectclust} gives motif-based random-walk spectral clustering.
The algorithm forms a motif adjacency matrix,
restricts it to its largest connected component,
and applies random-walk spectral clustering
(\Cref{alg:rwspectclust})
to produce clusters.
The most computationally expensive part of Algorithm~\ref{alg:motifrwspectclust}
is the calculation of the MAM using a formula from Table~\ref{tab:motif_adj_mat_table},
as noted by
\cite{benson2016higher}
for their unweighted MAMs.
The complexity of this is analyzed in
\Cref{sec:computational_analysis}.

\subsection{Bipartite spectral clustering}

Algorithm~\ref{alg:bipartite_clustering} gives our procedure for clustering a bipartite graph.
The algorithm uses the collider and expander motifs to create similarity matrices
for the source and destination vertices respectively
(as in \Cref{sec:bipartite}),
and then applies random-walk spectral clustering (Algorithm~\ref{alg:rwspectclust}) to produce the partitions.

\vspace{3mm}
\begin{algorithm}[H]
  \caption{Random-walk spectral clustering}
  \label{alg:rwspectclust}

	\SetKwFunction{Main}{RWSpectClust}				\newcommand{\MainArgs}{$G,k,l$}		

 	\BlankLine
	\Input{Symmetric adjacency matrix $G$, number of clusters $k$, dimension $l$}
	\Output{Partition $\ca{P}_1, \ldots, \ca{P}_k$}
	\BlankLine

  \vspace{-0.1cm}
	\Function{\Main{\MainArgs}}{
		Construct the weighted degree matrix $D_{ii} \leftarrow \sum_j G_{ij}$ \\
		Construct the random-walk Laplacian matrix $L_\mathrm{rw} \leftarrow I-D^{-1}G$ \\
		Let $H$ have the first $l$ eigenvectors of $L_\mathrm{rw}$ as columns \\
		Drop the first column of $H$ \\
		Run $k$-means++ on the rows of $H$ with $k$ clusters to produce $\ca{P}_1, \ldots, \ca{P}_k$ \\
    \Return	$\ca{P}_1, \ldots, \ca{P}_k$
	}

\end{algorithm}

\begin{algorithm}[H]
  \caption{Motif-based random-walk spectral clustering}
  \label{alg:motifrwspectclust}

	\SetKwFunction{Main}{MotifRWSpectClust}				\newcommand{\MainArgs}{$\ca{G},\mathcal{M},k,l$}		

 	\BlankLine
	\Input{Graph $\ca{G}$, motif $\ca{M}$, number of clusters $k$, dimension $l$}
	\Output{Partition $\ca{P}_1, \ldots, \ca{P}_k$}
	\BlankLine

  \vspace{-0.1cm}
	\Function{\Main{\MainArgs}}{
		Construct the motif adjacency matrix $M$ of the graph $\ca{G}$ with  motif $\ca{M}$ \\
		Let $\tilde{M}$ be $M$ restricted to its largest connected component, $C$ \\
		$\ca{P}_1, \ldots, \ca{P}_k \leftarrow$ \texttt{RWSpectClust($\tilde{M},k,l$)} \\
    \Return	$\ca{P}_1, \ldots, \ca{P}_k$
	}

\end{algorithm}

\vspace{3mm}
\begin{algorithm}[H]

	\SetKwFunction{Main}{BipartiteRWSpectClust}				\newcommand{\MainArgs}{$\ca{G},k_\ca{S},k_\ca{D},l_\ca{S},l_\ca{D}$}		

 	\BlankLine
	\Input{Bipartite graph $\ca{G}$, source clusters $k_\ca{S}$, destination clusters $k_\ca{D}$, source dimension $l_\ca{S}$, destination dimension $l_\ca{D}$}
	\Output{Source partition $\ca{S}_1, \ldots, \ca{S}_{k_\ca{S}}$, destination partition $\ca{D}_1, \ldots, \ca{D}_{k_\ca{D}}$}
	\BlankLine
\vspace{-0.1cm}
	\Function{\Main{\MainArgs}}{
		Construct the collider motif adjacency matrix $M_\mathrm{coll}$ of the graph $\ca{G}$ \\
		Construct the expander motif adjacency matrix $M_\mathrm{expa}$ of the graph $\ca{G}$ \\
		$M_\mathrm{coll} \leftarrow M_\mathrm{coll}[\ca{S,S}]$ \Comm*{restrict rows and columns of $M_\mathrm{coll}$ to $\ca{S}$ \hspace*{0.07cm}}
		$M_\mathrm{expa} \leftarrow M_\mathrm{expa}[\ca{D,D}]$ \Comm*{restrict rows and columns of $M_\mathrm{expa}$ to $\ca{D}$}
		$\ca{S}_1, \ldots, \ca{S}_{k_\ca{S}} \leftarrow$ \texttt{RWSpectClust($M_\mathrm{coll},k_\ca{S},l_\ca{S}$)} \\
		$\ca{D}_1, \ldots, \ca{D}_{k_\ca{D}} \leftarrow$ \texttt{RWSpectClust($M_\mathrm{expa},k_\ca{D},l_\ca{D}$)} \\
	\Return	$\ca{S}_1, \ldots, \ca{S}_{k_\ca{S}}$ and  $\ca{D}_1, \ldots, \ca{D}_{k_\ca{D}}$
	}

	\caption{Bipartite random-walk spectral clustering}
	\label{alg:bipartite_clustering}
\end{algorithm}

\vfill
\pagebreak

\section{Additional Synthetic Results}
\label{sec:additionSyn}

In this section we present additional results
for some of our synthetic experiments.

\paragraph{Example 1}
In \Cref{fig:weightedStruc} we present the results for structural motifs,
to complement the functional motifs presented in the main paper.
We note that when we consider structural motifs,
the unweighted motifs perform better
but are still outperformed by our weighted method.

\paragraph{Example 2 -- Experiment 1}
In \Cref{fig:benchmark1FUNC}
we present the results for the functional motifs for Example 2 Experiment 1. As noted in the main paper, our approaches outperform the baseline, but we do not observe the same bi-modal structure, with $\mathcal{M}_1$ outperforming all approaches for $q_1<0.6$, and without the recovery in performance for high values of $q_1$.

\paragraph{Example 2 -- Experiment 2}
In \Cref{fig:benchmark1FigStruc} we present the structural version of Example 2 Experiment 2. We observe the same structures as we observe for the functional version, with motifs each highlighting a different but equally relevant structure. Finally the symmetrized adjacency matrix ($\mathcal{M}_s$),
for which each of the blocks are indistinguishable, does not robustly identify the higher-order structures for $k=2$, but can uncover the blocks when clustering with $k=8$ and $l=3$.

\wf{\begin{figure}[h!]\centering
    \includegraphics[width=10cm]{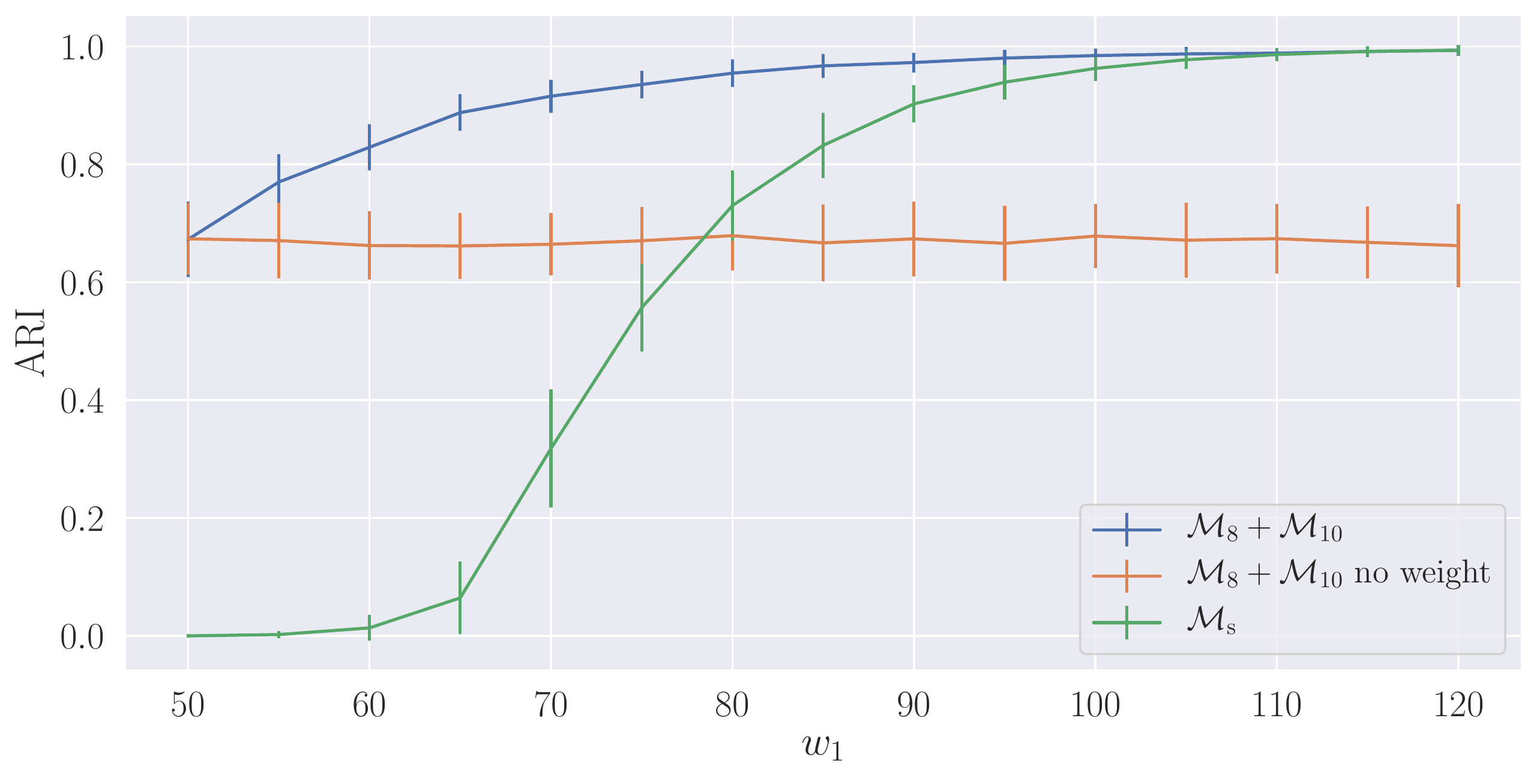}
	\caption{
	Exploring the performance of structural MAMs based on $\mathcal{M}_8$ and $\mathcal{M}_{10}$ on
    Example 1
(see \Cref{fig:weighted}).
	Each block contains $100$ nodes. We compare both to the unweighted case, and to the symmetrized case. We perform $100$ repeats, and error bars are one sample standard deviation.
	}
	\label{fig:weightedStruc}
\end{figure}
}{}

\wf{\begin{figure}[h!]\centering
    \includegraphics[width=10cm]{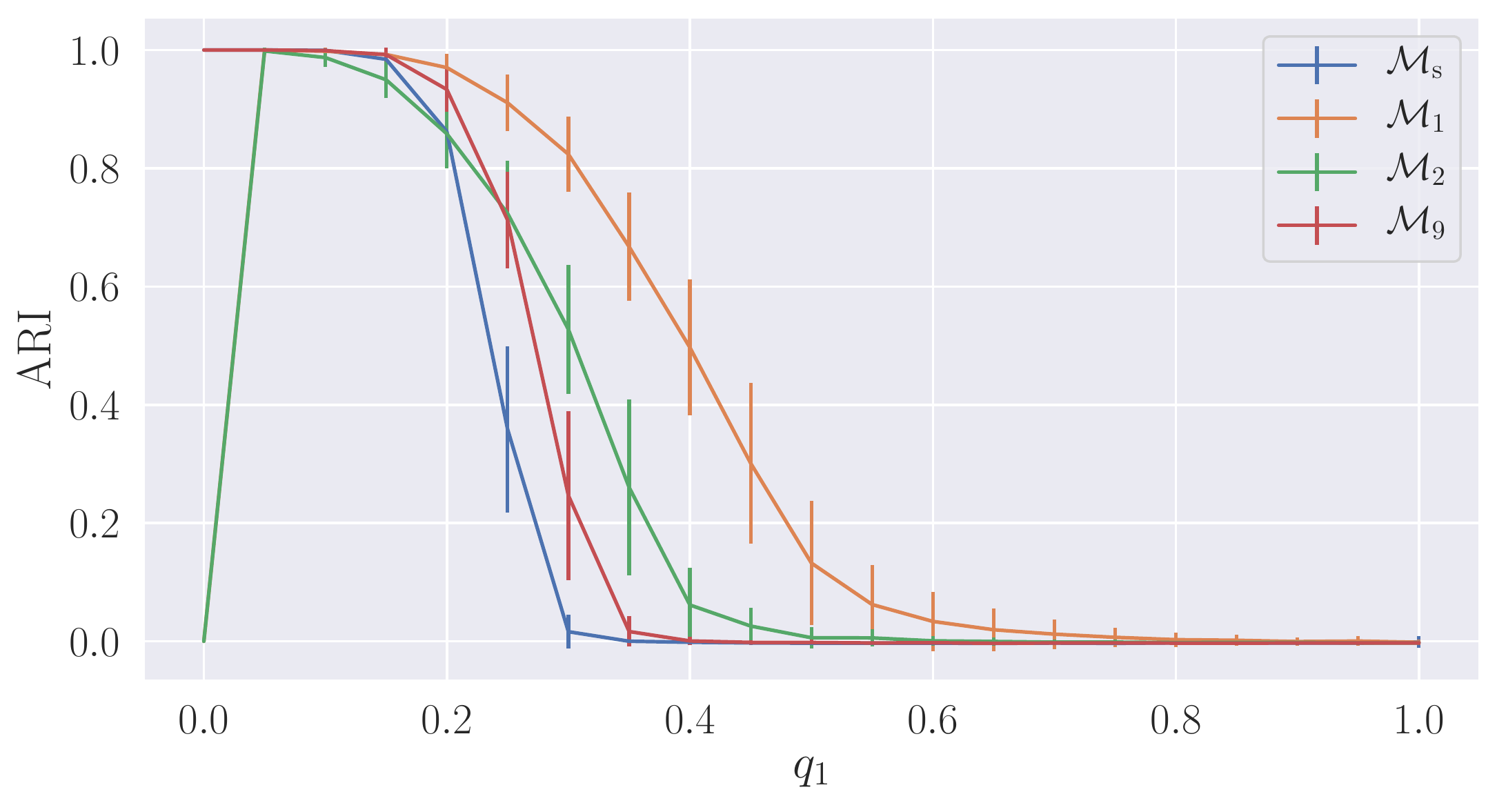}
    \caption{Performance on Example 2 Experiment 1 using functional MAMs.
    We compare with the standard symmetrization $\mathcal{M}_s$.
    We perform
    $100$
    repeats, and error bars are one sample standard deviation.
    }
  \label{fig:benchmark1FUNC}
\end{figure}
}{}

\wf{\begin{figure}[h!]\centering
  \includegraphics[width=10cm]{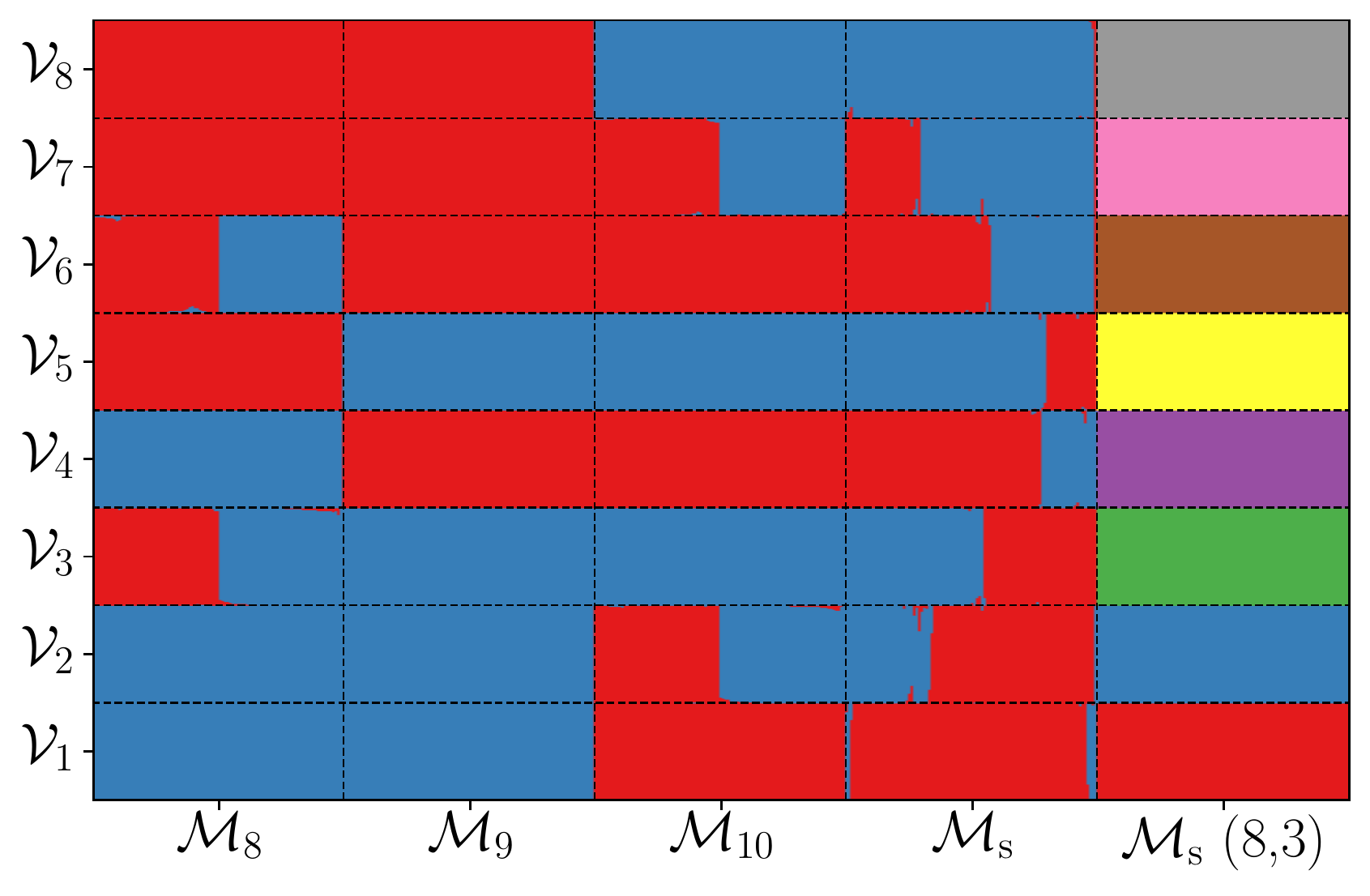}
\caption{
The detected groups uncovered  by each method using structural motifs, with $k=2$ and $l=2$, with the exception of the last column which has $k=8$, $l=3$.
We test $100$ replicates and present the results as columns in the plot.
We order and color the columns as in
\Cref{fig:benchmark1Fig}.
While the clustering assignments may contain some errors,
the results are relatively robust across replicates.
}
\label{fig:benchmark1FigStruc}
\end{figure}
}{}

\section{Additional Real World Results}
\label{sec:additionReal}

\green{In this section we present additional results
for some of our real world experiments.}

\paragraph{\textsc{US-Migration} network}
\green{\Cref{fig:appendix_migration_clusts_func}
plots clusterings obtained from the
\textsc{US-Migration} network using all 15 weighted
functional motifs.
Note how the different motifs uncover a variety of different clusterings,
and also that they all display good spatial coherence.
For completeness,
\Cref{fig:appendix_migration_clusts_struc}
shows the equivalent plots for structural motifs.
Note that motifs $\ca{M}_\mathrm{d}$
and $\ca{M}_4$ are bi-directional cliques on two and
three vertices respectively,
so their functional and structural MAMs are identical.
The spatial coherence deteriorates for certain structural motifs,
and cluster sizes are occasionally less balanced.
We believe this is to do with the high local density of the migration network,
preventing some motifs from appearing as structural instances
in certain places
(see \Cref{sec:syntheticexample2} for another
example of structural motifs exhibiting this
filtering property,
which may or may not be desirable,
depending on the application).
Note that this effect is most pronounced for
the motifs containing single edges
(i.e. excluding $\ca{M}_\mathrm{d}$, $\ca{M}_4$ and $\ca{M}_{13}$),
because structural instances of single edges require an edge in precisely
one direction,
filtering out pairs of nodes which exhibit either no mutual migration
or bi-directional mutual migration.}

\wf{\begin{figure}[t]
  \centering
  \includegraphics[width=0.95\textwidth]{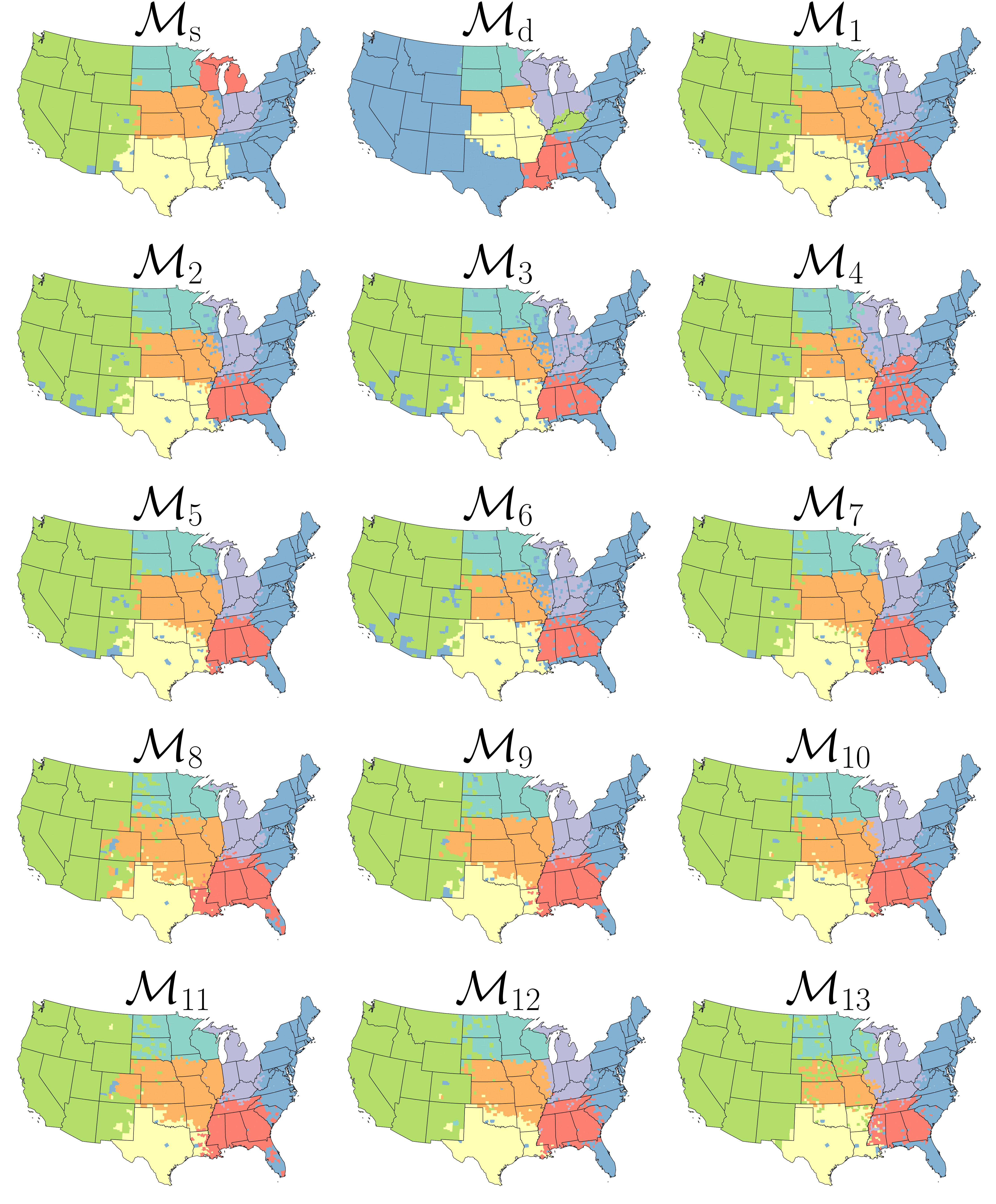}
\caption{\green{Weighted functional motif-based clusterings of the \textsc{US-Migration} network, for all motifs on at most three nodes.
  Each county (node) is colored by its cluster allocation.
  }}
\label{fig:appendix_migration_clusts_func}
\end{figure}
}{}

\wf{\begin{figure}[t]
  \centering
  \includegraphics[width=0.95\textwidth]{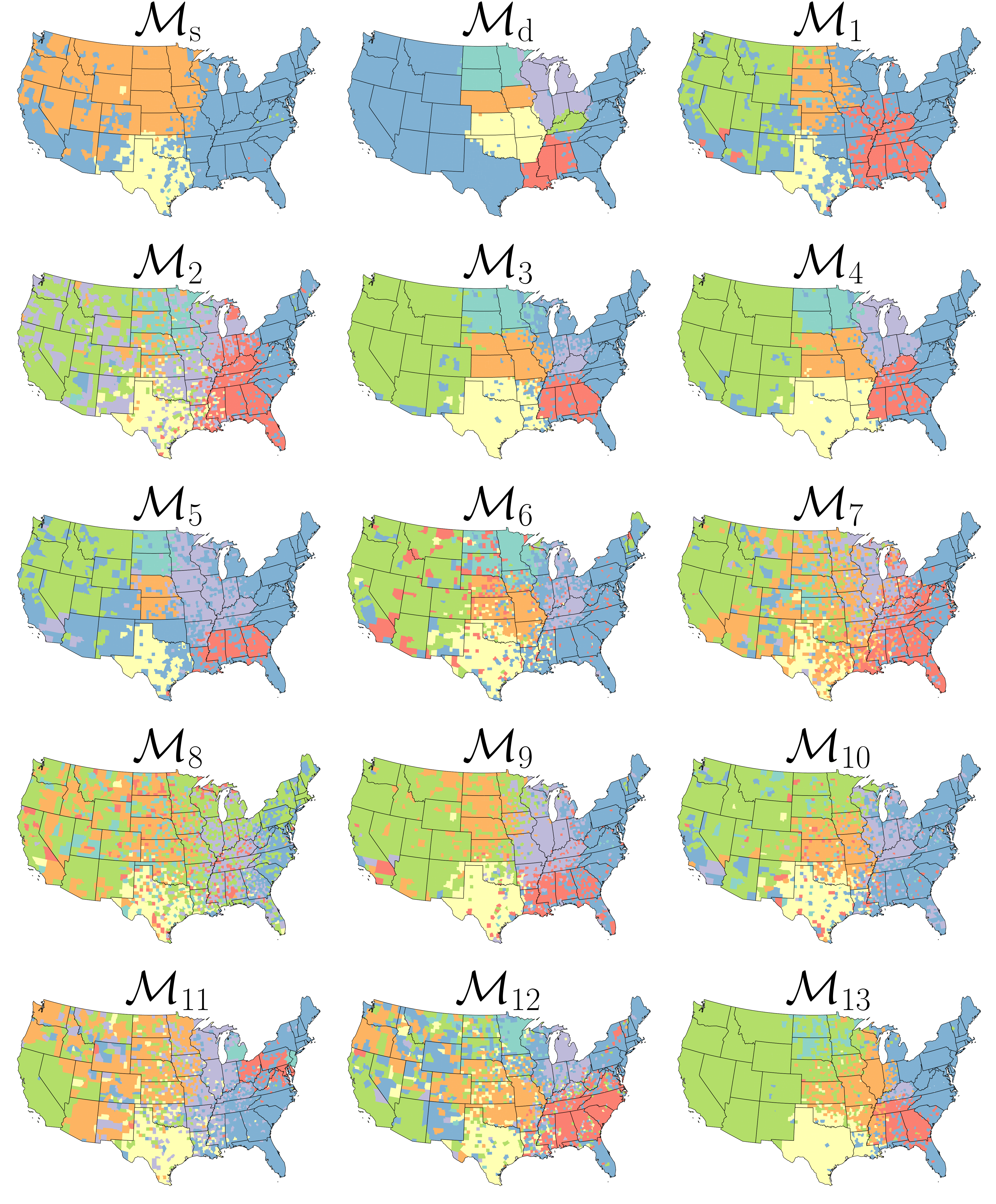}
\caption{\green{Weighted structural motif-based clusterings of the \textsc{US-Migration} network, for all motifs on at most three nodes.
  Each county (node) is colored by its cluster allocation.
  }}
\label{fig:appendix_migration_clusts_struc}
\end{figure}
}{}

\end{document}